\newtheorem{theorem}{\underline{Theorem}}%[section]
\newtheorem{proposition}{\underline{Proposition}}%[section]
\newtheorem{lemma}{\underline{Lemma}}%[section]
\newtheorem{remark}{\underline{Remark}}%[section]
\newcommand{\mv}[1]{\mbox{\boldmath{$ #1 $}}}
\newcommand{\Q}{\mathbf Q}
\newcommand{\C}{\mathcal C}
\newcommand{\pow}{\mathbf P}
\newcommand{\I}{\rm{I}}
\newcommand{\F}{\rm{F}}
\newcommand{\Aaa}{\rm{A}}
\newcommand{\X}{\mathcal{X}}
\newcommand{\Aa}{\mathcal{A}}
\newcommand{\Bb}{\mathcal{B}}
\newcommand{\T}{\mathcal{T}}
\begin{document}
\title{Capacity Characterization of UAV-Enabled Two-User Broadcast Channel}
\author{\IEEEauthorblockN{Qingqing Wu,  Jie Xu,  \emph{Member, IEEE}, and Rui Zhang, \emph{Fellow, IEEE}
\thanks{ Q. Wu and R. Zhang are with the Department of Electrical and Computer Engineering, National University of Singapore (email:\{elewuqq, elezhang\}@nus.edu.sg). J. Xu is with the School of Information Engineering, Guangdong University of Technology (email: jiexu@gdut.edu.cn). }}  }
%\author{\IEEEauthorblockN{Qingqing Wu,  \emph{Member, IEEE}
%} }
%
\maketitle
\vspace{-1.5cm}
\begin{abstract}
Unmanned aerial vehicles (UAVs) have recently gained growing popularity in wireless communications owing to their many advantages such as swift and cost-effective deployment, line-of-sight (LoS) aerial-to-ground link, and controllable mobility in three-dimensional (3D) space. Although prior works have exploited the UAV's mobility to enhance the wireless communication performance under different setups, the fundamental capacity limits of UAV-enabled/aided  multiuser communication systems have not  yet been characterized.
To fill this gap, we consider in this paper a UAV-enabled two-user broadcast channel (BC), where a UAV flying at a constant  altitude is deployed to send independent  information to two users at different fixed locations on the ground.
%where two users on the ground is served by a UAV at a fixed altitude.
We aim to characterize the capacity region of this new type of BC over a given UAV flight duration,  by jointly optimizing the UAV's trajectory and transmit power/rate  allocations over time, subject to the UAV's maximum speed and maximum transmit power constraints.
First, to draw essential  insights, we consider two special cases with asymptotically large/low UAV  flight duration/speed, respectively. For the former case, it is shown that a simple hover-fly-hover (HFH) UAV trajectory with time division multiple access (TDMA) based  orthogonal multiuser transmission is capacity-achieving;  while in the latter case,  the UAV should hover at a fixed location that is nearer to the user with larger achievable rate and in general  superposition coding (SC) based non-orthogonal transmission with interference cancellation at the receiver of the nearer user is required.
%the UAV should hover above the two users over time with time division multiple access (TDMA) transmission, and the resultant capacity region corresponds to  an equilateral triangle. By contrast, in the later case,  the UAV should hover near the user with a larger rate and employs superposition coding.
Next, we consider the general case with finite UAV speed and flight duration. We show that the optimal UAV trajectory should follow a general HFH structure, i.e., the UAV successively hovers at a pair of initial and final locations above the line segment of the two users each with a certain amount of time and flies unidirectionally between them at the maximum speed, and SC is  generally needed.
%We first prove that when the UAV speed (flight period equivalently) approaches infinity, the two-user capacity region is an equilateral triangle and surprisingly, time division multiple access (TDMA)  is capacity achieving.
%In addition, the capacity region  is shown to be non-convex  for any finite UAV speed.
%In addition, when the UAV speed equals  zero, the UAV should be placed near the user with a larger rate for any rate-pair on the capacity region boundary. While for the general case UAV speed, we reveal that the UAV either remains stationary/hovers at a fixed location or flies unidirectionally at the maximum speed between two different locations.
%%The comparison of the above capacity regions also reveals a fundamental throughput-delay tradeoff.
% Furthermore,  in high signal-to-noise (SNR) regime, it is found that the UAV only needs to hover above the user that has a larger achievable rate in a boundary point.
Furthermore, when TDMA-based transmission is considered for low-complexity implementation, we show  that the optimal UAV trajectory still follows a HFH structure, but the hovering locations can only be those above the two users. Finally, simulation results are provided to verify our analysis, which also reveal useful guidelines  to the practical design of UAV trajectory and communication jointly. % \cite{yang2017proactive,yin2017offline,zeng20173d,alzenad20173d}

%analysis, which also demonstrate the significant capacity improvement by our proposed joint UAV trajectory and communication design against other benchmark schemes.
%Motivated by studying the fundamental limits of a UAV-enabled wireless network,
%we aim to characterize the capacity region of a UAV-enabled two-user broadcast channel (BC).
%we show that the optimal trajectory also follows the successive hover-and-fly property.
\end{abstract}

\begin{IEEEkeywords}
UAV-enabled communication, broadcast channel (BC), capacity region, trajectory design, power allocation.
\end{IEEEkeywords}

%It is clear that the corresponding rate tuple(s) (R1, . . ., RK ) must be located on the outer boundary of the region (known as the Pareto boundary).
\section{Introduction}
On November 8, 2017, ``Drone Integration Pilot Program''  \cite{USprogramm_UAV} was launched under a presidential memorandum from the White House, which aimed at further exploring expanded use of  unmanned aerial vehicles (UAVs)  including beyond-visual-line-of-sight flights, night-time operations, flights over people, etc. \cite{whitehouse_UAV}.
In fact, the past several years have witnessed an unprecedented growth on the use of UAVs in a wide range of civilian and defense applications such as search and rescue, aerial  filming and inspection, cargo/packet delivery, precise agriculture, etc.  \cite{drone_num}.
In particular, there has been a fast-growing  interest in utilizing UAVs as aerial communication platforms to help enhance  the performance and/or extend the coverage of existing wireless networks on the ground \cite{zeng2016wireless,bor2016new}.
For example, UAVs such as drones, helikites, and balloons, could be deployed as aerial  base stations (BSs) and/or relays to enable/assist the terrestrial communications.
%The integration of UAVs can also be exploited to improve the communication performance of terrestrial users in 5G, where UAVs such as drones, helikites, and balloons, could be deployed as dedicated wireless communication platforms such aerial base stations (BSs) and flying relays to assist the terrestrial communications.
%xu2017uav, ,fundamental
 UAV-enabled/aided  wireless  communications possess many appealing advantages such as swift and cost-effective deployment, line-of-sight (LoS)
aerial-to-ground link, and controllable mobility in three-dimensional (3D) space,  thus highly promising for numerous use cases in wireless communications including ground BS traffic offloading, mobile relaying and edge computing, information/energy broadcasting and data collection for Internet-of-Things (IoT) devices, fast network recovery after natural disasters,  etc. \cite{3gpp_UAV,qualcom_UAVreport,zhang2017spectrum,xu2017uav2,yang2017proactive,yang2017energy,zhang2016fundamental}. For example, Facebook has even ambitiously claimed that ``Building drones is more feasible than covering the world with ground signal towers'' \cite{fb_UAV}.   By leveraging the aerial BSs along with terrestrial and satellite communications, Europe has established an industry-driven project called  ``ABSOLUTE'' with the ultimate goal of enhancing the ground network capacity to many folds, especially for public safety in emergency situations \cite{absolute}.  At present, there are two major ways to practically implement aerial BSs/relays by using tethered and untethered UAVs, respectively, which are further explained as follows.
%compared to terrestrial BSs, it is more quick to deploy.
%providing a low latency and large coverage networks for capacity enhancements and public safety during temporary events.
%is composed of multi-copter, tethering cable and flight platform

A tethered UAV literally means that the UAV is connected by a cable/wire with a ground control platform (e.g., a custom-built trailer). Although it may sound ironic for a UAV to be on a tethering cable, this practice is very common due to many advantages including  stable power supply and hence unlimited endurance, more affordable payload (e.g., more antennas),  ultra-high speed backhaul with secured data transmission (e.g.,  real-time high-definition video), robustness to wind, etc.
 %First, a tethered UAV can have a stable power supply through the cable and hence has an unlimited endurance technically.  By eliminating the typically heavy battery, it is also capable of incorporating a large variety of payloads (e.g., more antennas and hyperspectral sensors). Second, ultra-high speed and secured data  can be transmitted through the cable without any interruption, such as real-time high-definition video. Third, a tethered UAV is quite robust to wind and can avoid excessive altitudes and areas which may be prohibited by local regulations.
 All these evident benefits have triggered a great interest in testing tethered UAV BSs, such as
%One typical example is that unmanned drones or tethered balloons could be flown near railway lines to improve mobile coverage on trains.
Facebook's ``Tether-Tenna'',  AT\&T's ``flying cell-on-wings (COWs)'', and Everything-Everywhere's (UK's largest mobile network operator, EE) ``Air Masts''.
However, such a tethering feature also limits the operations of UAVs to taking off, hovering, and landing only, thus rendering the wireless networks employing tethered UAV BSs like ``hovering  cells'' over the air. As a result, the research efforts in this paradigm  mainly focus on the UAV deployment/placement optimization in a given target area to meet the ground data  traffic demand
%, which include finding the optimal hovering horizontal location and/or the altitude
 \cite{al2014optimal,lyu2016placement,bor2016efficient, mozaffari2016efficient,yang2017proactive,alzenad20173d}. In particular,  \cite{al2014optimal} provides an analytical approach to optimize the altitude of a UAV for providing the maximum coverage for ground users (GUs). Alternatively, by fixing the altitude, the horizontal positions of UAVs are optimized  in \cite{lyu2016placement} to minimize the number of required UAVs  to cover a given set of GUs. %In three-dimensional (3D) space, a drone-enabled small cell placement optimization problem is investigated in \cite{bor2016efficient} to maximize the number of GUs that can be covered.% The resulting problem is then solved via a numerical approach.
In contrast to tethered UAVs, the untethered UAVs generally rely on on-board battery and/or solar energy harvesting for power supply  \cite{wu2016overview}
and wireless links for data backhaul. Although untethered UAVs in general have smaller payload and limited endurance/backhaul rate as compared to their tethered counterparts, they have fully controllable mobility in 3D space which can be exploited for communication performance enhancement.
 First,  an untethered UAV  not only can hover above a fixed ground location like tethered UAVs,   but also can fly freely over a wide ground area to significantly extend the communication coverage.
  Furthermore, the free-flying feature enables the UAV BS to timely adjust its position according to the dynamic distributions of the GUs, and even follow closely some specific GUs, to achieve a new ``user-centric and cell-free'' communication service. % without cell-edge.
% thus can shorten the  distance between the transmitter and receiver significantly.
 In fact, the reduced UAV-GU distance not only decreases the signal attenuation but also increases the probability of the LoS link between them, which is particularly crucial for high-rate communication.
 %For example, the UAV can the IoT sensors as close as possible to transmit as well as collecting data.
%  It  is able to decrease the distance between the transmitter and receiver significantly.
%For example, it can fly to the users as close as possible to transmit as well as collecting data.
%Finally, adaptive communications (e.g., dynamic  power control) can be jointly designed with the UAV trajectory  to push the performance limits. %of UAV-enabled wireless communications.
  %Such an autonomous and dynamic reposition %as a promising candidate to  and
  %\cite{becvar2017performance}
  As a result, untethered UAV  BSs/relays have been envisioned to be a revolutionizing technology for future wireless communication systems \cite{fotouhi2017dronecells} and preliminary industry prototypes have been built and tested including e.g. Facebook's  Aquila and Nokia's flying-cell (F-Cell).
  This has also inspired a proliferation of studies recently on the new research paradigm of jointly optimizing the UAV trajectory design and communication resource allocation, for e.g. mobile relaying channel \cite{zeng2016throughput,li2016energy}, multiple access channel (MAC) and broadcast channel (BC) \cite{wu2017joint,lyu2016cyclical,JR:wu2017_ofdm}, interference channel (IFC)\cite{JR:wu2017joint}, and  wiretap channel \cite{guangchi2016_UAV}.  In particular,  as shown in \cite{lyu2016cyclical} and \cite{JR:wu2017_ofdm}, significant communication throughput gains can be achieved by mobile UAVs over static UAVs/fixed terrestrial BSs by exploiting the new design degree of freedom of UAV trajectory optimization, especially for delay-tolerant applications.  In \cite{JR:wu2017joint},  a joint UAV trajectory, user association, and power control scheme is proposed for cooperative  multi-UAV enabled wireless networks.  A multi-objective path
planning (MOPP) framework is proposed in  \cite{yin2017offline} to explore a suitable path for a UAV operating in a dynamic urban environment.
 In \cite{mozaffari2017mobile},  an efficient mobile architecture is proposed  for uplink data collection application in IoT networks.
 To optimize the wireless system performance by exploiting UAV-enabled BSs,  assorted  UAV trajectory designs have been proposed in the literature \cite{li2016energy,wu2017joint,JR:wu2017joint,lyu2016cyclical,JR:wu2017_ofdm,jeong2016mobile,zeng2016throughput,mozaffari2016unmanned,mozaffari2017mobile,wu2017ofdma}, based on optimization  techniques such as successive convex optimization (SCA) and solutions for  Travelling Salesman Problem (TSP) as well as its various extensions.
 However, all these works either assume time division multiple access (TDMA) \cite{li2016energy,wu2017joint,JR:wu2017joint,lyu2016cyclical,mozaffari2016unmanned} or frequency division multiple access (FDMA)  \cite{JR:wu2017_ofdm,jeong2016mobile,zeng2016throughput,mozaffari2017mobile,wu2017ofdma} to simplify the multiuser communication design, which, however, are in general  suboptimal from an information-theoretic perspective. As a result, the fundamental capacity limits of UAV-enabled multiuser communication systems still  remain largely unknown, which thus motivates this work. %promised by the UAV's mobility (trajectory design)wu2017ofdma,
%% This is not optimal from an information theorectic point of view and high rates can be achieved if the BS communicates with the users in the same time-frequency resource.

 %they mainly apply different heuristic designs on UAV trajectories, such as successive convex optimization (SCA) and Travelling Salesman Problem (TSP), which  in general suffer performance loss \cite{li2016energy,wu2017joint,JR:wu2017joint,wu2017ofdma,JR:wu2017_ofdm,jeong2016mobile,zeng2016throughput,mozaffari2016unmanned,Mozaffari2016,guangchi2016_UAV,Mozaffari2017energy}.
%Finally, these numerically generated UAV trajectories also make it difficult to obtain any useful insight into the structure of the UAV trajectory such as the initial and final locations as well as the UAV speed over time. %Motivated by the above, we aim to characterize the capacity region of a UAV-enabled BC in this paper.

\begin{figure}[!t]
\centering
\includegraphics[width=0.4\textwidth]{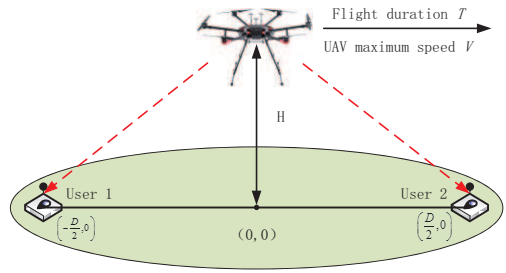}
\caption{A UAV-enabled broadcast channel (BC) with two GUs at fixed locations. } \label{system:model}\vspace{-0.5cm}
\end{figure}

In this paper, we aim to characterize the capacity region of a UAV-enabled BC and reveal the capacity-optimal joint UAV trajectory and communication design.
%This problem, however, has not been studied yet in the literature to our best knowledge. In fact, it is non-trivial even for the simplest case with one UAV and two ground users since the continuous UAV trajectory in general involves an infinite number of design variables.
As an initial study, we consider the simplified setup with two GUs, as shown in Fig. \ref{system:model}. Specifically, it is assumed that a UAV with the maximum speed $V$ meter/second (m/s) flies at a constant altitude of $H$ m to serve two GUs at fixed locations with a distance of $D$ m. We consider the communication within  a given UAV flight duration of $T$ s. Note that if $VT\ll D$, the considered system simplifies to e.g. a tethered UAV BS, which can be placed above a fixed ground location.  On one hand, we assume that $H$ is sufficiently large such that the channels from the UAV to both GUs are dominated by the LoS link, according to the recently conducted experimental results by Qualcomm \cite{qualcom_UAVreport}. On the other hand, we assume that $D$ is sufficiently large and comparable to $H$ so that the UAV's horizontal position can have a non-negligible impact on the UAV-GU channel strength.\footnote{Otherwise, if $D\ll H$, the UAV trajectory design becomes trivial and it should simply stay above any point along the line between the two GUs and their channels can be regarded as constant irrespective of $D$. }
As a result, an effective time-varying BC can be generally established between the UAV and the two GUs  as the UAV moves horizontally above them. Given the UAV trajectory, the UAV-GU BC resembles the conventional fading BC with a terrestrial BS \cite{tse2005fundamentals}.  However, their fundamental difference lies in  that the UAV trajectory and hence its induced  time-varying channel are controllable and thus can be proactively designed to maximize the capacity of the BC, while this is impossible for conventional fading channels due to the randomness in the propagation environment.
% However,  unlike the conventional fading channels where the fading states are random due to the multipath, the UAV trajectory and hence the time-varying channel pattern is controllable and can be proactively designed in the considered network,
As such, the joint UAV trajectory and communication design can exploit this  additional  degree of freedom to enlarge the capacity region compared to the conventional BC with a static BS on the ground.

To this end, we characterize the  capacity region of this new UAV-enabled BC  over a given UAV flight duration $T$, by jointly optimizing the UAV's trajectory and transmit power/rate
allocations over time, subject to the practical UAV's maximum speed and transmit power constraints.
Specifically, we adopt the \emph{rate-profile} approach as in \cite{zhang2010cooperative} to maximize the sum rate of the two GUs under their different rate ratios, which leads to a complete characterization of all the achievable rate-pairs for the two GUs on the so-called \emph{Pareto} boundary of the capacity region. However, such a joint optimization problem is shown to be non-convex and difficult to solve in general.  Nevertheless,  we obtain the optimal solution of the considered problem by exploiting its particular structure and applying tools from convex optimization. The main results of this paper are summarized as follows:
\begin{itemize}
  \item First, to draw essential insights, we consider two special cases with asymptotically large UAV  flight duration, i.e.,  $T\rightarrow \infty$ (or equivalently $VT\gg D$) and asymptotically low UAV speed, i.e.,  $V\rightarrow 0$ (or equivalently $VT \ll D$), respectively. We introduce a simple and practical UAV trajectory called hover-fly-hover
(HFH), where the UAV successively hovers at a pair of initial and final locations above the line segment connecting the two GUs each with a certain amount of time, and flies unidirectionally between them at its maximum speed. Then, for the case of $T\to \infty$,  we show that the HFH trajectory with hovering locations above the two GUs together  with the TDMA  based  orthogonal multiuser transmission is capacity-achieving.
%This suggests  that the proposed joint design not only helps enlarge the capacity region but also simplifies the communication transceiver design.
 In contrast, for  the case of $V\rightarrow 0$, it is shown that the UAV should hover at a fixed location that is nearer to the GU with larger achievable rate and in general superposition coding (SC) based non-orthogonal transmission with interference cancellation at the receiver of the nearer GU is required.
    %  it is shown that the capacity region is an equilateral triangle and to achieve any boundary point,  the UAV should hover above each of the two users over time for transmitting individually.
   %   In addition, the capacity region for finite $T$ is proved to be a non-convex region, which implies that it can be potentially enlarged by leveraging the UAV movement. %For the case of $V\rightarrow 0$, it is found that the UAV should be placed near the user that has a larger achievable rate and superposition coding (SC) with interference cancellation is in general required.
Furthermore,  it is shown that in general there exists a significant capacity gap between the above two cases, which demonstrates the potential  of exploiting the UAV's trajectory design and  motivates the study for the general case with  finite  UAV maximum speed and flight duration.
      %We also find that placing the UAV at some locations may not achieve any boundary point, which suggests that some locations are superior to others in the sense of achieving componentwise larger rate-pairs. In particular, hovering at the middle location of the two users may not achieve the maximum equal rate and even suffers a substantial capacity loss in the high signal-to-noise (SNR) scenario.
  \item Next, for the case of finite UAV speed and flight duration, we prove that the proposed HFH trajectory is also optimal while SC is generally required to achieve the capacity. In addition, the initial and final hovering locations need to be properly selected from  the points above the line segment between the two GUs to achieve the capacity region Pareto boundary. It is also observed that by increasing the UAV maximum speed and/or flight duration, the capacity region is effectively enlarged, especially for the low signal-to-noise ratio (SNR) case.
  %reveal an important property of the optimal UAV trajectory: the initial and final locations are ``superior'' to any other locations on the trajectory. Based this property, the optimal UAV trajectory  is shown to be remarkably simple: the UAV either hovers at one location or hovers at two different locations in a time-sharing manner and flies unidirectionally between them at the maximum speed, i.e., a hover-fly-hover structure.
  To gain more insights, we further analyze the high SNR case and  it is shown that the HFH trajectory reduces to a static point above one of the two GUs. This result implies that dynamic UAV movement is less effective for capacity enhancement as SNR increases. %Hovering at the middle location suffers from substantial capacity loss.
  \item Last, for the sake of  comparison, we further characterize the achievable rate region of the UAV-enabled two-user BC with the TDMA-based (instead of the optimal SC-based) transmission with finite UAV speed and flight duration. It is shown that the optimal UAV trajectory still follows the HFH structure as in the capacity-achieving case with SC-based transmission,  while the difference  lies in that the hovering locations can only be those above the two GUs in the TDMA case. It is also revealed  that the capacity gain of the optimal SC-based transmission over the suboptimal TDMA-based transmission
      %dynamic UAV movement is always helpful in enlarging the achievable rate region with TDMA and
decreases as the UAV maximum speed and/or flight duration increases.
\end{itemize}

The rest of this paper is organized as follows. Section II introduces the system model  and presents the problem formulation for capacity region characterization. In Sections III-V, we study the capacity region for two special cases and the general case, respectively. Section \ref{Section:TDMA} addresses the case with TDMA-based transmission.  Finally, we conclude the paper in Section VI.

\emph{Notations:} In this paper, scalars are denoted by italic letters, vectors and matrices are denoted by bold-face lower-case and upper-case letters, respectively. $\mathbb{R}^{M\times 1}$ denotes the space of $M$-dimensional real-valued vectors. For a vector $\mathbf{a}$, $\|\mathbf{a}\|$ represents its Euclidean norm and $\mathbf{a}^T$ denotes its transpose. For a time-dependent function $\mathbf{x}(t)$,  $\dot{\mathbf{x}}(t)$ denotes the derivative with respect to time $t$. For a set $\Aa$, $|\Aa|$ denotes its cardinality, ${\rm{int}}( \Aa)$ and $\partial \Aa$ represent the  interior and boundary of a set $\Aa$, $\rm{Conv}(\Aa)$ represents the convex hull of a set $\Aa$, which is the set of all the convex combinations of the points in $\Aa$, i.e.,
${\rm{Conv}}(\Aa)=\big\{ \sum^{|\Aa|}_{n=1}\alpha_nc_n: \forall\, \alpha_n\geq 0, \sum_{n=1}^{|\Aa|}\alpha_n=1\big\}$. For two sets $\Aa$ and $\Bb$, $ \Aa  \backslash  \Bb$ is the set of all elements in  $\Aa$ excluding those in $\Bb$.  Notation $\mathbf{a} \preceq \mathbf{b}$ indicates that vector $\mathbf{a}$ is  element-wisely less than or equal to vector $\mathbf{b}$.
%For a set $\mathcal{K}$, $|\mathcal{K}|$ denotes its cardinality.
\vspace{-0.1cm}
\section{System Model and Problem Formulation}

As shown in Fig. \ref{system:model}, we consider a UAV-enabled BC with one UAV transmitting independent information to two GUs at fixed locations. Without loss of generality, we consider a two-dimensional (2D) Cartesian coordinate system. Let the location of each GU $k \in \{1,2\}$ be denoted by $(x_k,0)$, where $x_1 =-D/2$ m and $x_2 =D/2$ m with $D>0$ denoting their distance.  The UAV is assumed to fly at a constant altitude of $H$ m. In practice, the value of $H$ is set based on regulations on the minimum UAV height as well as the communication system requirement.
 %Furthermore, flying at a  fixed altitude above the ground also helps avoid unnecessary energy consumption in aircraft ascending or descending.
We focus on a particular UAV flight duration of $T$ s and denote the UAV's time-varying location at time instant $t \in \mathcal T \triangleq [0,T]$ by $(x(t), H)$. %\footnote{With a slight abuse of notation, an open interval and the UAV location are both used in the form of $(\cdot, \cdot)$, which can be easily distinguished from the context. }
The system bandwidth is denoted by $B$ in Hertz (Hz) and hence the symbol period is $T_s=1/B$ s. We assume that $TB$ is sufficiently large such that the UAV can adopt the Gaussian signaling with a sufficiently long symbol block length to achieve the channel capacity. It is also assumed that the UAV's location change within a symbol period is negligible compared to the altitude $H$, i.e., $VT_s\ll H$, where $V \ge 0$ denotes the UAV maximum speed  in m/s.
%Based on the above discretization, the UAV trajectory can be characterized by a sequence of UAV locations $\Q\triangleq\{ x[n], n\in \N\}$ where $\N\triangleq \{1,\cdots, N\}$.
Thus,  the UAV-GU channel is assumed to be constant within each symbol interval. Mathematically,  we express the UAV speed constraint as  \cite{JR:wu2017joint,JR:wu2017_ofdm},
\begin{align}\label{UAVspeed}
|\dot{x}(t)| &\le V, \forall\, t\in \mathcal T.
%|x[n+1]-x[n]| &\leq \delta_d,  n=1,...,N-1,
\end{align}
%where $V$ is the maximum flying speed of the UAV.
%Denote the transmit power allocated to user $k$ at time instant $t$ by $p_{k}(t)\geq 0$. Then the peak transmit power constraint of the UAV can be expressed as
%\begin{align}\label{UAVpower}
%p_1(t) + p_2(t) &\le \bar P, \forall\,t\in \mathcal T,
%%p_{1}[n]+ p_{2}[n]\leq \bar P, \forall\,n,
%\end{align}
%where $\bar P$ denotes the peak transmit power at the UAV. Such a  maximum power  constraint is mainly due to hardware specifications, or as a way of imposing a limit on how much interference the UAV can cause to existing systems. It has been widely adopted in the literature for, e.g., the capacity characterization of the fading channels\cite{tse1998multiaccess,li2001capacity}.
%\footnote{In order to achieve the channel capacity, Gaussian signaling should be employed at the UAV transmitter. In this case, the instantaneous transmit power at each time instant $t$ may not satisfy the constraint in \eqref{UAVpower} and can even go to infinity (due to the Gaussian distribution of the input signals).}.
%In fact, the maximum power  constraint in \eqref{UAVpower} represents the transmit power at each time instant in an ``average'' sense.

% $\delta_d \triangleq V\delta_t$ is the maximum horizontal distance that the UAV can travel within one time slot.
%and the time-varying location of the UAV is $(x(t),H)$, where $H$ denotes the fixed altitude of the UAV.
For the purpose of exposition, we consider the free-space path loss model for the air-to-ground wireless communication channels from the UAV to the two GUs, as justified in Section I. %This consideration is practically based on the field measurement results in \cite{lin2017sky,qualcom_UAVreport,van2016lte,3gpp_UAV}, which have shown that the air-to-ground channels are dominated by the LoS component, even when the UAV is located at a moderate altitude. %For example, for the UAV at an altitude of $H= 125$ m, the LoS probability of air-to-ground links in rural environment exceeds 95\% for a horizontal ground distance of 4 kilometers (Km) \cite{lin2017sky}.
%Based on this fact, we consider the free-space  path loss model, in which the channel power gains depends only on the UAV-user distance. In addition,
It is assumed that the Doppler effect induced by the UAV mobility can be perfectly compensated at the user receivers \cite{zeng2016wireless,zhang2017cellular}.  As a result,  the channel power gain from the UAV to each GU $k$ at time instant $t$ is modeled as
\begin{align}\label{eq:channelgain}
\tilde{h}_k(x(t)) = \frac{\gamma_0}{(x(t) - x_k)^2+H^2},
\end{align}
where  $\gamma_0$ denotes the channel power gain at the reference distance $d_0=1$ m. % and $\sigma^2$ is the power of the  additive white Gaussian noise (AWGN) at the receiver of each  GU.
%$\beta_0 \triangleq \frac{\gamma_0}{\sigma^2}$,

At time instant $t\in\mathcal T$, let $s_1(t)$ and $s_2(t)$ denote the UAV's transmitted information-bearing  symbols  for GUs 1 and 2, respectively. Accordingly, the received signal at  GU $k$ is expressed as
\begin{align}\label{eq:033}
y_k(t) = \sqrt{\tilde{h}_k(x(t))} \big(s_1(t) + s_2(t)\big)+ n_k(t),  k\in\{1,2\},
\end{align}
where $n_k(t)$ denotes the additive white Gaussian noise (AWGN) at the receiver of  GU $k$. For simplicity, the noise power is assumed to be equal for the two GUs, denoted by $\sigma^2$.
 %and $\tilde{h}_k(x(t))$ is the channel coefficient from the UAV to ground  GU $k$.
%For notational convenience, we define the channel power gain normalized by the noise power as $h_k(t) = \tilde h_k(t)/\sigma^2$.
%Under given $x(t)$, this is similar to a
With given $x(t)$, the signal model in \eqref{eq:033} resembles a conventional fading BC consisting of one transmitter (the UAV) and two receivers (GUs) \cite{tse2005fundamentals}. In order to achieve the capacity region of this channel, the UAV transmitter should employ Gaussian signaling by setting $s_k(t)$'s as independent  circularly symmetric complex Gaussian (CSCG) random variables with zero mean and variances $p_k(t) = \mathbb{E}(|s_k(t)|^2), k\in\{1,2\}$. Suppose that at each time instant $t$, the UAV is subject to a maximum transmit power constraint $\bar P$, similarly as assumed in  \cite{tse1998multiaccess,li2001capacity}, i.e.,
\begin{align}\label{UAVpower}
p_1(t) + p_2(t) \le \bar P, \forall t\in\mathcal T.
\end{align}
%Note that such a maximum power  constraint has been widely adopted in the literature, e.g.,  for the capacity characterization of fading channels\cite{tse1998multiaccess,li2001capacity}. It may also be a practical way of imposing a limit on how much interference the UAV can cause to existing wireless networks.
%is mainly due to hardware specifications, or as a way of imposing a limit on how much interference the UAV can cause to existing wireless networks. It

%Under given trajectory $\{x(t)\}$, the capacity region of the two- GU UAV-enabled broadcast channel is equivalent to a dual two- GU multiple-access channel (MAC) with the same wireless channels, in which the  GU $k$ uses transmit power $p_k(t)$ at time instant $t$ and the noise power at the receiver is $\sigma^2$. Therefore, the capacity region can be expressed as ....

In this paper, we are interested in characterizing the capacity region of the UAV-enabled two-user BC, which consists of all the achievable average rate-pairs for the two GUs  over the duration $T$, subject to the UAV's maximum speed constraint in \eqref{UAVspeed} and maximum power  constraint in \eqref{UAVpower}.
%Define  $\Q\triangleq\{ x(t), \forall\, t\}$ and $\pow \triangleq  \{p_{k}(t),\forall\, t, k\}$.
%, which are defined by r_1 = ... and r_2 = ..., respectively.
  For given UAV trajectory $\Q\triangleq\{ x(t), t\in \T\}$ and power allocation $\pow \triangleq  \{p_{k}(t), t\in \T, k\in \{1,2\}\}$, let $\C(\Q, \pow)$ denote the set of all achievable average rate-pairs ($r_1, r_2)$ in bits per second per Hertz (bps/Hz) for the two GUs, respectively, which need to satisfy the following inequalities \cite{tse1998multiaccess,jindal2004duality}: %. Then any rate-pair $(r_1,r_2)$ within this set is achievable
%For any time slot $n$, given the UAV location $x[n]$ and the power allocation $p_1[n]$ and $p_2[n]$,
%all the rate-pairs satisfying the following three inequalities are achievable for the two GUs, in bits per second per Hertz (bps/Hz). Note that the following region is for the dual MAC \cite{tse1998multiaccess,hanly1998multiaccess},

\begin{align}
r_1 \le & \frac{1}{T}\int_{0}^T \log_2\big(1+ p_1(t)h_1(x(t))\big) {\rm{d}}t,  \label{MAC:ineq1}\\
r_2  \le & \frac{1}{T}\int_{0}^T\log_2\big(1+ p_2(t)h_2(x(t))\big){\rm{d}}t, \label{MAC:ineq2}\\
r_1 + r_2 \le& \frac{1}{T}\int_{0}^T\log_2\big(1+ p_1(t)h_1(x(t)) +p_2(t)h_2(x(t))\big){\rm{d}}t, \label{MAC:ineq3}
\end{align}
where $h_k(x(t)) = \tilde{h}_k(x(t))/\sigma^2, k\in\{1,2\}$.
%$r_1(t)=\log_2(1+ p_1(t)h_1(t))$
%\begin{align}
%r_1[n] \le & \log_2(1+ p_1[n]h_1[n]), \\
%r_2[n] \le & \log_2(1+ p_2[n]h_2[n]), \\
%r_1[n] + r_2[n]\le& \log_2(1+ p_1[n]h_1[n] +p_2[n]h_2[n]).
%\end{align}
%Let $r_1 =\frac{1}{N} \sum_{n=1}^{N}r_1[n]$ and $r_2 =\frac{1}{N} \sum_{n=1}^{N}r_2[n]$, respectively.
%Let $\mathcal{C}(\Q, \pow)$ denote the set of $r_1$ and $r_2$ characterized by the above three inequalities, under given trajectory $\Q$ and power allocation $\pow$.
 Denote by $\X_1$ and $\X_2$ the feasible sets of  $\Q$ and $\pow$ specified by the UAV's speed constraint \eqref{UAVspeed} and maximum power  constraint \eqref{UAVpower}, respectively.
Then, the capacity region of the UAV-enabled two-user BC  is defined as
\begin{align}\label{capacityregion}
\C (V,T,\bar P) = \bigcup_{\Q \in \X_1,\pow \in \X_2}\C (\Q, \pow).
\end{align}

Our objective is to characterize the Pareto boundary (or the upper-right boundary) of the capacity region $\C (V,T,\bar P)$ by jointly optimizing the UAV trajectory $\Q$ and power allocation $\pow$. % subject to both the UAV's speed constraint and maximum power  constraint, i.e, \eqref{UAVspeed} and \eqref{UAVpower}.
The Pareto boundary consists of all the achievable average rate-pairs at each of which it is impossible to improve the average rate of one  GU without simultaneously decreasing that of the other  GU. Since it remains unknown yet whether the capacity region  $\C (V,T,\bar P)$  is a convex set or not, we apply the  rate-profile technique in \cite{zhang2010cooperative} that ensures a complete characterization of the capacity region, even if it is non-convex.\footnote{Another commonly adopted approach for characterizing the Pareto boundary is to maximize the weighted sum of the average rates of the two GUs. Although this approach is effective  in characterizing Pareto boundary points for convex capacity regions, it may fail to obtain all the boundary points when the capacity region is a non-convex set \cite{Boyd}.}  Specifically, let $\mv\alpha = (\alpha_1,\alpha_2)$ denote a rate-profile vector which specifies the rate allocation between the two GUs with $\alpha_k\geq 0, k\in\{1,2\}$,  and $\alpha_1+\alpha_2=1$. Here, a larger value of $\alpha_k $ indicates that  GU $k$ has a higher priority in information transmission to achieve a larger average rate.
Then,  the characterization of each Pareto-boundary point corresponds to solving the following problem,
%we have the following optimization problem under any given rate-profile vector $\mv\alpha = (\alpha_1,\alpha_2)$,
\begin{align}
\text{(P1)}: ~~\max_{\{r,r_1,r_2,\Q, \pow\}} & r \\
\mathrm{s.t.}~~~~& r_k \ge \alpha_k r, \forall\, k\in\{1,2\},\label{eq:9}\\
%&r_2 \ge \alpha_2 r,  \label{eq:10}\\
&(r_1,r_2) \in \C(\Q, \pow),\label{P1:11}\\
&p_1(t) + p_2(t) \le \bar P, \forall\,t\in \T,\label{P1:12}\\
%&p_{1}[n]+ p_{2}[n]\leq \bar P, \forall\,n, \\
%p_{1}[n]+ p_{2}[n]\leq \bar P, \forall\,n,
%&p_1(t) + p_2(t) \le \bar P, \forall\,t,\\
&|\dot{x}(t)| \le V, \forall\, t\in \T, \label{P1:speed} \\
&p_k(t) \geq 0, \forall\, t\in \T,  k\in\{1,2\},\label{P1:14}
%& |x[n+1]-x[n]| \leq \delta_d,  n=1,...,N-1,
\end{align}
where $r$ denotes the achievable sum average rate of the two GUs. Problem (P1) is challenging to be solved optimally due to the following two reasons. First,  the constraints in (11) are non-convex, as the rate functions in $\C(\Q, \pow )$ (i.e., the right-hand-sides (RHSs) of \eqref{MAC:ineq1}, \eqref{MAC:ineq2}, and \eqref{MAC:ineq3}) are non-concave with respect to $\Q$. Second, problem (P1) involves an infinite number of optimization variables (e.g., $x(t)$'s over continuous time $t$). As a result, (P1) is a highly non-convex optimization problem and in general, there is no standard method to solve such problem efficiently.  For notational convenience,  the optimal UAV trajectory  for problem (P1) under any given $(\alpha_1,\alpha_2)$ is denoted by $\{x^*(t)\}$, and the optimal power allocation is denoted by $\{p_k^*(t), k\in\{1,2\}\}$. Furthermore, the corresponding rate-pair achieved by the above optimal UAV trajectory and power allocation is denoted by $(r^*_1, r^*_2)$, which is on the (Pareto) boundary of the capacity region $\C (V,T,\bar P)$.%\footnote{For convenience, we use ``boundary'' to denote the ``Pareto boundary'' (``upper-right boundary'') of a region in the rest of the paper.}
\vspace{-0.25cm}
\subsection{ Capacity Region Properties and HFH  Trajectory}
%\subsection{Properties of Capacity Region $\C(V,T,\bar P)$}
Before explicitly characterizing the capacity region $\C (V,T,\bar P)$,  we first provide some interesting properties of this region, which can be used to simplify the optimization of the UAV trajectory in (P1) later. %Due to the space limitation, we only provide the sketch of the proofs for these lemmas, which can be easily verified.
%The other one is the average BC capacity region which we are going to characterize by taking into the optimizing the UAV's trajectory in addition to the decoding order and power allocation.  The average BC capacity region may not be the same as the fixed-location capacity region since the UAV may fly over many locations  during the duration by exploiting its mobility.
\begin{lemma}\label{lemma0}
The capacity region $\C (V,T,\bar P)$ is symmetric with respect to the line $r_1=r_2$.
\end{lemma}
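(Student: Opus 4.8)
The plan is to exploit the symmetry of the geometric setup: the two GUs are placed symmetrically about the origin at $x_1 = -D/2$ and $x_2 = D/2$, and the channel gain $\tilde h_k(x) = \gamma_0/((x-x_k)^2 + H^2)$ depends only on the squared horizontal distance between the UAV and the GU. Thus reflecting the UAV's horizontal position through the origin, i.e.\ replacing $x(t)$ by $-x(t)$, swaps the roles of the two GUs exactly: $\tilde h_1(-x(t)) = \tilde h_2(x(t))$ and $\tilde h_2(-x(t)) = \tilde h_1(x(t))$. Formally, I would show that $(r_1, r_2) \in \C(V,T,\bar P)$ implies $(r_2, r_1) \in \C(V,T,\bar P)$, which is precisely the statement that the region is symmetric with respect to the line $r_1 = r_2$.

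The key steps are as follows. First, take any achievable rate-pair $(r_1, r_2) \in \C(V,T,\bar P)$; by definition there exist a feasible trajectory $\Q = \{x(t)\}$ satisfying $|\dot x(t)| \le V$ and a feasible power allocation $\pow = \{p_k(t)\}$ satisfying $p_1(t) + p_2(t) \le \bar P$ such that $(r_1, r_2) \in \C(\Q, \pow)$, i.e.\ inequalities \eqref{MAC:ineq1}--\eqref{MAC:ineq3} hold. Second, construct the mirrored trajectory $\tilde x(t) = -x(t)$ and the swapped power allocation $\tilde p_1(t) = p_2(t)$, $\tilde p_2(t) = p_1(t)$. Third, verify feasibility: $|\dot{\tilde x}(t)| = |\dot x(t)| \le V$, so the speed constraint \eqref{UAVspeed} is preserved, and $\tilde p_1(t) + \tilde p_2(t) = p_1(t) + p_2(t) \le \bar P$, so the power constraint \eqref{UAVpower} is preserved. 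Fourth, using $\tilde h_1(\tilde x(t)) = h_2(x(t))$ and $\tilde h_2(\tilde x(t)) = h_1(x(t))$ (dividing by $\sigma^2$ throughout), substitute into \eqref{MAC:ineq1}--\eqref{MAC:ineq3}: the new individual constraint for GU~1 becomes $r_2 \le \frac{1}{T}\int_0^T \log_2(1 + p_2(t) h_2(x(t)))\,\mathrm{d}t$, which holds because the original GU~2 constraint held; symmetrically for GU~2; and the sum-rate constraint is unchanged since $p_1 h_1 + p_2 h_2$ maps to $\tilde p_1 \tilde h_1 + \tilde p_2 \tilde h_2 = p_2 h_2 + p_1 h_1$, which is the same quantity. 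Hence $(r_2, r_1) \in \C(\tilde\Q, \tilde\pow) \subseteq \C(V,T,\bar P)$.

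Since this argument is essentially a change of variables in the defining integrals, there is no real analytical obstacle; the only thing to be careful about is bookkeeping — making sure the reflection is applied consistently to both the trajectory and the power allocation, and that the swap of GU indices is tracked correctly through all three rate inequalities. One minor point worth stating explicitly is that the map $(r_1, r_2) \mapsto (r_2, r_1)$ is exactly the reflection across the line $r_1 = r_2$ in the rate plane, so ``$\C(V,T,\bar P)$ is invariant under index-swapping of the rate-pair'' is literally the same as ``$\C(V,T,\bar P)$ is symmetric with respect to $r_1 = r_2$.'' I would close by remarking that this symmetry lets us, without loss of generality, restrict attention to rate-profile vectors with $\alpha_1 \ge \alpha_2$ (equivalently $\alpha_1 \in [1/2, 1]$) when solving (P1), which is the payoff that makes the lemma worth stating up front.
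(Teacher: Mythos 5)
Your proposal is correct and matches the paper's own argument exactly: the paper likewise constructs $\hat{x}(t)=-\check{x}(t)$ with swapped powers $\hat{p}_1(t)=\check{p}_2(t)$, $\hat{p}_2(t)=\check{p}_1(t)$ and notes this achieves the mirrored rate-pair while remaining feasible. You simply spell out the verification of the channel-gain swap and the three rate inequalities that the paper leaves as ``easily shown.''
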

\begin{proof}
Suppose that a rate-pair $(\check{r}_1,\check{r}_2)\in \C (V,T,\bar P)$ is achieved by $\check{\Q}=\{ \check{x}(t),  t\in \T\}$ and $\check{\pow}=  \{\check{p}_{k}(t), t\in \T, k\in\{1,2\}\}$. Then we can construct another solution $\hat{\Q}=\{\hat x(t),t\in \T\}$ and $\hat{\pow}=\{\hat p_k(t), t\in \T, k\in\{1,2\}\}
$ with $\hat x(t) = -\check{x}(t)$, $\hat p_1(t)=\check{p}_2(t)$, and $\hat p_2(t)=\check{p}_1(t)$, $\forall\, t$, which can be easily shown to achieve the symmetric rate-pair $(\check{r}_2,\check{r}_1)$. As the newly constructed solution is also  feasible to (P1), this lemma is thus proved.
 %It can be shown from \eqref{MAC:ineq1}-\eqref{MAC:ineq3} that such a solution achieves a rate-pair $(\check{r}_2,\check{r}_1)\in \C (V,T,\bar P)$.  %As a result, it is evident that any rate-pair $(\check{r}_1,\check{r}_2)$ in the capacity region has a symmetric rate-pair $(\check{r}_2,\check{r}_1)$ that is still in the region.
%Please refer to Appendix \ref{apdx:lemsymmetric}.
%For any feasible UAV trajectory and power allocation $\check{\Q}=\{ \check{x}(t), \forall\, t\}$ and $\check{\pow}=  \{\check{p}_{k}(t),\forall\, t, k\}$ that achieves any rate-pair $(\check{r}_1,\check{r}_2)\in \C (V,T,\bar P)$, we can construct an alternative solution $\hat{\Q}$ and $\hat{\pow}$ with $\hat x(t) = -\check{x}(t)$ and $\hat p_1(t)=\check{p}_2(t)$, $\forall\, t, k\in\{1,2\}$. It can be shown from \eqref{MAC:ineq1}-\eqref{MAC:ineq3} that such an alternative solution is feasible and achieves a rate-pair $(\check{r}_2,\check{r}_1)\in \C (V,T,\bar P)$.  As a result, it is evident that any rate-pair $(\check{r}_1,\check{r}_2)$ in the capacity region has a symmetric rate-pair $(\check{r}_2,\check{r}_1)$ that is still in the region. Therefore, this lemma is proved.
%Furthermore, $(r^*_2,r^*_1)$ lies also on the boundary of $\C (V,T,\bar P)$ since otherwise it can be shown that $(r^*_1,r^*_2)$ is neither on the boundary which contradicts the assumption. Thus, any symmetric rate-pairs in the capacity region boundary, i.e.,  $(r^*_1,r^*_2)$  and  $(r^*_2,r^*_1)$,  are achievable, which competes the proof.
\end{proof}

Based on Lemma \ref{lemma0}, it is evident  that the boundary of $\C (V,T,\bar P)$ is also symmetric with respect to the line $r_1=r_2$.
%we first provide a property that simplifies the optimization the UAV trajectory $\Q$.
\begin{lemma}\label{lemma1}
For problem  (P1),  the optimal UAV trajectory satisfies $x^*(t) \in [-D/2, D/2], \forall\, t\in \T$, i.e., the UAV should stay above the line segment between the two GUs.
\end{lemma}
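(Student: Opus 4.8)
The plan is to argue by contradiction using a projection/exchange argument: if the optimal trajectory $x^*(t)$ ever leaves the interval $[-D/2,D/2]$, we can modify it to stay inside without decreasing either user's rate, contradicting Pareto-optimality. Concretely, given an optimal solution $(r^*,r_1^*,r_2^*,\Q^*,\pow^*)$ to (P1), define a new trajectory $\hat x(t)$ by clipping: $\hat x(t) = \max\{-D/2,\min\{x^*(t),D/2\}\}$, i.e., project $x^*(t)$ onto $[-D/2,D/2]$. Keep the power allocation unchanged, $\hat p_k(t)=p_k^*(t)$. The first key step is to verify feasibility of the new solution, i.e., that $\hat x(t)$ still satisfies the speed constraint $|\dot{\hat x}(t)|\le V$; this holds because projection onto a convex set (here an interval) is $1$-Lipschitz, so $|\dot{\hat x}(t)|\le|\dot x^*(t)|\le V$ wherever the derivative exists, and $\hat x$ is absolutely continuous. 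The power constraint \eqref{P1:12} and nonnegativity \eqref{P1:14} are untouched.

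The second key step is to show each user's rate does not decrease. The point here is a monotonicity property of the channel gains \eqref{eq:channelgain}: for GU $2$ located at $x_2=D/2$, the gain $\tilde h_2(x)=\gamma_0/((x-D/2)^2+H^2)$ is nondecreasing in $x$ on $(-\infty,D/2]$ and nonincreasing on $[D/2,\infty)$; hence for every real $x$, $\tilde h_2(\hat x)\ge \tilde h_2(x)$ when $\hat x$ is the projection of $x$ onto $[-D/2,D/2]$ — indeed if $x>D/2$ then $\hat x=D/2$ maximizes $\tilde h_2$, if $x<-D/2$ then $\hat x=-D/2$ is closer to $D/2$ than $x$, and if $x\in[-D/2,D/2]$ then $\hat x=x$. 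The symmetric statement holds for $h_1$. Therefore, at every time $t$, $h_k(\hat x(t))\ge h_k(x^*(t))$ for $k\in\{1,2\}$, so the integrands in \eqref{MAC:ineq1}--\eqref{MAC:ineq3} are pointwise nondecreasing, giving a rate region $\C(\hat\Q,\pow^*)\supseteq \C(\Q^*,\pow^*)$. Thus $(r_1^*,r_2^*)\in\C(\hat\Q,\pow^*)$ and the same objective value $r^*$ with the same $r_k\ge\alpha_k r^*$ is attained by the feasible $(\hat\Q,\pow^*)$; moreover, if $x^*(t)\notin[-D/2,D/2]$ on a set of positive measure, at least one rate bound strictly increases, which would let us strictly increase $r$ — contradicting optimality of $r^*$. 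This yields $x^*(t)\in[-D/2,D/2]$ for almost every $t$, and by absolute continuity we may take a representative with $x^*(t)\in[-D/2,D/2]$ for all $t$.

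The main obstacle — really the only subtlety — is the feasibility check for the speed constraint after clipping: one must be slightly careful that $\hat x$ remains absolutely continuous with $|\dot{\hat x}|\le V$ a.e., rather than introducing kinks that violate differentiability in a problematic way. This is handled by noting that $\phi(u)=\max\{-D/2,\min\{u,D/2\}\}$ is $1$-Lipschitz, so $\hat x=\phi\circ x^*$ is absolutely continuous whenever $x^*$ is, and $|\dot{\hat x}(t)|=|\phi'(x^*(t))|\,|\dot x^*(t)|\le|\dot x^*(t)|\le V$ at a.e.\ $t$ (with $\phi'\in\{0,1\}$ a.e.). A minor secondary point is making the "strict increase forces a contradiction" step rigorous: one shows that if $x^*(t)>D/2$ on a positive-measure set then $h_2$ strictly increases on that set, so $r_2$ could be strictly increased, hence $r_1=\alpha_1 r$ could also be increased by reallocating, contradicting that $r^*$ is optimal — alternatively, invoke Lemma \ref{lemma0} together with the convexity/exchange structure to avoid case analysis. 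Everything else is routine.
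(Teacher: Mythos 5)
Your proposal is correct and follows essentially the same route as the paper: the paper's (one-line) proof likewise constructs a replacement trajectory inside $[-D/2,D/2]$ that simultaneously reduces the distance to both GUs and hence increases both channel gains pointwise, yielding a componentwise larger rate-pair. Your version simply makes the construction explicit (clipping/projection) and supplies the feasibility check for the speed constraint via the $1$-Lipschitz property of the projection, which the paper leaves implicit.
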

\begin{proof}
Supposing that  the optimal UAV trajectory does not lie within the interval $[-D/2, D/2]$, we can always construct a new trajectory with $x^*(t) \in [-{D}/{2}, {D}/{2}], \forall\, t\in \T$, which simultaneously decreases the distances from  the UAV to both GUs,  thus resulting in a strictly componentwise larger rate-pair based on \eqref{eq:channelgain}  and \eqref{MAC:ineq1}-\eqref{MAC:ineq3}. This thus completes the proof.
%Please refer to Appendix \ref{apdx:lemflyinterval}.
%Lemma \ref{lemma1} is intuitive since for any time slot $n$ when $y[n]\neq 0$, we can construct a feasible solution with $y[n]=0$ that increases the objective value.
%This lemma can be proved intuitively. Suppose that the  UAV trajectory is deviated  from that above the line segment between the two GUs. Then, we can construct a new trajectory with $x^*(t) \in [-\frac{D}{2}~ \frac{D}{2}], \forall\, t\in \T$ that simultaneously decreases the distances from  the UAV to both GUs and thus results in a strictly componentwise larger rate-pair.
\end{proof}

\begin{lemma}\label{lem:flydirection}
%Without loss of optimality, the UAV trajectory can be assumed as unidirectional to achieve any boundary point on the capacity region, i.e.,
For problem  (P1), there always exists an optimal UAV trajectory $\{x^*(t)\}$ that is unidirectional, i.e.,
$x^*(t_1) \le x^*(t_2)$  if $t_1<t_2$, $\forall\, t_1, t_2\in \T$.%,  or $x^*(t_1) \ge x^*(t_2)$ if $t_1>t_2$, $\forall\, t_1, t_2\in \T$.
%Flying \emph{unidirectionally}  is the optimal UAV trajectory for
\end{lemma}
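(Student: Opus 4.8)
The plan is to show that any optimal trajectory can be replaced by a unidirectional one achieving a componentwise no-smaller rate-pair, hence solving (P1) with the same objective value. The key observation is that the three rate bounds \eqref{MAC:ineq1}--\eqref{MAC:ineq3} depend on the trajectory only through the \emph{time-averaged} channel behaviour, i.e. through the induced distribution of the channel gains $\big(h_1(x(t)), h_2(x(t))\big)$ over $t\in\T$, and not through the order in which the positions are visited. Since by Lemma \ref{lemma1} we may assume $x^*(t)\in[-D/2,D/2]$ for all $t$, the plan is to ``sort'' this trajectory in time: define $x^*(t)$ to be a nondecreasing rearrangement of the original position function, so that the set of times spent in any sub-interval of $[-D/2,D/2]$ is preserved but the UAV now sweeps monotonically from left to right.

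First I would make precise the rearrangement. For the original optimal $\{x^*(t)\}$, consider the nondecreasing function $t\mapsto \tilde x(t)$ that is equimeasurable with $x^*$, i.e. $\big|\{t: \tilde x(t)\le a\}\big| = \big|\{t: x^*(t)\le a\}\big|$ for all $a$ (the monotone rearrangement / generalized inverse of the distribution function of $x^*$). Equimeasurability immediately gives $\int_0^T g\big(x^*(t)\big)\,{\rm d}t = \int_0^T g\big(\tilde x(t)\big)\,{\rm d}t$ for every bounded measurable $g$; applying this to $g(x)=\log_2\big(1+p_1 h_1(x)\big)$ and the analogous integrands shows that, if we also rearrange the power allocations consistently (transport $p_k^*$ along the same rearrangement map so that the pair $(x,p_1,p_2)$ at a given time is preserved), the RHSs of \eqref{MAC:ineq1}--\eqref{MAC:ineq3} are \emph{unchanged}, and the per-instant power constraint \eqref{UAVpower} still holds. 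Hence the same rate-pair $(r_1^*,r_2^*)$ is achievable, and the new solution is feasible for (P1) in every respect \emph{except possibly} the speed constraint.

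The remaining step is to verify the speed constraint $|\dot{\tilde x}(t)|\le V$, which is where the real content lies. A monotone rearrangement of a function need not be Lipschitz even if the original is, so one cannot use the bare rearrangement directly; the fix is to note that what the rate integrals actually depend on is only the occupation measure $\mu$ on $[-D/2,D/2]$ defined by $\mu(A)=\tfrac1T|\{t:x^*(t)\in A\}|$, and we have freedom in realizing this measure by a time-parametrized path. The natural choice is the ``monotone, constant-speed-where-possible'' realization: let the UAV traverse $[-D/2,D/2]$ left to right at speed exactly $V$ on the support of the absolutely continuous part of $\mu$ and hover (speed $0$) on the atoms of $\mu$; a standard check shows such a unidirectional path exists precisely when $T\ge (\text{distance needed})/V$, and since the original feasible trajectory already realized $\mu$ under the speed constraint $V$, the total ``travel budget'' it used is at least $\tfrac1V$ times the total variation of $x^*$, which dominates $\tfrac1V\cdot(\text{length of the convex hull of }\mathrm{supp}\,\mu)$; thus enough time remains to realize $\mu$ monotonically. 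I expect this feasibility-of-the-constant-speed-realization argument to be the main obstacle, since it requires carefully relating the total variation (hence time cost) of the original back-and-forth trajectory to the one-pass cost of its rearrangement and confirming that the hovering times at the atoms can be fit in; everything else follows from equimeasurability. Finally, since the constructed trajectory is nondecreasing in $t$ and achieves the optimal rate-pair while being feasible, it is an optimal unidirectional trajectory, completing the proof.
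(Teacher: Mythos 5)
Your proposal takes essentially the same route as the paper: both replace the optimal trajectory by its nondecreasing rearrangement that preserves the time spent at each location (the occupation measure), so that the rate integrals in \eqref{MAC:ineq1}--\eqref{MAC:ineq3} are unchanged; the paper merely asserts feasibility of the rearranged trajectory, whereas you correctly isolate the speed constraint as the only nontrivial point. One small correction to your final step: the ``traverse at speed exactly $V$ on the absolutely continuous part'' realization would \emph{not} reproduce $\mu$ when the original path crosses a location several times (it would finish early and distort the rate integrals); the cleaner observation is that the monotone rearrangement itself is automatically $V$-Lipschitz, since the original speed-limited trajectory must spend at least $(y_2-y_1)/V$ time in any subinterval $[y_1,y_2]$ of its range, hence $\mu([y_1,y_2])\ge (y_2-y_1)/V$ and the rearranged path never needs to exceed speed $V$.
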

\begin{proof}
Suppose that $\{x^*(t)\}$ is a non-unidirectional optimal UAV trajectory to problem (P1), which implies that the UAV visits some locations more than one times. We denote the total time that the UAV stays at such a location $x_A$ ($ \min\limits_{t\in \T}  \{x^*(t)\}\le x_A\le \max\limits_{t\in \T} \{x^*(t)\}$)  by $\delta_{t,A}>0$. Then, we show that there always exists an alternative unidirectional UAV trajectory that achieves the same objective value of (P1).   % Suppose for an optimal non-unidirectional trajectory,  the farthest locations in negative and positive directions are denoted by $x_{\rm I}$ and $x_{\rm F}$, respectively, and the total time that the UAV stays at an arbitrary location $x_A$ ($x_{\rm I}\le x_A\le x_{\rm F}$) is denoted by $t_A$.
 Specifically, we construct a unidirectional UAV trajectory $\{\hat{x}(t)\}$ with $\hat{x}(0)= \min\limits_{t\in \T}  \{x^*(t)\}$ and $\hat{x}(T)= \max\limits_{t\in \T} \{x^*(t)\}$ as its initial and final locations, respectively, where the UAV stays at location $x_A$ with a duration $\delta_{t,A}$, i.e., $\hat{x}(t)=x_A$, $t\in [t_A, t_A+\delta_{t,A}]$, with $t_A$ being the time instant once the UAV reaches location $x_A$.
  It is easy to show that $\hat{x}(t)$ is feasible to (P1) and always achieves the same objective value as $x^*(t)$. This thus completes the proof.
%This is realized by using the unidirectional trajectory where the UAV stays more time at the overlapped locations in the non-unidirectional trajectory.
%Please refer to Appendix \ref{apdx:lemflydirection}.
\end{proof}

%It is worth noting that Lemma \ref{lem:flydirection} does not exclude the optimality of a possible non-unidirectional trajectory for achieving a boundary point on the average capacity. However, it will be shown later that any of the boundary points is achieved only if the UAV flies \emph{unidirectionally} along the trajectory.
With Lemmas \ref{lemma1} and \ref{lem:flydirection}, we only need to consider the unidirectional UAV trajectory between  $[-{D}/{2}, {D}/{2}]$ in the rest of the paper.
%\begin{lemma}\label{lemm:unique_rp}
%At the optimal solution to (P1),  there always exists a power allocation solution that satisfies $p_1^*(t_1) =  p_1^*(t_2)$ and $p_2^*(t_1) =  p_2^*(t_2)$ if $x^*(t_1) = x^*(t_2)$, $\forall\, t_1, t_2\in \T$. In other words, it is optimal for the UAV to perform the fixed power allocation (accordingly the same rate allocation for the two GUs), when it hovers at the same location.
%%Without loss of optimality, the power allocation (rate-pair) achieved at any location along a given optimal  trajectory $\{x^*(t)\}$ is unique,
%%  to achieve any boundary point on the capacity region,
%\end{lemma}
%\begin{proof}
%Please refer to Appendix \ref{apdx:unique_rp}.
%\end{proof}
%
%Lemma \ref{lemm:unique_rp} % suggests that when the UAV stays at a location of an optimal UAV trajectory, it only needs to achieve one unique rate-pair, which
%simplifies the power allocation (rate allocation) of our considered problem.
%The above lemmas pave the way to getting a deep understanding of $\C(V,T,\bar P)$ as well as obtaining the optimal solution of (P1).
%As problem (P1) is non-convex and involves an infinite number of variables, it is generally difficult to solve.
In addition, as implied by Lemma \ref{lemma0}, we only need to obtain the boundary point $(r^*_1,r^*_2)$ with $r^*_2\ge r^*_1$ at one side of the line $r_1 = r_2$. This corresponds to the optimal solution to problem (P1) in the case with $\alpha_2 \ge \alpha_1$.
When $\alpha_1=0$ or $\alpha_2=0$, it is easy to show that the optimal rate-pair is  $(r^*_1, r^*_2)= (0, \log_2(1+ \frac{\bar P\beta_0}{H^2}))$ or $(r^*_1, r^*_2)=(\log_2(1+ \frac{\bar P\beta_0}{H^2}), 0)$ where $\beta_0 \triangleq \frac{\gamma_0}{\sigma^2}$.
%Furthermore, it is intuitive that obtaining a boundary rate-pair satisfying  $r^*_2\geq r^*_1$ is equivalent to finding the optimal solution of problem (P1) with the rate-profile satisfying $\alpha_2\geq \alpha_1$.

Next, we introduce a simple and yet practical HFH UAV trajectory, which will be shown optimal for (P1) in the sequel.
Specifically, with the HFH trajectory, the UAV successively  hovers at a pair of  initial and final locations, denoted by $x_{\rm I}$ and $x_{\rm F}$, respectively, with $-D/2\leq x_{\rm I}\leq x_{\rm F}\leq D/2$,  each for a certain amount of time, denoted by $t_{\rm I}\geq 0$ and $t_{\rm F}\geq 0$ with $t_{\rm I} +t_{\rm F}\leq T$,   and flies at the maximum speed $V$ between them. Mathematically, the HFH trajectory is generally given by
%Without loss of optimality, the UAV either hovers at a fixed location or flies unidirectionally at the maximum speed between two different locations (initial and final locations),  for achieving any boundary point of the capacity region, i.e.,
\begin{align}\label{optimal:trj}
x(t)=\left\{\begin{aligned}
                  &     x_{\rm I},\qquad \qquad\qquad ~ t \in \T_1, \\
                      & x_{\rm I} + (t-t_{\rm I})V,  ~~~~ t \in \T_2,\\
                   &    x_{\rm F}, \qquad \quad \quad ~~~~~~   t \in \T_3,
                    \end{aligned}
             \right.
\end{align}
where %$x_{\rm I}$ and $x_{\rm F}$ are the initial and final locations of the optimal UAV trajectory.\hat t + (x_{\rm F}-x_{\rm I})/VT -    [\hat t + (x_{\rm F}-x_{\rm I})/V
$\T_1= [0,t_{\rm I} ]$,  $\T_2 =  (t_{\rm I}, T-t_{\rm F})$, $\T_3= [T-t_{\rm F},T]$, and $t_{\rm I}  + \frac{x_{\rm F}-x_{\rm I}}{V} + t_{\rm F}=T$. It follows from \eqref{optimal:trj} that under the HFH trajectory, the UAV hovers at two different locations  at most; while if $x_{\rm I}=x_{\rm F}$,  then the UAV trajectory  reduces to hovering  at a fixed location during the entire flight duration $T$.
In the following two sections, we first investigate the solutions to (P1) for the two special cases with $T\to \infty$ and $V\rightarrow0$, respectively.
%Specifically, $T\to \infty$ ($V>0$)  corresponds to the ideal scenario when the travel distance is negligible and the UAV can hover at any different locations within the entire  duration. By contrast, $V=0$ means that the UAV cannot move across time and just hovers at a location within the entire  duration, for which the hovering location can be optimized.  This practically corresponds to the applications by deploying a tethered UAV-enabled BS.
%Based on the solutions for the two special cases, we present the optimal solution to (P1) for the general case in Section V, which practically corresponds to the applications by deploying an untethered UAV-enabled BS.

%As such, it is expected that the above two cases serve as

\section{Capacity  Characterization with Large Flight Duration}\label{section:infinite}
In this section, we study the special case when the UAV flight duration is asymptotically  large, i.e., $T\rightarrow \infty$, where the corresponding capacity region is denoted by  $\C ( V,\infty,\bar P)$. To this end, we first ignore  the UAV maximum speed constraint \eqref{P1:speed} in (P1) and derive its optimal solution for any $T>0$. Then, we show that the resulting capacity region is equal  to  $\C ( V,\infty,\bar P)$ as $T\rightarrow \infty$.

By dropping  constraint \eqref{P1:speed} under finite $T$, problem (P1) is reduced to
\begin{align}
\text{(P2)}: ~~\max_{\{r,r_1,r_2,\Q, \pow\}} & r \\
\mathrm{s.t.}~~~~& \eqref{eq:9},\eqref{P1:11}, \eqref{P1:12}, \eqref{P1:14}, \label{P2:eq:9}
\end{align}
whose optimal objective value serves as an upper bound of that of problem (P1).
Although problem (P2) is a non-convex optimization problem, we obtain its optimal solution as in the following lemma.
\begin{lemma}\label{lemm:P2}
Under given $(\alpha_1, \alpha_2)$ with $\alpha_1 + \alpha_2 = 1$, the optimal trajectory and power allocation solution to (P2) is given as $x^*(t) = -D/2$, $p^*_1(t) = \bar P$, $p^*_2(t) = 0$, $\forall t \in \hat{\T}_1$, and $x^*(t) = D/2, p^*_2(t) = \bar P, p^*_1(t) = 0, \forall t \in \hat{\T}_2$, where
$\hat{\T}_1 = [0, \alpha_1T)$, and $\hat{\T}_2 = [\alpha_1T,  T]$. Accordingly, the optimal rate-pair is obtained as $r^*_1 = \alpha_1\log_2(1+ \frac{\bar P\beta_0}{H^2})$ and $r^*_2 =  \alpha_2\log_2(1+ \frac{\bar P\beta_0}{H^2})$.
\end{lemma}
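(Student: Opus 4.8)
The plan is to prove Lemma~\ref{lemm:P2} by first establishing a per-time-instant upper bound on the achievable rates and then exhibiting the proposed solution as one that meets this bound with equality. The key observation is that, since the UAV maximum speed constraint \eqref{P1:speed} has been dropped in (P2), the UAV location $x(t)$ at different time instants can be chosen completely independently, so the problem decouples across time except through the averaging in \eqref{MAC:ineq1}--\eqref{MAC:ineq3} and the rate-profile constraint \eqref{eq:9}.

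First I would bound the channel gains. From \eqref{eq:channelgain} and Lemma~\ref{lemma1} (or directly by noting $(x(t)-x_k)^2+H^2 \geq H^2$), we have $h_k(x(t)) \leq \beta_0/H^2$ for every $t$ and $k\in\{1,2\}$, with equality for GU~$k$ if and only if $x(t)=x_k$. Combining this with the power constraint \eqref{P1:12} and \eqref{P1:14}, each of the instantaneous rate expressions inside the integrals of \eqref{MAC:ineq1} and \eqref{MAC:ineq2} is at most $\log_2\big(1+\frac{\bar P\beta_0}{H^2}\big)$. Hence any $(r_1,r_2)\in\C(\Q,\pow)$ satisfies $r_k \leq \log_2\big(1+\frac{\bar P\beta_0}{H^2}\big)$ for $k\in\{1,2\}$. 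The rate-profile constraint \eqref{eq:9} forces $r_1\geq\alpha_1 r$ and $r_2\geq\alpha_2 r$, so $r = \alpha_1 r + \alpha_2 r \leq r_1 + r_2$; but actually I want a tighter argument, since $r_1+r_2$ could in principle exceed $\log_2(1+\frac{\bar P\beta_0}{H^2})$ via the sum constraint. The cleanest route is: from \eqref{eq:9}, $r\leq r_1/\alpha_1$ and $r\leq r_2/\alpha_2$ (handling $\alpha_k=0$ separately via the boundary-point remark already in the text), hence $r \leq \min\{r_1/\alpha_1, r_2/\alpha_2\} \leq \min\{\,\cdot\,\} $ where each $r_k \leq \log_2(1+\frac{\bar P\beta_0}{H^2})$, giving $r \leq \log_2\big(1+\frac{\bar P\beta_0}{H^2}\big)$ immediately. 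So the optimal value of (P2) is at most $\log_2\big(1+\frac{\bar P\beta_0}{H^2}\big)$.

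Next I would verify that the proposed solution is feasible and attains this bound. On $\hat\T_1=[0,\alpha_1 T)$ the UAV hovers at $x^*(t)=-D/2=x_1$, transmitting only to GU~1 at full power, so $h_1(x^*(t))=\beta_0/H^2$ and the integrand in \eqref{MAC:ineq1} equals $\log_2(1+\frac{\bar P\beta_0}{H^2})$ there and is zero (for GU~1's purposes) on $\hat\T_2$; the time-average gives exactly $r_1^* = \alpha_1\log_2(1+\frac{\bar P\beta_0}{H^2})$, and symmetrically $r_2^* = \alpha_2\log_2(1+\frac{\bar P\beta_0}{H^2})$ from $\hat\T_2=[\alpha_1 T,T]$, which has length $\alpha_2 T$ since $\alpha_1+\alpha_2=1$. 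One checks that $(r_1^*,r_2^*)$ satisfies all three inequalities \eqref{MAC:ineq1}--\eqref{MAC:ineq3} (the sum-rate inequality \eqref{MAC:ineq3} is slack here because on each sub-interval only one user is served), hence $(r_1^*,r_2^*)\in\C(\Q^*,\pow^*)$. Taking $r = \log_2(1+\frac{\bar P\beta_0}{H^2})$, the rate-profile constraints $r_k^* = \alpha_k r$ hold with equality, so this $r$ is achievable. Matching the upper bound from the previous paragraph, the proposed solution is optimal, and the stated optimal rate-pair follows.

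The only subtle point — and the step I would be most careful with — is ruling out that the \emph{sum}-rate constraint \eqref{MAC:ineq3} lets (P2) do strictly better than serving one user at a time: i.e., that $r_1+r_2$ cannot exceed $\log_2(1+\frac{\bar P\beta_0}{H^2})$ in a way that, combined with the rate-profile vector, beats the claimed $r$. The argument above via $r\leq\min\{r_1/\alpha_1,r_2/\alpha_2\}$ sidesteps this entirely by bounding $r$ through the \emph{individual} rate constraints rather than the sum, so no genuine difficulty arises; I would just make sure to dispatch the degenerate cases $\alpha_1=0$ or $\alpha_2=0$ by citing the boundary-point computation already given before the lemma. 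Everything else is routine substitution into \eqref{MAC:ineq1}--\eqref{MAC:ineq3}.
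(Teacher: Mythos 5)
Your achievability half is fine, but the upper-bound half has a genuine error, and it sits precisely in the step you singled out as the ``cleanest route.'' From $r_k \ge \alpha_k r$ you correctly get $r \le \min\{r_1/\alpha_1,\, r_2/\alpha_2\}$, but combining this with the individual bounds $r_k \le L \triangleq \log_2\big(1+\frac{\bar P\beta_0}{H^2}\big)$ only yields $r \le L/\max\{\alpha_1,\alpha_2\}$, which is weaker than $L$ whenever both $\alpha_k<1$: the inequality $r_k/\alpha_k \le L/\alpha_k$ points the wrong way for your purpose. Moreover, the individual constraints \eqref{MAC:ineq1}--\eqref{MAC:ineq2} together with \eqref{eq:9} genuinely do not imply $r\le L$. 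For instance, with $x(t)\equiv 0$ and $p_1(t)=p_2(t)=\bar P/2$, the right-hand sides of \eqref{MAC:ineq1} and \eqref{MAC:ineq2} both equal $\log_2\big(1+\tfrac{\bar P}{2}h_1(0)\big)$, and at high SNR twice this quantity exceeds $L$; so for $\alpha_1=\alpha_2=1/2$ the chain $r\le\min\{r_1/\alpha_1,r_2/\alpha_2\}\le L$ breaks. What rules such points out of $\C(\Q,\pow)$ is exactly the sum-rate constraint \eqref{MAC:ineq3} that you chose to sidestep: since $p_1(t)h_1(x(t))+p_2(t)h_2(x(t))\le (p_1(t)+p_2(t))\frac{\beta_0}{H^2}\le \frac{\bar P\beta_0}{H^2}$ pointwise, \eqref{MAC:ineq3} gives $r_1+r_2\le L$ for every feasible $(\Q,\pow)$, and then $r=(\alpha_1+\alpha_2)r\le r_1+r_2\le L$. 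This is precisely the first argument you wrote down and then abandoned for the wrong reason --- the sum constraint is what prevents $r_1+r_2$ from exceeding $L$, not what might allow it.

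With that one step repaired, your proof is correct and complete, and it is genuinely different from --- and considerably more elementary than --- the paper's. The paper invokes the time-sharing condition to establish strong duality for (P2), forms the partial Lagrangian with multipliers $\mu_1,\mu_2$ on the rate-profile constraints, uses the polymatroid structure to decouple the maximization over $t$, shows the per-instant optimum is a vertex solution (hover above one GU, allocate all power to it) depending on the sign of $\mu_1-\mu_2$, and concludes $\mu_1=\mu_2$ at the optimal dual point with the stated solution arising as a time-sharing of the two extremes. Your bound-and-achieve argument avoids all of that machinery and, once fixed, is the cleaner of the two. One minor correction to your feasibility check: for the proposed solution the sum-rate inequality \eqref{MAC:ineq3} is met with equality (both sides equal $L$), not slack.
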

\begin{proof}
Please refer to Appendix A. %\ref{apdx:1}
\end{proof}

Based on Lemma \ref{lemm:P2} and by changing the values of $\alpha_1$ and $\alpha_2$ for (P2),   the capacity region without considering the UAV maximum speed constraint \eqref{P1:speed}, denoted by $\hat{\C}(\bar P)$, can be easily obtained in the following proposition.

%\subsection{When $V= +\infty$ }
%This case practically also corresponds to the case when the distance between the two GUs is sufficiently small or $T$ is sufficiently large for a finite UAV speed %such that the travelling time is negligible. %However, as long as the UAV speed is a finite value,  the capacity region always lies inside the following triangle.
 %outer (upper-right) outermost boundary
 %% usually  GU overline and underline鏉ヨ〃绀轰笂涓嬬晫銆傛垨鑰呮悳绱㈠尯闂淬€?

\begin{proposition}\label{prop1}
In the absence of constraint \eqref{P1:speed},  the capacity region $\hat{\C}(\bar P)$ of the UAV-enabled two-user BC is given by
\begin{align}\label{eq:region1}
\hat{\C}(\bar P)=  \bigg \{ (r_1,r_2) : & ~ r_1+r_2 \leq\log_2\left(1+ \frac{\bar P\beta_0}{H^2}\right),  r_1\geq 0,  r_2\geq 0\bigg\},
 \end{align}
 which is an equilateral triangle.
 %In addition, the capacity achieving UAV trajectory is to place the UAV directly above each  GU and perform TDMA between the two GUs.
\end{proposition}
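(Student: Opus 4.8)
The plan is to realize $\hat{\C}(\bar P)$ as the union $\bigcup_{\Q,\pow}\C(\Q,\pow)$ over all $(\Q,\pow)$ feasible to (P2) by sandwiching it between the claimed triangle (converse direction) and the region generated by the family of solutions supplied by Lemma~\ref{lemm:P2} (achievability direction). Throughout, write $R_0\triangleq\log_2\!\big(1+\bar P\beta_0/H^2\big)$ for the common right-hand side appearing in \eqref{eq:region1}.

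\emph{Converse.} Fix any feasible $(\Q,\pow)$ and any $(r_1,r_2)\in\C(\Q,\pow)$. By \eqref{eq:channelgain} and $\beta_0=\gamma_0/\sigma^2$ we have $h_k(x(t))=\beta_0/\big((x(t)-x_k)^2+H^2\big)\le\beta_0/H^2$ for $k\in\{1,2\}$ and all $t\in\T$, while \eqref{P1:12} gives $p_1(t)+p_2(t)\le\bar P$. Hence $p_1(t)h_1(x(t))+p_2(t)h_2(x(t))\le\bar P\beta_0/H^2$ pointwise, so the integrand in \eqref{MAC:ineq3} is bounded by $R_0$ for every $t$ and therefore $r_1+r_2\le R_0$; combined with $r_1,r_2\ge0$, this shows $\hat{\C}(\bar P)$ is contained in the triangle on the right-hand side of \eqref{eq:region1}.

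\emph{Achievability.} For every $(\alpha_1,\alpha_2)$ with $\alpha_1+\alpha_2=1$ and $\alpha_k\ge0$, Lemma~\ref{lemm:P2} exhibits an explicit feasible $(\Q,\pow)$ attaining the rate-pair $(\alpha_1R_0,\alpha_2R_0)$; letting $\alpha_1$ range over $[0,1]$ traces out the whole segment $\{(r_1,r_2):r_1+r_2=R_0,\ r_1,r_2\ge0\}$, i.e.\ the hypotenuse of the triangle. It then remains only to observe that each $\C(\Q,\pow)$ is closed under componentwise domination: if $(r_1,r_2)\in\C(\Q,\pow)$ and $0\le r_1'\le r_1$, $0\le r_2'\le r_2$, then $(r_1',r_2')$ still satisfies \eqref{MAC:ineq1}--\eqref{MAC:ineq3}, so $(r_1',r_2')\in\C(\Q,\pow)$. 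Since every point of the triangle lies componentwise below some point of its hypotenuse, it belongs to $\C(\Q,\pow)$ for the corresponding $\alpha$, hence to the union $\hat{\C}(\bar P)$. Combining the two inclusions yields \eqref{eq:region1}, and that this region is the triangle with vertices $(0,0)$, $(R_0,0)$, $(0,R_0)$ (symmetric about $r_1=r_2$, consistent with Lemma~\ref{lemma0}) is then immediate.

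\emph{Main obstacle.} There is essentially no genuine difficulty, since the substantive content has been front-loaded into Lemma~\ref{lemm:P2}: the converse is a one-line consequence of the sum-rate inequality \eqref{MAC:ineq3} together with the uniform channel bound $h_k\le\beta_0/H^2$. The only point requiring a little care is that Lemma~\ref{lemm:P2} pins down only the Pareto-boundary (sum-rate-optimal) points of the region, so one must explicitly invoke the monotone, downward-closed structure of each $\C(\Q,\pow)$ to populate the interior — a routine but necessary step to turn the boundary characterization into the full-region identity \eqref{eq:region1}.
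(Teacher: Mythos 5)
Your proposal is correct and follows essentially the same route as the paper, which derives Proposition~\ref{prop1} directly from Lemma~\ref{lemm:P2} by sweeping the rate-profile $(\alpha_1,\alpha_2)$ over the simplex. You merely make explicit two steps the paper leaves implicit — the converse via the pointwise bound $p_1(t)h_1(x(t))+p_2(t)h_2(x(t))\le \bar P\beta_0/H^2$ in \eqref{MAC:ineq3}, and the downward-closedness of each $\C(\Q,\pow)$ to fill in the interior — both of which are sound.
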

%\begin{theorem}\label{prop1}
%\end{theorem}
%\begin{proof}
%Please refer to Appendix \ref{apdx:prop1}.
%\end{proof}

From Lemma  \ref{lemm:P2} and Proposition \ref{prop1}, the optimal UAV trajectory for achieving the boundary points of $\hat{\C}(\bar P)$ is to let the UAV  successively hover above each of the two GUs for communication  in a TDMA manner. It is worth pointing out that the UAV maximum speed is finite in practice and thus constraint  \eqref{P1:speed} cannot be ignored in general.  As a result,  the capacity region $\hat{\C}(\bar P)$ in \eqref{eq:region1} generally serves as an ``upper bound'' of the capacity region with finite $V$. However, as shown in the following theorem, $\hat{\C}(\bar P)$ can be asymptotically achieved when $T$ is sufficiently large for any $V>0$.

\begin{theorem}\label{achievingTDMA}
As $T\rightarrow \infty$, we have $\C ( V,\infty,\bar P) = \hat{\C}(\bar P)$, $\forall\, V>0$, where the optimal UAV trajectory follows the HFH structure  in \eqref{optimal:trj} with $x_{\rm I}=-D/2$ and $x_{\rm F}=D/2$, and the TDMA-based transmission is capacity-achieving.
\end{theorem}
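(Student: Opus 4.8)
The plan is to establish $\C(V,\infty,\bar P) = \hat{\C}(\bar P)$ by proving two inclusions. The inclusion $\C(V,\infty,\bar P) \subseteq \hat{\C}(\bar P)$ is immediate for every finite $T$: since (P2) is obtained from (P1) by dropping the speed constraint \eqref{P1:speed}, any rate-pair achievable under the speed constraint is also achievable without it, so $\C(V,T,\bar P) \subseteq \hat{\C}(\bar P)$ for all $T$, and hence the limiting region as $T\to\infty$ is contained in $\hat{\C}(\bar P)$ as well. The substantive direction is the reverse inclusion: I must show that every rate-pair on the Pareto boundary of $\hat{\C}(\bar P)$ — i.e. every pair $(\alpha_1 R, \alpha_2 R)$ with $R \triangleq \log_2(1+\tfrac{\bar P\beta_0}{H^2})$ and $\alpha_1+\alpha_2=1$ — is asymptotically achievable using a valid trajectory that respects $|\dot x(t)|\le V$.

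The construction is the HFH trajectory of \eqref{optimal:trj} with $x_{\rm I}=-D/2$ and $x_{\rm F}=D/2$: the UAV hovers above GU~1 for a duration $t_{\rm I}$, flies unidirectionally at speed $V$ from $-D/2$ to $D/2$ (taking time $D/V$), then hovers above GU~2 for a duration $t_{\rm F}$, with $t_{\rm I} + D/V + t_{\rm F} = T$. During $\T_1$ I set $p_1(t)=\bar P$, $p_2(t)=0$ and during $\T_3$ I set $p_2(t)=\bar P$, $p_1(t)=0$ (TDMA); during the flying phase $\T_2$ the power allocation is immaterial and can be taken to be zero. To hit the target ratio I choose $t_{\rm I} = \alpha_1(T - D/V)$ and $t_{\rm F} = \alpha_2(T - D/V)$. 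Then $r_1 = \tfrac{1}{T}\, t_{\rm I}\, R = \alpha_1\bigl(1 - \tfrac{D}{VT}\bigr) R$ and likewise $r_2 = \alpha_2\bigl(1 - \tfrac{D}{VT}\bigr) R$ (using that when hovering directly above a GU the channel gain is exactly $\beta_0/H^2$ and the self-rate equals $R$). As $T\to\infty$ with $V>0$ fixed, $D/(VT)\to 0$, so $(r_1,r_2)\to(\alpha_1 R, \alpha_2 R)$, which is an arbitrary boundary point of $\hat{\C}(\bar P)$. Since $\alpha_1,\alpha_2$ were arbitrary, the full Pareto boundary is approached; combined with the fact that $\hat{\C}(\bar P)$ is a closed triangle containing the origin, every interior point is achievable for finite $T$ as well, giving $\hat{\C}(\bar P) \subseteq \C(V,\infty,\bar P)$.

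To finish, I would note that the constructed trajectory satisfies \eqref{P1:speed} by design (it either hovers, $\dot x = 0$, or flies at exactly speed $V$), lies in $[-D/2,D/2]$ consistent with Lemma~\ref{lemma1}, and is unidirectional consistent with Lemma~\ref{lem:flydirection}; it is thus feasible for (P1), so the associated rate-pairs genuinely belong to $\C(V,T,\bar P)$. The two inclusions together yield $\C(V,\infty,\bar P) = \hat{\C}(\bar P)$, and the optimality of the HFH structure with hovering points above the two GUs together with TDMA follows from the fact that this specific construction attains (in the limit) the upper bound $\hat{\C}(\bar P)$ from Proposition~\ref{prop1}.

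\textbf{Anticipated main obstacle.} The only delicate point is making the ``asymptotic achievability'' argument rigorous as a set equality rather than a statement about a single boundary point: one must argue that the limit of the nested (in $T$) achievable regions equals $\hat{\C}(\bar P)$, handling the closure carefully — in particular that points on the boundary, not just the interior, are included in the limiting region. This is routine given that $\hat{\C}(\bar P)$ is a simple compact triangle and the achievable regions $\C(V,T,\bar P)$ are monotincreasing in $T$ (a longer duration can always replicate a shorter schedule with extra idle hovering), so the union over $T$ of the regions, together with its closure, fills out the triangle; but it is the step that requires the most care to state precisely.
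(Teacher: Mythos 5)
Your proposal is correct and follows essentially the same route as the paper: the easy inclusion $\C(V,T,\bar P)\subseteq\hat{\C}(\bar P)$ from Proposition \ref{prop1}, followed by the explicit HFH/TDMA construction with hovering times proportional to $\alpha_1,\alpha_2$ yielding rates $\alpha_k\bigl(1-\tfrac{D}{VT}\bigr)\log_2\bigl(1+\tfrac{\bar P\beta_0}{H^2}\bigr)$ that converge to the boundary of $\hat{\C}(\bar P)$ as $T\to\infty$. Your extra remarks on the monotonicity in $T$ and the closure of the limiting region are a slight refinement of rigor over the paper's sandwich argument, but the underlying idea is identical.
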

\begin{proof}
%This theorem is proved by construction.
First,  it is evident that $\C(V,T,\bar P) \subseteq \hat{\C}(\bar P)$ for any $V>0$ and $T>0$. Next, we show that the HFH trajectory in \eqref{optimal:trj} with $x_{\rm I} = -D/2$ and $x_{\rm F} = D/2$ together with TDMA-based transmission achieves  the boundary of $\hat{\C}(P)$ as $T \rightarrow \infty$, as follows.
 For any boundary point $(r^{**}_1, r^{**}_2) \in \C(V,T,\bar P)$ in  \eqref{eq:region1} satisfying $r^{**}_1 = \alpha_1\log_2( 1+ \frac{\bar P\beta_0}{H^2})$, $r^{**}_2 = \alpha_2\log_2( 1+ \frac{\bar P\beta_0}{H^2})$, $\alpha_1 +\alpha_2 =1$,
 we can construct a feasible solution for (P1)  where the UAV flies at the maximum speed between the two GUs and hovers above GUs 1 and 2 for $\alpha_1$ and $\alpha_2$ proportion of the remaining time. In addition, the UAV only transmits information to  GU 1 or 2 when hovering above that  GU via TDMA. Thus, the corresponding achievable rate-pair of the two GUs, denoted by $(r_1^{\star},r_2^{\star})$, are given by
 $r_1^{\star} = \alpha_1(1-\frac{D}{VT})\log_2( 1+ \frac{\bar P\beta_0}{H^2})$ and  $r_2^{\star} = \alpha_2(1-\frac{D}{VT})\log_2( 1+ \frac{\bar P\beta_0}{H^2})$, where $\frac{D}{VT}$ corresponds to the proportion of time for the UAV's maximum-speed flying from $x_{\rm I} = -D/2$ to $x_{\rm F} = D/2$. As $T\rightarrow \infty$, $(r_1^{\star} , r_2^{\star}) \rightarrow (r^{**}_1, r^{**}_2)$ for any $V>0$ and $D>0$ since $\frac{D}{VT}\rightarrow 0$. Based on the facts that  $(r_1^{\star} , r_2^{\star})\preceq (r_1^{*} , r_2^{*})$ and  $(r_1^{*} , r_2^{*})\preceq (r_1^{**} , r_2^{**})$ (due to $\C(V,T,\bar P) \subseteq \hat{\C}(\bar P)$), we have    $ (r_1^{*} , r_2^{*}) \rightarrow (r^{**}_1, r^{**}_2)$  as  $T\rightarrow \infty$. Thus, the unidirectional HFH trajectory with $x_{\rm I}=-D/2$ and  $x_{\rm F}=D/2$ with TDMA  is asymptotically optimal and  $\C (V, \infty,\bar P) = \hat{\C}(\bar P)$.
\end{proof}
%Proposition \ref{prop1} suggests that when $V=+\infty$, the capacity region with joint power allocation and UAV trajectory optimization is an equilateral %triangle with two extreme points corresponding to placing the UAV right above each  GU and also assign it the maximum power, as can be observed from Fig. \ref{region:infty}, the solid red line. The points on the line segment of the red solid line can be achieved by dynamic hovering time sharing above the two GUs.
%
%\begin{remark}
%%It is worth comparing  the capacity region of the UAV-enabled two- GU BC, i.e., $\C (V,T,\bar P)$, versus that of  the conventional two- GU BC with  the transmitter fixed at a given location.
%For the purpose of comparison, we consider a two- GU BC when the UAV is located at a fixed location $x$. Thus, this channel corresponds to a  time-invariant AWGN BC and
%It is worth noting that the BC at any given UAV location $x$ corresponds to a time-invariant AWGN BC and  the capacity region $\mathcal{C}_{f}(x)$ is known to be convex \cite{li2001capacity}.
%However, the capacity region of the two- GU UAV-enabled BC is generally non-convex, as will be shown later in detail.
%On the other hand, it is noted that $\mathcal{C}_{f}(x)\subseteq  \C (V,T,\bar P)$ since any fixed location $x$ is also a feasible solution to (P1). This observation shows the benefit of exploiting the UAV mobility in capacity enhancement, which will be analyzed in detail later.
%\end{remark}
\begin{figure}[!t]
\centering
\includegraphics[width=0.5\textwidth]{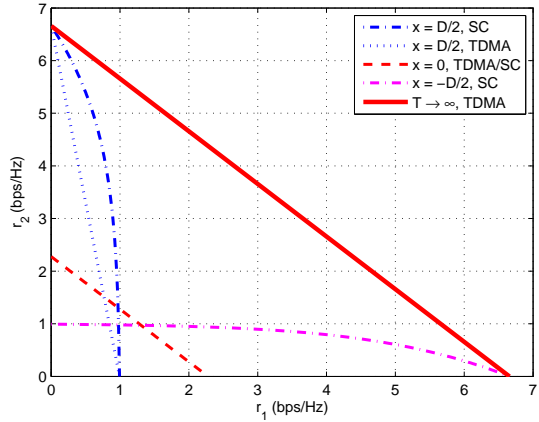}
\caption{The capacity region of a UAV-enabled two-user BC when $T\to \infty$.
%The parameters are set as follows: $\sigma_0=-100$ dBm, $\gamma_0= -50$ dB,  $H = 100$ m, $D = 1000$ m, and $\bar P= 10$ dBm.
} \label{region:infty}\vspace{-0.6cm}
\end{figure}

For the purpose of illustration,  Fig. \ref{region:infty} shows  the capacity region $\C (V, \infty,\bar P)$ with $\sigma_0=-100$ dBm, $\gamma_0= -50$ dB,  $H = 100$ m, $D = 1000$ m, and $\bar P= 10$ dBm. For comparison, we also show the capacity region achieved when the UAV is fixed at a given location $x$, where $x=-D/2$, $0$, or $D/2$. In this case,  the system becomes a conventional two-user AWGN  BC with constant channel gains and its capacity region is denoted by  $\mathcal{C}_{f}(x)$. %For the purpose of a fair comparison, we assume that the wireless channels in the conventional BC follow the same free-space channel model.
Based on the uplink-downlink duality \cite{jindal2004duality},  we have  $\mathcal{C}_{f}(x)= \underset{p_1+p_2\le \bar P, p_1, p_2\ge 0}{ \bigcup} \C_{\rm{MAC}} (x, p_1,p_2)$, where $\C_{\rm MAC} (x, p_1,p_2)$ denotes the capacity region of the dual two-user MAC specified by the following  inequalities  \cite{tse1998multiaccess} (or equivalently $\C(\Q,\pow)$ with $x(t) = x, p_k(t) = p_k, \forall t\in \T, k\in\{1,2\}$):
%. In particular, we define the capacity region with a fixed UAV location $x$ as $\mathcal{C}_{f}(x)$
 \begin{align}
r_1 \le & \log_2\big(1+ p_1h_1(x)\big), \label{eq:fixed1}\\
r_2  \le & \log_2\big(1+ p_2h_2(x)\big),  \label{eq:fixed2}\\
r_1 + r_2\le & \log_2\big(1+ p_1h_1(x) +p_2h_2(x)\big)  \label{eq:fixed3},
\end{align}
where $h_k(x) = \frac{\beta_0}{(x- x_k)^2+H^2},  k\in\{1,2\}$. It is known  that the capacity region $\mathcal{C}_{f}(x)$ is  convex and its boundary is generally achieved by  SC-based non-orthogonal transmission with interference cancellation at the receiver of the  GU with higher channel gain (or nearer to the UAV in our context) \cite{tse2005fundamentals}.\footnote{For degraded BC, dirty paper coding (DPC) also achieves the same capacity boundary as SC  \cite{tse2005fundamentals}. Without loss of optimality, we consider SC in this paper.}
%in traditional wireless networks where the transmitter is fixed at a given location,  dirty paper coding (SC )  is the capacity achieving strategy for a degraded BC
For example, in Fig. \ref{region:infty}, when $x=D/2$ or $x=-D/2$, the corresponding boundary points of $\mathcal{C}_{f}(x)$ can only be achieved by applying SC while TDMA is strictly suboptimal (except for the two extreme points)\cite{tse2005fundamentals}. By contrast, when $x=0$, the two GUs have the same channel gain and thus both SC and TDMA  are optimal.
Interestingly, based on Theorem \ref{achievingTDMA}, when the UAV mobility can be fully exploited (say, with untethered UAVs)  with $T\to \infty$ (or equivalently $D\ll VT$),  TDMA-based orthogonal transmission along   with the simple HFH UAV trajectory is capacity-achieving whereas SC  is not required.
Nevertheless, it is also worth pointing out that the significant capacity gain by exploiting the high mobility  UAV over the static UAV comes at the cost of transmission delay at one of the two GUs (e.g.,  GU 2 needs to wait for about $T/2$ to be scheduled for transmission, which can be substantial when $T$ becomes large). Therefore, there is a fundamental throughput-delay trade-off in wireless communications enabled by high-mobility UAVs \cite{lyu2016cyclical, JR:wu2017_ofdm}.

\begin{remark}
Based on Proposition \ref{prop1}, we next provide a property of capacity region for finite flight duration, which helps reveal the fundamental reason why the UAV mobility can potentially enlarge the  capacity region: $\C (V,T,\bar P)$ is non-convex  for $0< T< \infty$.
%the non-convexity of the capacity region fundamentally suggests that the movement of the UAV is sure to enlarge the capacity region as long as $T$ and/or $V$ is sufficiently large.
%\begin{corollary}\label{corollary1}
%\end{corollary}
The non-convexity of the capacity region essentially suggests that it can be enlarged  if the convex combinations of the rate-pairs can be achieved (generally achieved at different locations). This is the reason why the UAV mobility/movement can be helpful, leading to a time-sharing of multiple locations.
%This is exactly where the UAV mobility/movement becomes .
%% the time-sharing cannot be performed between the two extreme points with the distance $D$ for any $0\leq V< +\infty$. In fact,
\end{remark}

\vspace{-0.25cm}
\section{Capacity Characterization with Limited UAV Mobility}
In this section, we study the other special case with limited UAV mobility, i.e.,  $V\rightarrow 0$ (or equivalently $VT\ll D$). In this case,  the UAV's horizontal movement has negligible impact on the UAV-GU channels with $H\gg VT$ (since $H$ is comparable with $D$). As a result, the UAV should hover at a fixed location during the entire $T$ once it is deployed (e.g., a tethered UAV), i.e.,  $x(t) = x, \forall\, t\in\mathcal T$, which becomes a special case of  the proposed HFH trajectory in \eqref{optimal:trj} with $x_{\rm I}=x_{\rm F}$. In this case, solving (P1) is equivalent to finding the optimal hovering location of the UAV, $x$, given the rate-profile parameters $(\alpha_1, \alpha_2)$, as well as the corresponding transmission power,
%Recall that for given $x$, $\mathcal{C}_{f}(x)$ is known to be a convex region and time-sharing among different power allocations is not needed. Therefore, we have
$p_1(t_1) =  p_1$ and $p_2(t_1) =  p_2$, $\forall\, t\in\mathcal T$, and rates $r_1$ and $r_2$ for GUs 1 and 2, respectively. As such,   $\C(\Q, \pow) = \C_{\rm{MAC}} (x, p_1,p_2)$ holds and problem (P1) is reformulated as
\begin{align}
\text{(P3)}: ~~\max_{\{r,r_1,r_2, x, p_1,p_2\}} & r \\
\mathrm{s.t.}~~~~& r_k \ge \alpha_k r, \forall\, k\in\{1,2\},\label{P3:eq01}\\
%&r_2 \ge \alpha_2 r,  \label{eq:10}\\
&(r_1,r_2) \in \C_{\rm MAC}(x, p_1,p_2),\label{P3:eq02}\\
&p_1 + p_2 \le \bar P, \label{P3:eq03}\\
%&-D/2 \le x \le D/2, \label{P3:eq04}\\
&p_k \geq 0,   k\in\{1,2\}. \label{P3:eq05}
%& |x[n+1]-x[n]| \leq \delta_d,  n=1,...,N-1,
\end{align}
\begin{proposition}\label{thm2}%\normalfont
For problem (P3) with $\alpha_2 \ge \alpha_1$, there always exists an optimal  UAV hovering location  $x^*$, such that $0 \le x^* \le D/2$.% for  $\alpha_2 \ge \alpha_1$.
%On the boundary points of the capacity region: when $r_2\geq r_1$, the optimal UAV hovering  location is $x\in[0 ~\frac{D}{2} ]$ near  GU 2, and vice visa.
\end{proposition}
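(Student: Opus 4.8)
The plan is to argue by contradiction via a reflection/relocation argument, essentially mirroring the symmetry idea of Lemma \ref{lemma0} but applied at the level of a single hovering location. Suppose that every optimal solution of (P3) has hovering location $x^* < 0$, so the UAV hovers strictly closer to GU~1. I would take such an optimal solution $(r^*, r_1^*, r_2^*, x^*, p_1^*, p_2^*)$ and construct the reflected solution $\hat x = -x^*$, $\hat p_1 = p_2^*$, $\hat p_2 = p_1^*$. Since $h_1(-x^*) = h_2(x^*)$ and $h_2(-x^*) = h_1(x^*)$ by \eqref{eq:channelgain}, the dual-MAC inequalities \eqref{eq:fixed1}--\eqref{eq:fixed3} for the reflected solution are exactly those of the original solution with the roles of GUs~1 and~2 swapped. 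Hence the reflected solution achieves the rate-pair $(r_2^*, r_1^*)$, i.e. GU~1 now gets $r_2^*$ and GU~2 gets $r_1^*$.

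The second step is to use the assumption $\alpha_2 \ge \alpha_1$ to show that this reflection cannot strictly help, and in fact lets us move to $x=0$ without loss. From the rate-profile constraints \eqref{P3:eq01}, optimality of (P3) forces $r_k^* = \alpha_k r^*$ at the optimum (otherwise one could scale $r$ up), so $r_2^* \ge r_1^*$ precisely because $\alpha_2 \ge \alpha_1$. Thus at $x^*<0$ the UAV is hovering nearer to the \emph{weaker-rate} user GU~1, which is wasteful. The cleanest way to finish is: consider the capacity region $\mathcal{C}_f(x^*)$ of the fixed-location BC; by the reflection above, $\mathcal{C}_f(-x^*)$ is the mirror image of $\mathcal{C}_f(x^*)$ across the line $r_1=r_2$. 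I would then invoke convexity of $\mathcal{C}_f(\cdot)$-type regions together with a continuity/time-sharing argument along $x$ to conclude that the point $(r_1^*,r_2^*)$ with $r_2^* \ge r_1^*$ is also achievable at some $x' \in [0, -x^*]$, hence at some $x' \ge 0$ — because moving from $x^*<0$ toward $0$ monotonically improves $h_2$ (helping the higher-rate user) while the constraint on GU~1 has slack or can be compensated by shifting power from $p_2$ to $p_1$. More concretely, one can show directly that starting at $x^*<0$ and setting $x'=0$ with the same powers yields $h_1(0) < h_1(x^*)$ but $h_2(0) > h_2(x^*)$; by reallocating power one checks that the rate-pair $(r_1^*, r_2^*)$ with $r_2^*\ge r_1^*$ remains in $\mathcal{C}_{\rm MAC}(0, p_1', p_2')$ for suitable $p_1' \ge p_1^*$, $p_2' \le p_2^*$ — this is where the assumption $r_2^*\ge r_1^*$ (equivalently $\alpha_2\ge\alpha_1$) is essential, since it guarantees GU~2 is the binding user whose channel we are improving.

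The main obstacle I anticipate is making the power-reallocation step fully rigorous: one must verify that when the hovering location is moved from $x^*$ toward $0$, the loss in $h_1$ can always be absorbed by increasing $p_1$ at the expense of $p_2$ without violating the sum-power constraint \eqref{P3:eq03} and without reducing $r_2$ below $r_2^*$. This requires quantifying the trade-off between the channel-gain change and the power change, and handling the corner case where $p_1^* = \bar P$ already (i.e. $p_2^* = 0$), in which case GU~2 receives nothing and one falls back to the extreme rate-pair $(\log_2(1+\bar P\beta_0/H^2), 0)$, for which hovering above GU~2 (so $x=D/2 \ge 0$) is trivially optimal. Once these cases are dispatched, the conclusion $0 \le x^* \le D/2$ follows, where the upper bound $x^* \le D/2$ is already guaranteed by (the single-location specialization of) Lemma \ref{lemma1}.
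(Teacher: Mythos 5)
Your first step---reflecting to $\hat x=-x^*$ and obtaining the swapped rate-pair $(r_2^*,r_1^*)$---is exactly the paper's opening move, and it settles the case $r_1^*=r_2^*$. The gap is in how you return from the swapped pair to one that dominates $(r_1^*,r_2^*)$ when $r_2^*>r_1^*$. Your concrete mechanism, namely relocating to $x'=0$ (or some $x'\in[0,-x^*]$) and absorbing the loss in $h_1$ by shifting power from $p_2$ to $p_1$, fails: the binding constraint in $\C_{\rm MAC}(x,p_1,p_2)$ is the sum-rate inequality, and the sum capacity $\log_2(1+\bar P h_1(x^*))$ available at $x^*<0$ (where GU~1 is the strong user) can strictly exceed the sum capacity $\log_2(1+\bar P h_1(0))$ at $x'=0$ (where $h_1(0)=h_2(0)$). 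With the paper's own parameters ($\bar P\beta_0=10^6$, $H=100$ m, $D=1000$ m) and $x^*=-D/4$, the equal-rate boundary point at $x^*$ has sum rate $\approx 2.46$ bps/Hz, whereas the sum capacity at $x=0$ is only $\log_2(1+10^6/260000)\approx 2.28$ bps/Hz; no power reallocation at $x=0$ can reach that pair, and by continuity the same obstruction persists for nearby pairs with $r_2^*>r_1^*$. Note also that ``time-sharing along $x$'' is not available in (P3): with $V\rightarrow 0$ the UAV hovers at a single location for the whole duration, so any time-sharing must take place inside one fixed-location region $\mathcal{C}_f(x')$.

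The idea you are missing is the paper's second step: remain at the reflected location $\hat x=-x^*$ and time-share \emph{within the convex set} $\mathcal{C}_f(\hat x)$ between the swapped pair $(r_2^*,r_1^*)$ and the corner point $(0,R_2)$ with $R_2=\log_2(1+\bar P h_2(\hat x))$, choosing the weights so that GU~2's rate is restored to $r_2^*$. Showing that GU~1's rate does not fall below $r_1^*$ then reduces to the single inequality $r_1^*+r_2^*\le R_2$, which the paper obtains by bounding the sum rate at $x^*$ by the strong user's single-user capacity, $r_1^*+r_2^*\le\log_2(1+\bar P h_1(x^*))=\log_2(1+\bar P h_2(\hat x))$; see \eqref{eq22} and \eqref{apdx:eq19}. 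Without this time-sharing-at-the-reflected-location argument (or an equivalent substitute), your proof does not go through.
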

\begin{proof}
Please refer to Appendix B. %\ref{apdx:2}.
\end{proof}
% but one of its BC capacity region boundary point  \rightarrow

Proposition  \ref{thm2}  suggests that when $\alpha_2 \ge \alpha_1$, the UAV should be placed closer to  GU 2 such that  it has a larger channel gain than  GU 1. As a result,   GU 2 needs to decode  GU 1's signal first and then decodes its own signal after canceling the interference from  GU 1's signal. Therefore, we have $r_1 = \log_2(1+\frac{p_1h_1(x) }{p_2h_1(x)+1})$ and $r_2 =  \log_2\left(1+p_2h_2(x)\right)$, where $h_1(x)= \frac{\beta_0}{ (x+\frac{D}{2})^2+H^2 }$ and $h_2(x)= \frac{\beta_0}{ (x-\frac{D}{2})^2+H^2 }$.
%It is worth noting that in (P1), the  $\alpha_2\geq \alpha_1$  is equivalent to   $r_2\geq r_1$.
As the inequalities in \eqref{P3:eq01} must be tight at the optimality, (P3) can be transformed into the following problem,
\begin{align}
\text{(P4)}: ~~ \max_{\{x,p_2\}} ~&\frac{1}{\alpha_2}\log_2\left(1+p_2h_2(x)\right)  \\
\mathrm{s.t.}~& \alpha_1 \log_2\left(1+p_2h_2(x)\right) =\alpha_2 \log_2\left(1+\frac{(\bar P-p_2)h_1(x) }{p_2h_1(x)+1}\right),\label{eq:28}\\
&0 \le p_2\le \bar P,~ x\in [0,D/2 ].
\end{align}
Note that for $\alpha_k>0$, $k\in \{1,2\}$,  the LHS  (RHS) of \eqref{eq:28} increases (decreases) with $p_2$.  It thus follows  that under any given $x$, the optimal solution of $p_2$ is unique and can be directly obtained by solving the equality in \eqref{eq:28} with a bisection search, and thus the objective value of (P4) can be obtained accordingly.
Therefore, to solve problem (P4), we only need to apply  the one-dimensional search over $x\in [0,D/2]$, together with a bisection search for $p_2$ under each given $x$.
%\begin{align}
%\text{(P2)}:  \max_{\{x,p_1,p_2\}} ~& r_2= \log_2\left(1+p_2h_2\right)  \\
%\mathrm{s.t.}~& \log_2\left(1+\frac{p_1h_1 }{p_2h_1+1}\right)\ge r_1,\\
%&p_1+p_2\le \bar P,
%\end{align}
%where $h_1= \frac{\beta_0}{ (x+\frac{D}{2})^2+H^2 }$ and $h_2= \frac{\beta_0}{ (x-\frac{D}{2})^2+H^2 }$.
%By eliminating $p_1$ and $p_2$, the above problem can be transformed into an univariate optimization problem with  $x\in [0~\frac{D}{2} ]$,
%\begin{align}\label{zerospeed}
%  \max_{x} ~& r_2= \log_2\left(1+\frac{(\bar P h_1-2^{r_1}+1)h_2}{2^{r_1}h_1 }\right)
%\end{align}
%Note that $r_1\in [0~\log_2(1+\frac{\bar P\beta_0}{(\frac{D}{2})^2 + H^2})]$.
%Then,  given $r_1$, the optimal solution of problem \eqref{zerospeed} can be obtained by a one-dimensional search over $x\in[0~\frac{D}{2}]$.
\begin{figure}[!t]
\centering
\includegraphics[width=0.5\textwidth]{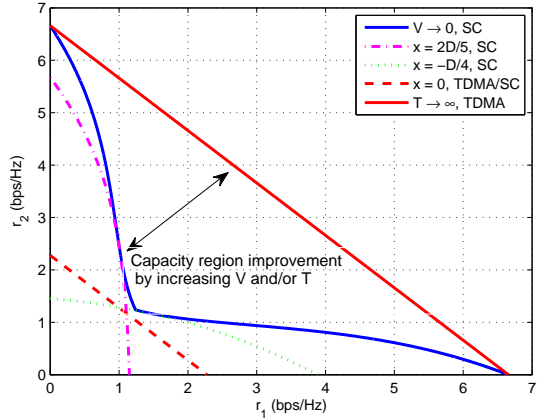}
\caption{The capacity region of a UAV-enabled two-user BC when $V\rightarrow0$.
} \label{region:zero}\vspace{-0.5cm}
\end{figure}
%\rightarrow

Fig. \ref{region:zero} shows the capacity region $\C (0,T,\bar P)$ of the UAV-enabled two-user BC as $V\rightarrow 0$. The parameters are same as those for Fig. \ref{region:infty}. It is observed that $\C (0,T,\bar P)$ is a non-convex set that is larger than the two-user AWGN BC capacity region $\C_f(x)$ at any fixed location $x$, thanks to the location optimization for the UAV based on the  GU rate requirements (or rate-profile vector). In addition, it is interesting to observe  that at some locations,  e.g., $x = 2D/5$ and $x = -D/4$, the fixed-location capacity region $\C_f(x)$ touches the boundary of $\C (0,T,\bar P)$,  while at other locations, e.g., $x = 0$, $\C_f(x)$ lies strictly inside $\C (0,T,\bar P)$.
% \begin{itemize}
%   \item  First, for any location $x$,  we have $\C_f(x) \subset \C (0,T,\bar P)$ and it means that  the capacity region improvement as compared to the fixed-location capacity region, comes from a new dimension, i.e., UAV/transmitter  location optimization,  which has been rarely explored in traditional wireless networks. This can also be understood from \eqref{capacityregion} since $\C (0,T,\bar P)$ is in fact the union of all fixed-location capacity regions.
The latter observation  suggests that some locations are inferior  to the others in the sense that they achieve componentwise smaller rate-pairs.
This  further  implies that when $V>0$,  the UAV should hover at such superior locations rather than the inferior locations in order to maximize the  GU average rates. This is illustrated by observing $\C_f(0) \subset  \C_f(2D/5) \bigcup \C_f(-D/4)$  in Fig. \ref{region:zero}.   However, since the UAV's speed is finite in practice, it may need to fly over some inferior locations (e.g., $x=0$), in order to travel between and hover over different superior locations that are far apart (e.g., $x=2D/5$ and $x=-D/4$) in a time-sharing manner. This intuitively explains  why the UAV should fly at the maximum speed in the optimal HFH trajectory for the general case with finite UAV maximum  speed $V$ and flight duration $T$, as will be rigorously proved in the next section.
% \end{itemize}
Finally, by comparing $\C (0,T,\bar P)$ with $\C (V, \infty,\bar P)$, it is observed that significant capacity improvement can be achieved  by increasing the UAV maximum speed and/or flight duration.
%, which is the gain brought by exploiting the UAV mobility.

%\subsection{When $0 < V< +\infty$, $T\rightarrow  +\infty$ }
%\begin{proposition}
%For a finite UAV speed $V$ satisfying $0 < V< +\infty$, the optimal UAV trajectory for any boundary point on the BC capacity region is to fly with the maximum speed between $x=-\frac{D}{2}$ and $x=\frac{D}{2}$ and hover at these two locations.
%\end{proposition}
%\begin{proof}
%x
%\end{proof}
\section{Capacity Characterization for Finite UAV Speed and Flight Duration}\label{section:capacity:region}
In this section, we characterize the capacity region by solving problem (P1) for the general case with finite UAV maximum speed $V$ and flight duration $T$. %and then provide numerical examples to demonstrate the effectiveness of the UAV movement in enlarging the capacity region. Finally, we discuss the high SNR scenario.
%some interesting properties of the optimal trajectory to (P1) in the following Lemma \ref{lem:flydirection} and Proposition \ref{trj_lemma}. Based on them, we characterize the structure of the optimal  UAV trajectory.
\vspace{-0.3cm}
\subsection{Capacity Region Characterization}
%In the next, we first simplifies the analysis on UAV trajectory and power allocation by
%At each location, the UAV- GU channel can be viewed as a time-invariant AWGN BC, the capacity region of which is known to be convex \cite{li2001capacity}.
%\begin{figure}[!t]
%\centering
%\includegraphics[width=0.8\textwidth]{trj3.eps}
%\caption{A unidirectional UAV trajectory where the UAV flies from initial location $x_C$ to final location $x_{\rm F}$  and  $x_B$ is any arbitrary location between them.} \label{trj}
%\end{figure}
First, we reveal an important property of the optimal UAV trajectory solution to problem (P1) with any given $V>0$ and $T>0$, based on which, we show that  the HFH UAV trajectory is capacity-achieving. According to Lemmas \ref{lemma1} and \ref{lem:flydirection}, we consider a unidirectional UAV trajectory without loss of generality, in which the initial and final locations are denoted by $x^*(0)= x_{\rm I}$ and $x^*(T) = x_{\rm F}$, respectively, with $-D/2 \leq x_{\rm I}\leq x_{\rm F}\leq D/2$. %Note that for the case of $x_{\rm F}=x_{\rm I}$, we will discuss its optimality later in this section.
%It is worth pointing out that the solutions to (P2) and (P3) in Sections III and IV are consistent with this lemma.
%Lemma \ref{lemm:unique_rp} simplifies the power allocation of our considered problem in the sense that when the UAV hovers at a  fixed location in the optimal UAV trajectory, only one power allocation (rate allocation) solution needs to be obtained to achieve the optimality.
%Then we show that the UAV either flies at the maximum speed between two locations or hovers at one location for each boundary point.
%Without loss of generality, we study the structure of the optimal UAV trajectory corresponding to any arbitrary rate-pair, denoted by $ (r_1^*, r_2^*)$,  on the boundary of the capacity region $\C (V,T,\bar P)$.
%Denote by $x_{\rm F}$ and $x_{\rm I}$ as the initial and final locations of the optimal UAV trajectory.

%Based on Lemma \ref{lemm:unique_rp}, we denote by $(\bar r^{A}_1,\bar r^{A}_2)$ (point A) and $(\bar r^{C}_1, \bar r^{C}_2)$ (point C)  the rate-pairs achieved at  locations $x_{\rm F}$ and $x_{\rm I}$, respectively, as shown in Fig. \ref{region:infty0}. %, with $x_{\rm I}<x_{\rm F}$.
Intuitively, the fixed-location capacity regions of  $x_{\rm I}$ and $x_{\rm F}$, i.e., $\C_f(x_{\rm I})$ and $\C_f(x_{\rm F})$,  should have rate superiority over those of the other locations on the line between them, which is affirmed by the following proposition.
%since otherwise the UAV can select other locations as the initial and final locations, as suggested in the discussion of Section IV. Such a superiority is revealed in the following proposition. %...(better have a label for the discussion there)

\begin{proposition}\label{trj_lemma}
%In the optimal UAV trajectory corresponding to a given boundary point of $\C (V,T,\bar P)$, the fixed-location capacity region of any arbitrary location $x_B$ between $x_{\rm F}$ and $x_{\rm I}$, i.e., $x_{\rm I}\leq x_B\leq x_{\rm F}$,  should lie inside the convex hull of the union of  the two capacity regions at locations $x_{\rm F}$ and $x_{\rm I}$,
At the optimal UAV trajectory solution $\{x^*(t)\}$ to problem (P1), it must hold that
\begin{align}\label{eq:V1}
\C_f(x_A) \subseteq {\rm{Conv}}\big(  \C_f(x_{\rm I}) \bigcup \C_f(x_{\rm F})\big),
\end{align}
 for any location $x_A$ between the initial and final locations $x_{\rm I}$ and $x_{\rm F}$, i.e., $x_{\rm I}\le x_A\le x_{\rm F}$.
\end{proposition}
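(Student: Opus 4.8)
The plan is to argue by contradiction, after first rewriting the achievable region of a fixed trajectory in a convenient form. For any feasible trajectory $\Q=\{x(t)\}$, optimizing the power allocation gives $\bigcup_{\pow}\C(\Q,\pow)=\frac1T\int_0^T\C_f(x(t))\,{\rm{d}}t$, i.e. the time-average (Aumann integral) of the instantaneous fixed-location regions: the only coupling across time in \eqref{MAC:ineq1}--\eqref{MAC:ineq3} is the per-instant power budget \eqref{UAVpower}, and by the uplink--downlink duality already used to write \eqref{MAC:ineq1}--\eqref{MAC:ineq3} the instantaneous rate-pair at time $t$ can be taken to be any point of $\C_f(x(t))$. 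Equivalently, for any $\mathbf{w}=(w_1,w_2)\succeq\mathbf{0}$ the support function of this region equals $\frac1T\int_0^T g_{\mathbf{w}}(x(t))\,{\rm{d}}t$, where $g_{\mathbf{w}}(x)\triangleq\max_{(r_1,r_2)\in\C_f(x)}(w_1r_1+w_2r_2)$ is the weighted sum-rate achievable by hovering at $x$; $g_{\mathbf{w}}(\cdot)$ is continuous and depends only on $(h_1(x),h_2(x))$, with $h_1$ decreasing and $h_2$ increasing on $[-D/2,D/2]$. In particular $\C(\Q^*)$ is convex, so the attained Pareto point $(r_1^*,r_2^*)=(\alpha_1 r^*,\alpha_2 r^*)$ lies on its boundary and admits a supporting hyperplane.

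Next, suppose the claim fails for the optimal unidirectional trajectory $\{x^*(t)\}$ (Lemmas~\ref{lemma1} and~\ref{lem:flydirection}): there exist $x_A\in(x_{\rm I},x_{\rm F})$ and $\mathbf{r}_A\in\C_f(x_A)$ with $\mathbf{r}_A\notin{\rm{Conv}}\!\big(\C_f(x_{\rm I})\cup\C_f(x_{\rm F})\big)$. Applying the separating-hyperplane theorem to this convex set and the point $\mathbf{r}_A$ produces a nonzero $\mathbf{w}\succeq\mathbf{0}$ (nonnegativity follows from the monotonicity of $\C_f(\cdot)$, i.e. its closure under componentwise decrease) with $g_{\mathbf{w}}(x_A)>\max\{g_{\mathbf{w}}(x_{\rm I}),g_{\mathbf{w}}(x_{\rm F})\}$. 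I would then exploit that the continuous unidirectional path $\{x^*(t)\}$ physically traverses $x_A$: construct a modified feasible trajectory $\{\hat x(t)\}$ by reclaiming a short hovering interval of length $\Delta$ from one endpoint and re-spending it as an extra hover at $x_A$ (this only inserts hovering, so \eqref{P1:speed} and the duration $T$ are preserved), and show that its achievable region contains a rate-pair that componentwise dominates $(r_1^*,r_2^*)$, equivalently a feasible point of (P1) with strictly larger $r$, contradicting optimality. The bookkeeping is controlled by the support-function identity: replacing $\Delta$ worth of hovering at an endpoint $x_{\rm e}$ with hovering at $x_A$ changes the support function in any direction $\mathbf{v}$ by $\frac{\Delta}{T}\big(g_{\mathbf{v}}(x_A)-g_{\mathbf{v}}(x_{\rm e})\big)$, which is positive precisely along the directions in which $x_A$ dominates.

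The step I expect to be the main obstacle is reconciling the separating direction $\mathbf{w}$ (along which $x_A$ beats both endpoints) with the supporting direction of $\C(\Q^*)$ at $(r_1^*,r_2^*)$: the perturbation enlarges the region along $\mathbf{w}$ but shrinks it in directions where an endpoint dominates, so a naive swap need not push the rate-profile ray outward. To close this I would either (i) select which endpoint to shorten, and by how much, from the local geometry of $\partial\C(\Q^*)$ near $(r_1^*,r_2^*)$ so that the upper-right boundary moves strictly outward throughout a neighbourhood of that point; or (ii) avoid $\mathbf{w}$ entirely and argue constructively, using the closed form of the degraded-Gaussian-BC region $\C_f(x)$ and the monotonicity of $h_1,h_2$ to verify directly that the HFH-type trajectory which hovers at $x_{\rm I}$, at $x_A$, and at $x_{\rm F}$ and flies the two legs at maximum speed has $\bigcup_{\pow}\C(\hat\Q,\pow)$ strictly larger, in the rate-profile direction, than $\C(\Q^*)$; the point being that the segments of $\{x^*(t)\}$ on $[x_{\rm I},x_A]$ and $[x_A,x_{\rm F}]$ may be replaced by straight maximum-speed legs plus reclaimed hover time without exceeding $T$. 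Route (ii) is the one I would pursue, since it stays self-contained and dovetails with the HFH-optimality argument that follows.
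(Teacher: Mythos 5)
Your overall strategy coincides with the paper's: argue by contradiction, use the fact that the continuous unidirectional trajectory passes through $x_A$, and reallocate an infinitesimal amount of time from the neighbourhoods of the endpoints to a hover at $x_A$. The Aumann-integral/support-function framing of $\bigcup_{\pow}\C(\Q,\pow)$ is a clean way to set this up and is consistent with how the paper decomposes the achieved rate-pair in \eqref{eq:adx556}--\eqref{eq:adx557}, and the separating-hyperplane step yielding a nonzero $\mathbf{w}\succeq\mathbf{0}$ with $g_{\mathbf{w}}(x_A)>\max\{g_{\mathbf{w}}(x_{\rm I}),g_{\mathbf{w}}(x_{\rm F})\}$ is correct.

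However, there is a genuine gap, and it is exactly the one you flag yourself: you never establish that the perturbed trajectory improves the objective of (P1), i.e.\ produces a rate-pair componentwise dominating $(r_1^*,r_2^*)$ (or dominating it along the rate-profile ray). Enlarging the achievable region in the separating direction $\mathbf{w}$ says nothing about the boundary point in the direction $(\alpha_1,\alpha_2)$, and neither of your proposed repairs is carried out: route (i) is a statement of intent about ``local geometry,'' and route (ii) (``verify directly using the closed form'') is precisely where all of the work lies. The paper spends the second half of Appendix~C plus the whole of Appendix~F on this step: Lemma~\ref{lem_convhull} converts $(r_1^{\Aaa},r_2^{\Aaa})\notin{\rm{Conv}}\big(\C_f(x_{\rm I})\bigcup\C_f(x_{\rm F})\big)$ into the explicit inequality \eqref{eq27} that the violating point lies strictly above the common tangent line; this rests on the nontrivial Lemmas~\ref{lem5} and~\ref{lem:inequality} (uniqueness of the intersection of the two fixed-location boundary curves, and the ordering of the two tangent points around that intersection); the decomposition \eqref{eq:adx556} identifies the endpoint contributions to $(r_1^*,r_2^*)$ with a convex combination of those tangent points; and a three-case analysis on $r_1^{\Aaa}$ then exhibits an explicit convex combination of $(r^{\I}_1,r^{\I}_2)$, $(r^{\F}_1,r^{\F}_2)$, $(r^{\Aaa}_1,r^{\Aaa}_2)$ satisfying \eqref{eq:59:xj}. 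None of this appears in your proposal, so as written it is an outline of the easy half of the argument together with an accurate diagnosis of the hard half; executing your route (ii) would in effect require reproving Lemmas~\ref{lem_convhull}--\ref{lem:inequality}. A smaller point: you speak of ``reclaiming a short hovering interval from one endpoint,'' but at this stage the optimal trajectory is not yet known to hover at its endpoints; like the paper, you must work with short initial and final time windows on which $x^*(t)$ is only approximately equal to $x_{\rm I}$ and $x_{\rm F}$.
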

\begin{proof}
Please refer to Appendix C. %\ref{apdx:B}.
\end{proof}
\begin{figure}[!t]
\centering
\includegraphics[width=0.35\textwidth]{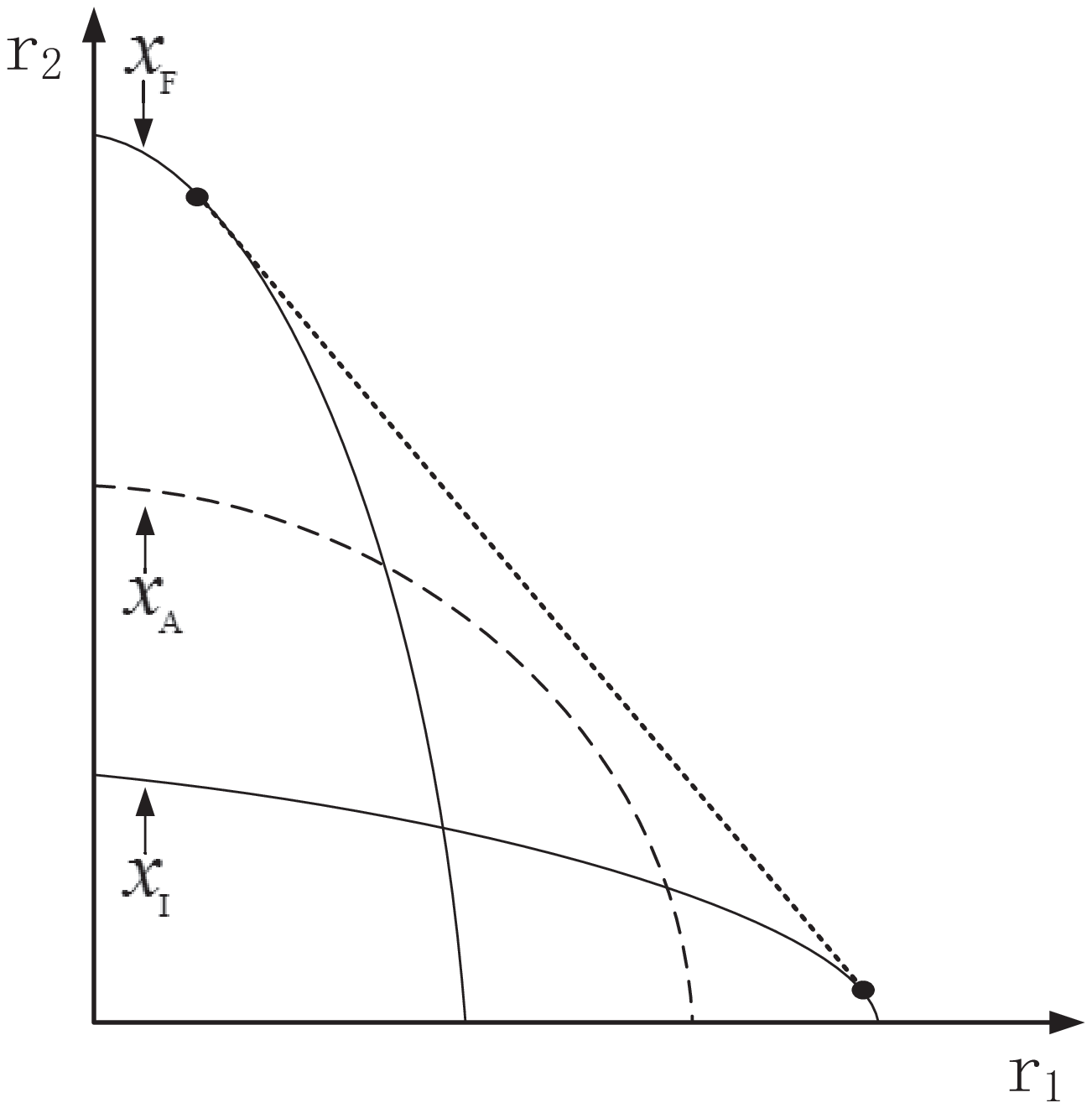}
\caption{Illustration of  Proposition \ref{trj_lemma} where $\C_f(x_A) \subseteq {\rm{Conv}}\big(\C_f(x_{\rm I})\bigcup \C_f(x_{\rm F}) \big)$ for any location $x_A$ between $x_{\rm I}$ and $x_{\rm F}$.
} \label{region:infty0}\vspace{-0.7cm}
\end{figure}

Proposition \ref{trj_lemma} essentially implies  there always exists  a rate-pair  in the boundary of the convex hull, $\C_f(x_A) \subseteq {\rm{Conv}}\big(\C_f(x_{\rm I})\bigcup \C_f(x_{\rm F}) \big)$,  which is componentwise no smaller than any given rate-pair in the fixed-location capacity region at location  $x_A$ (i.e., $\C_f(x_A)$), as illustrated by Fig. \ref{region:infty0}.
%Proposition \ref{trj_lemma} reveals that for an optimal UAV trajectory, the initial and final locations must be superior to any other locations between them,
%,  in the sense that  the convex hull of their fixed-location capacity regions can always provide componentwise non-smaller rate-pairs.
%why the UAV should fly to different locations instead of hovering at a fixed location, for achieving some boundary points on the capacity region.
%in the sense that   performing hovering time-sharing between these two locations along with optimal power allocation is able to achieve a componentwise non-smaller rate-pair than hovering at any other locations. Such a conclusion also coincides well with the result in Proposition   \ref{prop1}, since in that case, all the achievable rate-pairs at locations within $[-D/2, D/2]$ are always dominated by the time-sharing (convex combinations) of the rate-pairs $\left( \log_2\left(1+ \frac{\bar P\beta_0}{H^2}\right),0\right)$ and $\left ( 0, \log_2\left(1+ \frac{\bar P\beta_0}{H^2}\right) \right)$ at initial location $x_{\rm I}=-D/2$ and final location $x_{\rm F}=D/2$, respectively.
Based on Proposition \ref{trj_lemma},  the optimal UAV trajectory to problem (P1) is obtained as follows.
%we can show that the UAV will fly at the maximum speed between any two different locations if the optimal UAV trajectory contains different locations, as in Theorem \ref{prop4}.

%This is because otherwise, we can always save the time spent on travelling and then perform time-sharing between two endpoints, which achieves a componentwise larger rate-pair.

%Further, if we allow time sharing between different successive decoding orders, then the set of feasible power vectors is enlarged to the convex hull of (18)
\begin{theorem}\label{prop4}
For problem (P1) with given $T>0$ and $V>0$, the HFH trajectory in \eqref{optimal:trj} is optimal.
\end{theorem}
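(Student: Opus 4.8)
The plan is to show that any optimal trajectory can be replaced, without loss of optimality, by an HFH trajectory whose two hovering locations are the initial and final points $x_{\rm I}$ and $x_{\rm F}$ of a unidirectional optimal trajectory. By Lemmas \ref{lemma1} and \ref{lem:flydirection} we may start from a unidirectional optimal trajectory $\{x^*(t)\}$ contained in $[-D/2,D/2]$ with $x^*(0)=x_{\rm I}$ and $x^*(T)=x_{\rm F}$, and with optimal power allocation $\{p_k^*(t)\}$. First I would record the rate-pair $(r_1^*,r_2^*)$ achieved, observing that by \eqref{MAC:ineq1}--\eqref{MAC:ineq3} this rate-pair is an average, over $t\in\T$, of instantaneous rate-pairs each lying in the fixed-location MAC region $\C_f(x^*(t))$; concretely, $(r_1^*,r_2^*)=\frac{1}{T}\int_0^T (\rho_1(t),\rho_2(t))\,{\rm d}t$ for some measurable selection $(\rho_1(t),\rho_2(t))\in\C_f(x^*(t))$ for a.e.\ $t$. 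Hence $(r_1^*,r_2^*)$ lies in the convex hull $\mathrm{Conv}\big(\bigcup_{t\in\T}\C_f(x^*(t))\big)$.

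Next I would invoke Proposition \ref{trj_lemma}: since every $x^*(t)$ satisfies $x_{\rm I}\le x^*(t)\le x_{\rm F}$, we have $\C_f(x^*(t))\subseteq \mathrm{Conv}\big(\C_f(x_{\rm I})\cup\C_f(x_{\rm F})\big)$ for all $t$, and therefore $(r_1^*,r_2^*)\in \mathrm{Conv}\big(\C_f(x_{\rm I})\cup\C_f(x_{\rm F})\big)$. By Carath\'eodory's theorem in $\mathbb{R}^2$ (or, more elementarily, since $\C_f(x_{\rm I})$ and $\C_f(x_{\rm F})$ are themselves convex, a two-point convex combination suffices), there exist $\lambda\in[0,1]$ and rate-pairs $\mathbf{a}\in\C_f(x_{\rm I})$, $\mathbf{b}\in\C_f(x_{\rm F})$ with $(r_1^*,r_2^*)=\lambda\mathbf{a}+(1-\lambda)\mathbf{b}$. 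Then I would construct the candidate HFH trajectory of the form \eqref{optimal:trj}: it hovers at $x_{\rm I}$ for time $t_{\rm I}=\lambda(T-\frac{x_{\rm F}-x_{\rm I}}{V})$, flies at speed $V$ from $x_{\rm I}$ to $x_{\rm F}$ over the interval $\T_2$ of length $\frac{x_{\rm F}-x_{\rm I}}{V}$, and hovers at $x_{\rm F}$ for time $t_{\rm F}=(1-\lambda)(T-\frac{x_{\rm F}-x_{\rm I}}{V})$; during $\T_1$ it uses the (constant) power split achieving $\mathbf{a}$ at $x_{\rm I}$, during $\T_3$ the split achieving $\mathbf{b}$ at $x_{\rm F}$, and during the (vanishingly short, relative to optimization) transit interval it may transmit arbitrarily or not at all.

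The point to check is then a comparison of rate-pairs. The HFH trajectory just built achieves, via \eqref{MAC:ineq1}--\eqref{MAC:ineq3} with a TDMA/time-sharing split between $\T_1$ and $\T_3$, a rate-pair equal to $\frac{t_{\rm I}}{T}\mathbf{a}+\frac{t_{\rm F}}{T}\mathbf{b}=\big(1-\frac{x_{\rm F}-x_{\rm I}}{VT}\big)\big(\lambda\mathbf{a}+(1-\lambda)\mathbf{b}\big)=\big(1-\frac{x_{\rm F}-x_{\rm I}}{VT}\big)(r_1^*,r_2^*)$, which is slightly componentwise smaller than $(r_1^*,r_2^*)$ because of the flight overhead. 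To remove this gap and obtain \emph{exact} optimality rather than asymptotic optimality, I would argue more carefully that the transit segment is not wasted: at each $t\in\T_2$ the location $x^*$-style argument applies and we can still transmit; more importantly, I would instead run the construction directly on the optimal trajectory's own time-occupation measure. That is, let $\mu$ be the pushforward under $t\mapsto x^*(t)$ of normalized Lebesgue measure on $\T$; then $(r_1^*,r_2^*)\in\mathrm{Conv}\big(\bigcup_{x\in\mathrm{supp}\,\mu}\C_f(x)\big)$ with the flight time already fully accounted for inside $\mu$, and the HFH reconstruction re-allocates exactly the same total ``hover-equivalent'' time mass to the two extreme locations, with the mandatory transit time $\frac{x_{\rm F}-x_{\rm I}}{V}$ being no larger than the time the original trajectory already spent moving between $x_{\rm I}$ and $x_{\rm F}$. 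I expect this bookkeeping — proving that the reconstructed HFH trajectory loses nothing, i.e.\ that the time budget genuinely closes and the achieved rate-pair is $\succeq(r_1^*,r_2^*)$ and hence equal by optimality — to be the main obstacle; everything upstream is a direct application of Proposition \ref{trj_lemma} and convexity of the fixed-location regions. Once the achieved rate-pair is shown to be no smaller than $(r_1^*,r_2^*)$, optimality of $(r_1^*,r_2^*)$ forces equality, so the HFH trajectory is optimal for (P1).
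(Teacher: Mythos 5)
Your overall strategy is the right one and matches the paper's in spirit --- both proofs rest entirely on Proposition \ref{trj_lemma} to push interior time mass out to the two endpoints --- but as written your argument does not close, and you correctly flag the place where it fails. Your explicit construction (hover at $x_{\rm I}$ for $\lambda(T-\frac{x_{\rm F}-x_{\rm I}}{V})$, at $x_{\rm F}$ for the complement) achieves only $\bigl(1-\frac{x_{\rm F}-x_{\rm I}}{VT}\bigr)(r_1^*,r_2^*)$, which is \emph{strictly} componentwise smaller whenever $x_{\rm I}<x_{\rm F}$; that establishes asymptotic optimality in the style of Theorem \ref{achievingTDMA}, not the exact optimality claimed here. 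The reason the gap appears is that you first collapse the \emph{entire} rate integral into a single point of ${\rm Conv}\bigl(\C_f(x_{\rm I})\cup\C_f(x_{\rm F})\bigr)$ and then try to realize that point using only the hovering time, discarding whatever the HFH trajectory earns in transit. The pushforward-measure sketch at the end is the right instinct but is not carried out, and it is precisely the step that needs a proof.

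The paper closes this by a local perturbation rather than a global reconstruction: if the optimal trajectory moves at speed $V^{\star}<V$ (or hovers) over an infinitesimal interval around an interior point $x_A$, speed it up to $V$ there; the time saved is $\triangle t=\delta_d(\frac{1}{V^{\star}}-\frac{1}{V})>0$, and only the rate earned during this \emph{excess} time at $x_A$ must be replaced by hovering at $x_{\rm I}$ and $x_{\rm F}$, which Proposition \ref{trj_lemma} guarantees can be done without componentwise loss. The mandatory $\mathrm{d}x/V$ of transit time through each interior location is common to both the original and the modified trajectory, so it never enters the comparison --- this is exactly the cancellation your global bookkeeping would need to exhibit. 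To repair your version, decompose the original trajectory's occupation measure at interior points into the minimum-transit part (density $\frac{1}{VT}\,\mathrm{d}x$ on $[x_{\rm I},x_{\rm F}]$, identical for the HFH trajectory) plus a nonnegative excess; apply Proposition \ref{trj_lemma} only to the excess. Without that decomposition the argument proves a weaker statement than the theorem.
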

\begin{proof}
Please refer to Appendix D. %\ref{apdx:prop4}.
\end{proof}

%One simple example for the optimality of this case is when the  achievable rate of one of the two GUs is zero, e.g., $(r^*_1,0)$, and accordingly  the UAV only needs to hover above  GU 1 for the entire duration. On the other hand, when $ x_{\rm I} \neq x_{\rm F}$, the trajectory in \eqref{optimal:trj} suggests that the UAV first hovers at the location $x=x_{\rm I}$ for $t \in \T_1$; then flies from $x=x_{\rm I}$ to $x=x_{\rm F}$ with the maximum speed for $t \in \T_2$; finally hovers at the location $x=x_{\rm F}$ for $t \in \T_3$. % Note that the optimal trajectory for achieving all the boundary points $\C (V,+\infty,\bar P)$ in Section III is a special case of that in \eqref{optimal:trj} by setting $x_{\rm I}=-D/2$ and $x_{\rm F}=D/2$, respectively.
%In any case, we can obtain a UAV trajectory $\{x(t)\}$ from \eqref{optimal:trj} with any given hovering locations $x_{\rm I}$ and $x_{\rm F}$, and hovering time $\hat{t}$. \delta\sigma

Based on Theorem \ref{prop4}, the optimal UAV trajectory $\{x^*(t)\}$ is determined only by the initial and final hovering locations $x_{\rm I}$ and $x_{\rm F}$ as well as  the hovering time $t_{\rm I}$ at location $x_{\rm I}$. Accordingly, the optimal hovering time  $t_{\rm F}$ can be obtained as $t_{\rm F}=T-t_{\rm I}  - \frac{x_{\rm F}-x_{\rm I}}{V}$. Therefore, we can solve problem (P1) by first optimizing the power allocation under any given UAV trajectory, and then searching over the three variables $x_{\rm I}$, $x_{\rm F}$, and $t_{\rm I}$  to obtain the optimal UAV trajectory for any given $(\alpha_1, \alpha_2)$.
%the two hovering locations $x_1$ and $x_2$ as well as the hovering duration $\hat{t}$ are design variables, which can be found by a three-dimensional search.
Specifically, based on a fixed HFH UAV trajectory $\{x(t)\}$, (P1) is reduced to the following problem,
%by only three parameters, i.e., the initial and final locations $x[1]\in  [-\frac{D}{2}~ \frac{D}{2}]$ and $x[N]\in  [-\frac{D}{2}~ \frac{D}{2}]$ and the additional time that the UAV remains stationary at $x[1]$, denoted by $\tau_1\in [0 ~\max(0, T-\tau^{\star}) ]$ where $\tau^{\star}=\frac{|x[N]-x[1]|}{V}$. Thus, the additional time that the UAV remains stationary at $x[N]$ is given by  $\tau_N =\max(0, T-\tau_1-t^{\star})$.
%As a result, the optimal UAV trajectory $\{\Q^*, \pow^*\}$  can be obtained by performing a three-dimensional search regarding to $x[1]$, $x[N]$, and $\tau_1$ with the following optimization problem,
\begin{align}
\text{(P5)}: ~~\max_{\{r,r_1,r_2, \pow \}} & r \\
\mathrm{s.t.}~&  r_k \ge \alpha_k r, \forall\, k\in\{1,2\},\label{eq:99}\\
&(r_1,r_2) \in \mathcal{C}(\Q, \pow ),\\
&p_{1}(t)+ p_{2}(t)\leq \bar P, \forall\,n,\\
&p_k(t) \geq 0, \forall\, t\in \T,  k\in\{1,2\}.
%p_{1}[n]+ p_{2}[n]\leq \bar P, \forall\,n,
%&p_1(t) + p_2(t) \le \bar P, \forall\,t,\\
%&|\dot{x}(t)| \le V, \forall\, t,
\end{align}
Note that (P5) is a convex optimization problem and thus can be solved efficiently by applying the well-established polymatroid structure and the Lagrange duality method \cite{tse1998multiaccess,Boyd}. Therefore, the optimal rate-pair $(r_1^*, r_2^*)$ corresponding to rate-profile $(\alpha_1,\alpha_2)$ can be found by applying a three-dimensional search on $x_{\rm I}$, $x_{\rm F}$, and $t_{\rm I}$, and selecting their  values to maximize $r$ in (P5). The details are omitted here due to the space limitation. Notice that in this case, SC-based transmission is generally required.
\vspace{-0.5cm}
%
%\begin{remark}
%Theorem \ref{prop4} generalizes the UAV trajectory results for $T\to \infty$ and $V=0$. Specifically, each boundary point of $\C (V,\infty,\bar P)$ ($V>0$) corresponds to  two different locations ($x_{\rm I}=-D/2$ and $x_{\rm F}=D/2$) except when $r^*_1=0$ or $r^*_2=0$; in contrast, each boundary point of $\C (0,T,\bar P)$ corresponds to one location ($x_{\rm I}=x_{\rm F}$).
%\end{remark}
\begin{figure*}[!t]
\begin{minipage}[t]{0.50\linewidth}
\centering
\includegraphics[width=3.2in, height=2.5in]{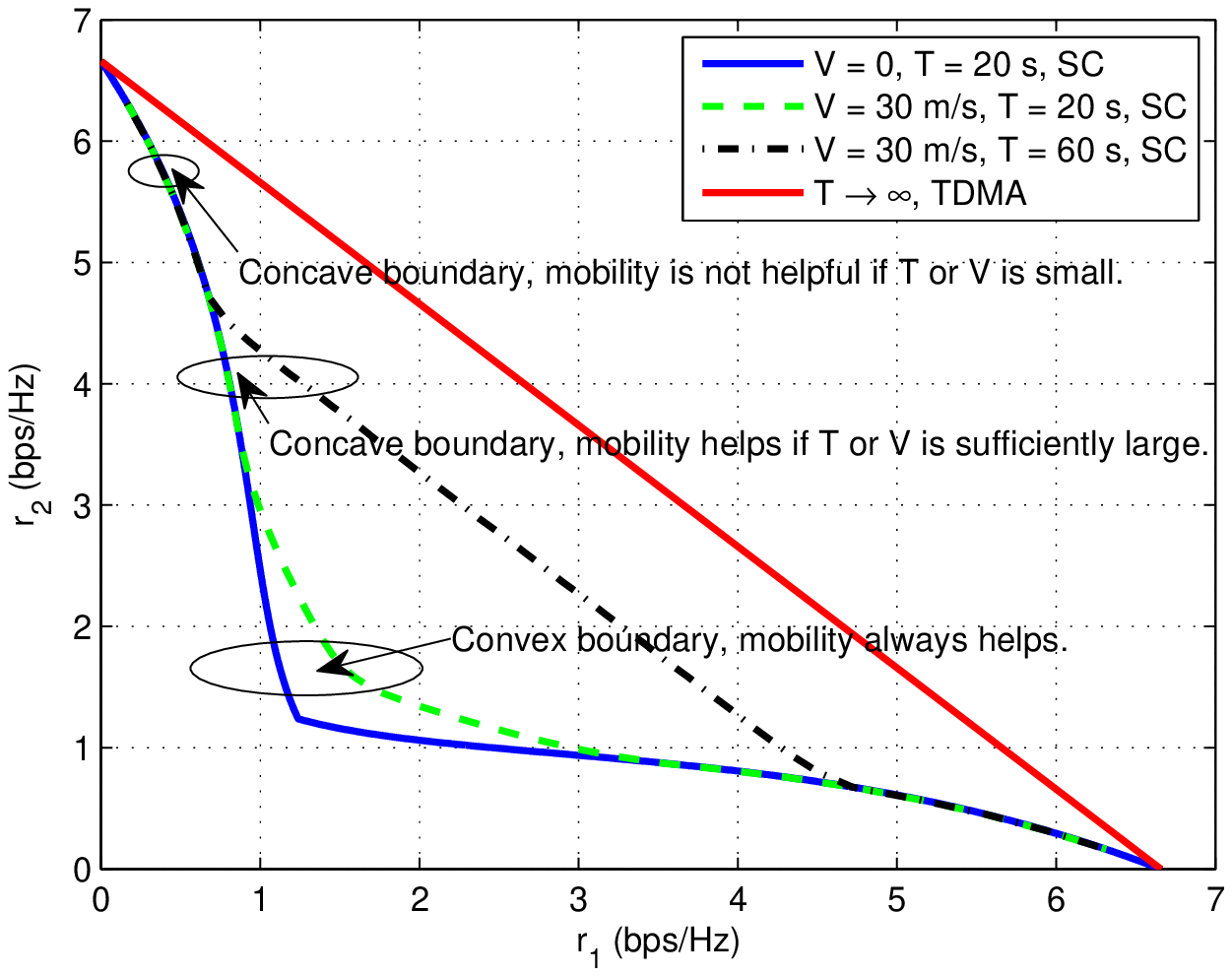}  %2.1
\caption{The capacity region of a UAV-enabled two-user BC with $\bar P = 10$ dBm.}
\label{region:CDMA:mobile}
\end{minipage}%
~~
\begin{minipage}[t]{0.50\linewidth}
\centering
\includegraphics[width=3.2in, height=2.5in]{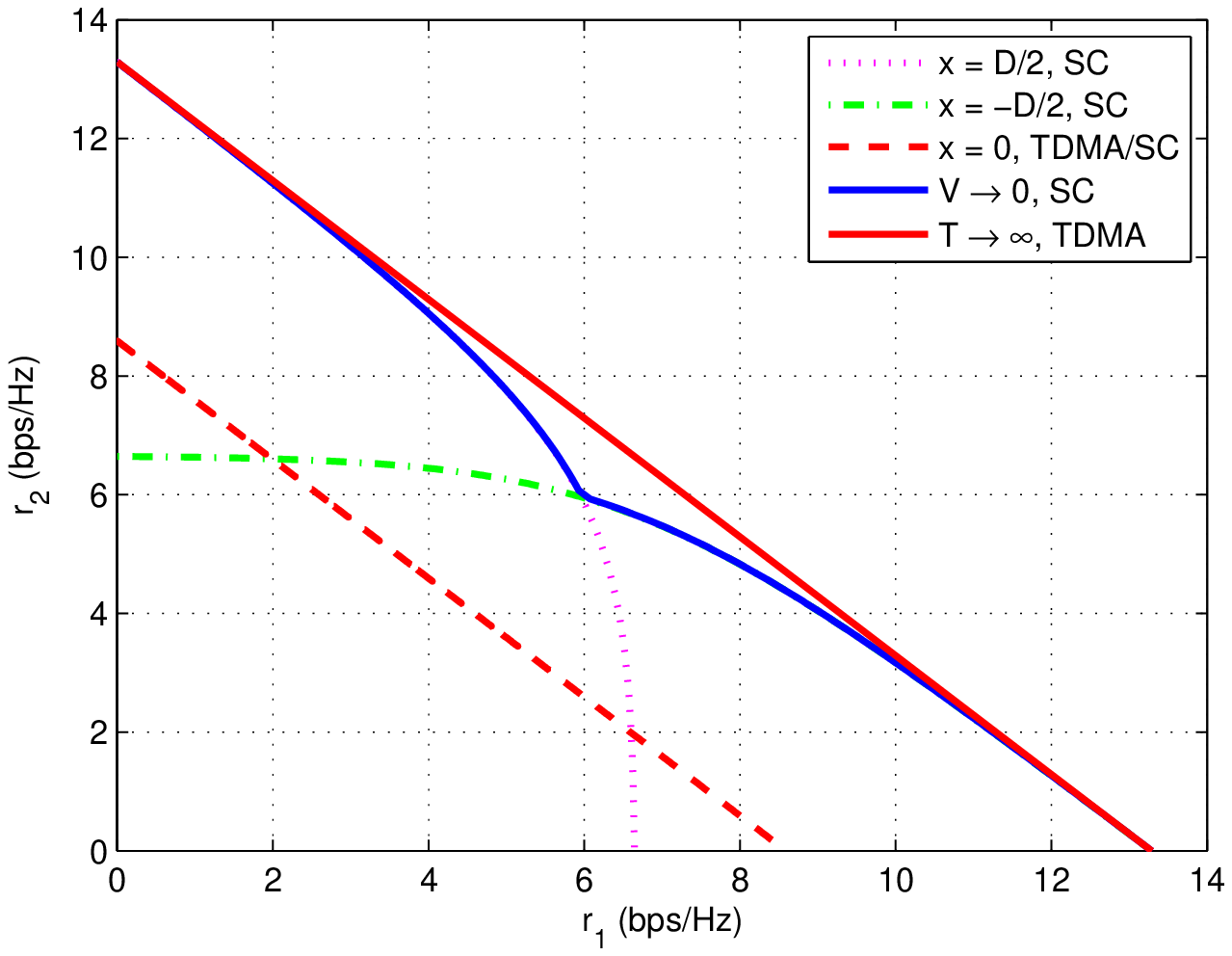}
\caption{The capacity region of a UAV-enabled two-user BC with $\bar P = 30$ dBm. }
\label{region:highSNR}
\end{minipage}\vspace{-0.5cm}
\end{figure*}%\vspace{-0.25cm}

\subsection{Numerical Results}
 In Fig. \ref{region:CDMA:mobile},  the capacity region $\C (V,T,\bar P)$ for finite UAV maximum speed and flight duration is shown under different setups.  % as shown in Corollary \ref{corollary1}
 The parameters are same as those for Figs. \ref{region:infty} and \ref{region:zero}. It is interesting to observe that although $\C (V,T,\bar P)$ is generally a non-convex set, its  boundary has a general  \emph{concave-convex-concave} shape when $V$ and $T$ are finite. %  as shown in Fig. \ref{region:CDMA:mobile}.
 %can be partially convex and concave as shown in Fig. \ref{region:CDMA:mobile}.
  This observation helps explain whether the UAV movement is able to enlarge the capacity region or not. Specifically, when $V=0$, the boundary is convex for  $r_2 \in [1, 3]$ bps/Hz, which implies that the convex combination of any two boundary points (rate-pairs) in this regime always achieves a componentwise larger rate-pair than any boundary points between them. As such, increasing the UAV maximum speed and/or flight duration enables the UAV to fly closer to each  GU and thus achieves higher rate-pairs, leading to an enlarged capacity region. By contrast, the boundary is concave for  $r_2 \in [3.5,  6]$ bps/Hz,  which means that the convex combination of any two boundary points within this regime will achieve a componentwise smaller rate-pair than any boundary points between them. This suggests that if $V$ and $T$ are small such that the UAV can only fly locally among these locations,
   it is not desirable  for the UAV to move in terms of achieving a componentwise larger rate-pair. This is in fact the reason why in Fig. \ref{region:CDMA:mobile}, when $r_2 \in [3.5,  6]$ bps/Hz,  the boundary  for $V=30$ m/s and $T=20$ s remains the same as that for $V=0$. However, when the duration $T$ is further increased from $20$ s to $60$ s, it is observed that the boundary for $r_2 \in [3.5, 4.5]$ bps/Hz shifts towards the upper-right direction, which means that the UAV movement becomes helpful. This is because with sufficiently large $T$, the UAV is able to fly over its nearby locations to reach some superior locations and hence can achieve a componentwise larger rate-pair. In fact, with any given $V>0$, as long as $T$ is sufficiently large, the UAV movement is always beneficial to enlarge the capacity region.

% \begin{algorithm}[t]
%\caption{ Algorithm for solving problem (P3).}\label{Algo:succ}
%\begin{algorithmic}[1]
%\STATE Initialize $\Q^0$ and $\mathbf{P}^0$. Let $r=0$.
%\REPEAT
%\STATE Solve problem (\ref{probm25}) for  given $\{\Q^r, \mathbf{P}^r\}$, and denote the optimal solution as $\{\mathbf{A}^{r+1}\}$.
%\STATE {Solve problem (\ref{probm41}) for given $\{ \mathbf{A}^{r+1},\Q^r, \mathbf{P}^r\}$, and denote the optimal solution as $\{\Q^{r+1}\}$.}
%\STATE Solve problem (\ref{probm50}) for given $\{\mathbf{A}^{r+1}, \Q^{r+1}, \mathbf{P}^r\}$, and denote the optimal solution as $\{\mathbf{P}^{r+1}\}$.
%\STATE Update $r=r+1$.
%\UNTIL{  The fractional increase of the objective value  is below a threshold $\epsilon>0$.}
%\end{algorithmic}
%\end{algorithm}

%\begin{figure}[!t]
%\centering
%\includegraphics[width=0.4\textwidth]{CDMA_mobile2.eps}%CDMA_mobile2.eps
%\caption{The capacity region of a UAV-enabled two-user BC for general case $V$ with $\bar P = 10$ dBm.} \label{region:CDMA:mobile}
%\end{figure}

\vspace{-0.25cm}
\subsection{High SNR Case}
Lastly, we consider the  asymptotically high SNR case with $\bar{P}\rightarrow \infty$ such that  $\frac{\bar P \beta_0}{ D^2 + H^2 }\gg1$ can be assumed, to provide more insights. This assumption means that if the UAV is placed above one (near)  GU, the SNR of the other (far)  GU (with maximum UAV-GU distance $\sqrt{D^2+H^2}$) is still sufficiently large when $\bar P$ is used.%\footnote{Note that $\bar P \rightarrow\infty$ may not satisfy the assumption here if $D\rightarrow \infty$ and/or $H\rightarrow \infty$. }
 \begin{theorem}\label{prop5}
Under the assumption of $\frac{\bar P \beta_0}{ D^2 + H^2 }\gg1$,    the optimal HFH UAV trajectory  to (P1) is simplified to  $x^*(t) = D/2,\forall\, t\in \T$ if $\alpha_2 \ge \alpha_1$; and   $x^*(t) = -D/2, \forall\, t\in \T$  if $\alpha_2 < \alpha_1$.
% For high SNR regime, the UAV hovers at $x=\frac{D}{2}$ and $x=-\frac{D}{2}$ for boundary points satisfying $r^*_2\geq r^*_1$ and $r^*_2\leq r^*_1$, respectively.
Accordingly, the capacity region  is given by
  \begin{align}\label{eq:highSNR}
\mathcal{C}_{\rm h-SNR}(V, T, \bar P) =  \bigg \{ (r_1,r_2) : & ~ r_1+r_2\leq \log_2\left(\frac{\bar P \beta_0}{ H^2 }\right), r_1\geq 0,  r_2\geq 0\bigg\}.
 \end{align}
 \end{theorem}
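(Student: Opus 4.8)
The plan is to leverage Theorem \ref{prop4}, which already establishes that an HFH trajectory is optimal, and to argue that in the high-SNR regime the two hovering locations both collapse to the endpoint nearer the higher-priority user. First I would fix $\alpha_2 \ge \alpha_1$ (the case $\alpha_2 < \alpha_1$ follows by the symmetry of Lemma \ref{lemma0}). The outer bound $r_1 + r_2 \le \log_2(1 + \bar P \beta_0 / H^2) \le \log_2(\bar P \beta_0 / H^2) + o(1)$ under $\bar P \to \infty$ comes immediately from \eqref{MAC:ineq3} together with Lemma \ref{lemma1}: since $x^*(t) \in [-D/2, D/2]$, we have $(x^*(t)-x_k)^2 + H^2 \ge H^2$, so the sum-rate integrand is at most $\log_2(1 + \bar P \beta_0 / H^2)$ pointwise. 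Hence $\mathcal C(V,T,\bar P) \subseteq \mathcal C_{\rm h\text{-}SNR}(V,T,\bar P)$.

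For achievability, I would show that hovering statically at $x = D/2$ for the entire duration $T$, with SC and interference cancellation at GU 2 (the nearer user, which then has the stronger channel $h_2 = \beta_0/H^2$ versus $h_1 = \beta_0/(D^2+H^2)$), achieves every boundary point of $\mathcal C_{\rm h\text{-}SNR}$ asymptotically. Concretely, with power split $(p_1, p_2) = (\bar P - p_2, p_2)$, GU 1 gets $r_1 = \log_2\!\big(1 + \tfrac{(\bar P - p_2) h_1}{p_2 h_1 + 1}\big)$ and GU 2 gets $r_2 = \log_2(1 + p_2 h_2)$. Sweeping $p_2$ from $0$ to $\bar P$ traces out the dominant face of $\mathcal C_f(D/2)$; I would check that as $\bar P \to \infty$ with $\tfrac{\bar P \beta_0}{D^2+H^2} \gg 1$, the sum $r_1 + r_2$ along this face tends to $\log_2(\bar P \beta_0 / H^2)$ uniformly — the key estimate being that $r_1 + r_2 = \log_2\!\big((p_2 h_1 + 1 + (\bar P - p_2)h_1)(1 + p_2 h_2)/(p_2 h_1 + 1)\big) = \log_2\!\big((1 + \bar P h_1)(1 + p_2 h_2)/(1 + p_2 h_1)\big)$, and since $h_1 \le h_2$ and $\bar P h_1 \gg 1$, this is $\log_2(\bar P h_2) + o(1) = \log_2(\bar P \beta_0/H^2) + o(1)$, independent of the split $p_2$. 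Thus the static point $x = D/2$ already attains the full triangular region $\mathcal C_{\rm h\text{-}SNR}$ in the limit, and no flying is needed.

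To upgrade "static $x=D/2$ suffices" to "the optimal HFH trajectory \emph{is} $x^*(t) \equiv D/2$" rather than merely "is as good as", I would invoke Proposition \ref{trj_lemma} in reverse: for $\alpha_2 \ge \alpha_1$, Proposition \ref{thm2} (extended to the HFH structure) forces the hovering locations into $[0, D/2]$, and among these $\mathcal C_f(D/2)$ is componentwise largest in the relevant corner of the rate region at high SNR — formally, I would show $\mathcal C_f(x) \subseteq \mathcal C_f(D/2)$ (in the asymptotic sense of containing the $o(1)$-neighborhood) for all $x \in [0, D/2]$ when $\bar P h_1 \gg 1$, since moving toward $D/2$ only increases $h_2$ while $h_1$ stays $\gg 1/\bar P$ so GU 1's rate is essentially unaffected. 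Then any HFH trajectory whose hovering locations differ from $D/2$ is weakly dominated, and the time spent flying is strictly wasted, so optimality pins $x^*(t) \equiv D/2$.

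The main obstacle I anticipate is making the "$o(1)$ uniform in $p_2$" claim rigorous while the power split itself may depend on $\bar P$ through the rate-profile constraint \eqref{eq:9} — one must confirm that the boundary point dictated by $(\alpha_1,\alpha_2)$ stays on the part of the face where the estimate $\bar P h_1 \gg 1$ translates cleanly into $r_1 = \Theta(\log \bar P)$ or $o(\log \bar P)$ without a degenerate regime. A secondary subtlety is the precise meaning of equality of regions in \eqref{eq:highSNR}: it is an asymptotic identity as $\bar P \to \infty$, so I would state the containment $\mathcal C(V,T,\bar P) \subseteq \mathcal C_{\rm h\text{-}SNR}$ exactly and the reverse containment up to vanishing error, which is the standard reading of such high-SNR capacity characterizations.
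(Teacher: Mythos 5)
Your overall route matches the paper's: the outer bound comes from \eqref{MAC:ineq3} with $h_k(x(t))\le \beta_0/H^2$ for $x(t)\in[-D/2,D/2]$, and achievability is argued by static hovering with SC and interference cancellation at the nearer user, with the monotonicity of $r_2$ in $x$ (and the near-independence of $r_1$ from $x$) pinning the hovering location to $D/2$. However, there is a genuine flaw in your key achievability estimate. You claim that
\begin{align*}
r_1+r_2=\log_2\!\left(\frac{(1+\bar P h_1)(1+p_2h_2)}{1+p_2h_1}\right)=\log_2(\bar P h_2)+o(1)
\end{align*}
holds ``independent of the split $p_2$,'' but this requires $p_2 h_1\gg 1$, not merely $\bar P h_1 \gg 1$. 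At $p_2=0$ the expression equals $\log_2(1+\bar P h_1)=\log_2\big(1+\bar P\beta_0/(D^2+H^2)\big)$, which falls short of $\log_2(\bar P\beta_0/H^2)$ by the constant $\log_2\big((D^2+H^2)/H^2\big)$, a gap that does not vanish as $\bar P\to\infty$. Consequently your assertion that the static point $x=D/2$ ``already attains the full triangular region'' fails near the corner $(r_1,r_2)=\big(\log_2(\bar P\beta_0/H^2),0\big)$ --- which is precisely why the theorem prescribes $x^*(t)=-D/2$ when $\alpha_1>\alpha_2$. The paper sidesteps this by invoking Lemma \ref{lemma0} to restrict attention to $r_2\ge r_1$ and proving achievability at $x=D/2$ only on that half of the boundary: there the rate constraint forces $p_2\ge\sqrt{\bar P/h_2}$, hence $p_2h_1\ge\sqrt{\bar P h_1}\sqrt{h_1/h_2}\gg1$, after which the approximations $r_1\approx\log_2(1+p_1/p_2)$ (independent of $x$) and $r_2\approx\log_2(p_2h_2(x))$ (increasing in $x$ on $[-D/2,D/2]$) give both the optimality of $x^*=D/2$ and the sum rate $\log_2(\bar P\beta_0/H^2)$. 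You correctly flag this as your ``main obstacle'' in the closing paragraph, but you do not close it; closing it amounts to doing exactly what the paper does, namely restricting to one half of the face by symmetry and verifying $p_2h_1\gg1$ there.
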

\begin{proof}
Please refer to Appendix E. %\ref{apdx:prop5}.
\end{proof}

%With Theorem \ref{prop5}, the capacity region for the high SNR scenario can be characterized by \eqref{eq:highSNR}.
 Similar to Proposition   \ref{prop1}, Theorem \ref{prop5}  shows the superiority of the UAV hovering locations right above the GUs. However, unlike $\mathcal{C}(V, T, \bar P)$,  the capacity region $\mathcal{C}_{\rm h-SNR}(V, T, \bar P)$ is independent of both UAV flight duration $T$ and maximum speed $V$, which suggests that the UAV movement is less effective to enlarge the capacity region as the SNR becomes large.

 In Fig. \ref{region:highSNR}, we plot the capacity region $\C (V,T,\bar P)$ for the same setup of Fig. \ref{region:CDMA:mobile} except that $\bar{P}$ is increased from $10$ dBm to  $30$ dBm. In this case, we have $\frac{\bar P \beta_0}{ D^2 + H^2 }\approx 100 \gg1$, i.e., the high SNR assumption for Theorem \ref{prop5} approximately holds. It is observed that when $V=0$, the UAV always hovers above the  GU that requires larger rate, e.g., $x=D/2$, for all rate-pairs satisfying $r^*_2\ge r^*_1$, and SC is needed.
Furthermore,  it is observed that hovering at the middle location, i.e., $x=0$, where the two GUs have equal channel gains,
 %and is intuitively expected as a promising candidate for realizing the equal rate-pair, i.e.,  $r^*_1=r^*_2$. Surprisingly, as shown in  Fig. \ref{region:highSNR},
 suffers a significant capacity loss in the high SNR regime even for maximizing the equal rate with $r^*_1=r^*_2$.
  Finally, it is observed that the capacity region improvement is very limited by increasing the UAV maximum speed and/or flight duration in this case since the gap between $\mathcal{C}(0, T, \bar P)$ and $\mathcal{C}(V, \infty, \bar P)$ is already very small, i.e., the gain achieved by exploiting the UAV movement is not appealing.
   %However, by comparing $\mathcal{C}(0, T, \bar P)$ with $\mathcal{C}_{f}(x)$ for any given $x$, it can be observed that the UAV location optimization still achieves significant capacity gain even for the high SNR setup.

%\begin{figure}[!t]
%\centering
%\includegraphics[width=0.4\textwidth]{highSNR.eps}
%\caption{The capacity region of a UAV-enabled two-user BC for high SNR regime, $\bar P = 30$ dBm. } \label{region:highSNR}
%\end{figure}
%\vspace{-0.3cm}
 \section{Achievable Rate Region with TDMA }\label{Section:TDMA}
In this section, we consider the UAV-enabled two-user BC with TDMA-based communication.
%\vspace{-0.5cm}
 \subsection{Achievable Rate Region Characterization }
 For TDMA, the UAV can communicate with at most one  GU at any time instant. Denote by $\pi_k(t)\in \{0,1\}, k\in\{1,2\}$ the binary variable which indicates that   GU $k$ is scheduled for communication at time instant $t$ if $\pi_k(t)=1$; otherwise, $\pi_k(t)=0$. Accordingly, the achievable average rate region with  TDMA is given by
 \begin{align}
\bigg\{(r_1, r_2):~& r_1 \le  \frac{1}{T}\int_{0}^T \pi_1(t) \log_2(1+ \bar Ph_1(x(t))) {\rm{d}}t,\label{TDMA:eq1} \\
 & r_2  \le \frac{1}{T}\int_{0}^T\pi_2(t) \log_2(1+ \bar Ph_2(x(t))){\rm{d}}t,\label{TDMA:eq2} \\
&\pi_1(t) +\pi_2(t)\leq 1, \forall\,t \in \T,\label{TDMA:eq3}\\
&\pi_k(t) \in \{0,1\},  \forall\, t\in \T,  k\in\{1,2\}\bigg\}. \label{TDMA:eq4}
 \end{align}
We denote the achievable rate region characterized by \eqref{TDMA:eq1} and \eqref{TDMA:eq2} subject to \eqref{TDMA:eq3} and \eqref{TDMA:eq4} as $\C_{\rm TD} (V,T,\bar P)$. Let  $ \mathbf{\Pi} \triangleq\{ \pi_k(t), t\in \T, k\in\{1,2\} \}$. Similarly as for problem (P1), we can apply the rate-profile approach to characterize $\C_{\rm TD} (V,T,\bar P)$ with rate-profile parameters $(\alpha_1,\alpha_2)$ and the optimization problem is formulated as
{\begin{align}
\text{(P6)}: ~~\max_{\{r,\Q, \mathbf{\Pi}\}} &~ r \\
\mathrm{s.t.}~~~~& \frac{1}{T}\int_{0}^T \pi_1(t) \log_2(1+ \bar Ph_1(x(t))) {\rm{d}}t \ge \alpha_1 r, \label{eq:900}\\
&\frac{1}{T}\int_{0}^T\pi_2(t) \log_2(1+ \bar Ph_2(x(t))){\rm{d}}t \ge \alpha_2 r,  \label{eq:1000}\\
%&\pi_1(t) +\pi_2(t)\leq 1, \forall\, t \in \T, \\
%&\pi_k(t) \in \{0,1\},  \forall\, t\in \T,  k\in\{1,2\}, \label{P4:eq:100}\\
&|\dot{x}(t)| \le V, \forall\, t \in \T, \label{P4:eqspeed}\\
& \eqref{TDMA:eq3}, \eqref{TDMA:eq4}.
\end{align}}Note that problem (P6) is a non-convex optimization problem since it involves binary variables in $ \mathbf{\Pi}$ and  the LHSs of \eqref{eq:900} and \eqref{eq:1000} are not  concave with respect to  $\Q$ even for given $ \mathbf{\Pi}$. Nevertheless, we show in the following proposition that the optimal UAV trajectory still follows the HFH structure as for (P1), except that the UAV only hovers at $x=-D/2$ and/or $x=D/2$.

\begin{proposition}\label{thm:TDMA}
The optimal UAV trajectory to problem  (P6) satisfies the HFH trajectory in \eqref{optimal:trj}. Furthermore, the following properties hold:  1) if $-D/2<x_{\rm I} \le x_{\rm F} < D/2$, then $t_{\rm I}=t_{\rm F}=0$ and $VT=x_{\rm F}-x_{\rm I}$; 2) if $-D/2= x_{\rm I}\le x_{\rm F} <D/2$, then $t_{\rm I}=T- \frac{x_{\rm F}-x_{\rm I}}{V}$ and $t_{\rm F}=0$;  3) if $-D/2<x_{\rm I}\le x_{\rm F} = D/2$, then $t_{\rm I}=0$ and $t_{\rm F}=T- \frac{x_{\rm F}-x_{\rm I}}{V}$; and 4) if $-D/2= x_{\rm I}< x_{\rm F} = D/2$, then $t_{\rm I}+t_{\rm F} +\frac{D}{V}=T$.
\end{proposition}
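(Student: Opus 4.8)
The plan is to reuse almost verbatim the structure developed for problem (P1), and then add one perturbation argument that is special to the TDMA case. First I would observe that Lemmas~\ref{lemma1} and \ref{lem:flydirection} carry over to (P6): the proof that moving the UAV onto the segment $[-D/2,D/2]$ simultaneously increases $h_1(x(t))$ and $h_2(x(t))$, and the unidirectionalization argument, use only the monotonicity of the channel gains and never the SC rate formulas, so we may assume $\{x^*(t)\}$ is unidirectional with $x^*(0)=x_{\rm I}\le x^*(T)=x_{\rm F}$ inside $[-D/2,D/2]$. Then the HFH \emph{form} follows exactly as in Theorem~\ref{prop4}: for a \emph{fixed} location $x$, the set of TDMA-achievable rate-pairs is the triangle with vertices $(0,0)$, $(\log_2(1+\bar Ph_1(x)),0)$ and $(0,\log_2(1+\bar Ph_2(x)))$; since $h_1$ is non-increasing and $h_2$ non-decreasing on $[-D/2,D/2]$, for any $x_A\in[x_{\rm I},x_{\rm F}]$ each vertex of the triangle at $x_A$ is dominated componentwise by the corresponding vertex of the triangle at $x_{\rm I}$ or at $x_{\rm F}$, so that triangle lies in ${\rm Conv}$ of the triangles at $x_{\rm I}$ and $x_{\rm F}$ (the exact analogue of Proposition~\ref{trj_lemma}). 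Hence the time spent strictly between $x_{\rm I}$ and $x_{\rm F}$ in excess of the unavoidable travelling time $(x_{\rm F}-x_{\rm I})/V$ can be reallocated to $x_{\rm I}$ and $x_{\rm F}$ without shrinking the rate region, giving the HFH trajectory \eqref{optimal:trj} together with the budget identity $t_{\rm I}+t_{\rm F}+(x_{\rm F}-x_{\rm I})/V=T$.

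Second I would normalize the scheduling. Writing $f_1(t)=\log_2(1+\bar Ph_1(x(t)))$ and $f_2(t)=\log_2(1+\bar Ph_2(x(t)))$ along the unidirectional trajectory, $f_1$ is non-increasing and $f_2$ non-decreasing. A standard exchange argument then shows that a Pareto-optimal rate-pair can be attained with $S_1=\{t:\pi_1(t)=1\}=[0,t_0]$ and $S_2=[t_0,T]$ for some $t_0$ (also eliminating idle time): if $S_1$ ever contained a late instant while omitting an earlier one, swapping the two instants does not decrease $\int_{S_1}f_1$ and does not increase $\int_{S_1}f_2$, i.e. does not decrease $r_1$ nor $r_2$. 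A further re-timing argument makes $t_{\rm I}\le t_0\le T-t_{\rm F}$: if GU~2 is served during the initial hover (so $t_0<t_{\rm I}$), shortening the hover at $x_{\rm I}$ to $t_0$ and lengthening the hover at $x_{\rm F}$ by $t_{\rm I}-t_0$ moves a slice of GU~2's service from $x_{\rm I}$ to $x_{\rm F}$ and, since $f_2(x_{\rm F})\ge f_2(x_{\rm I})$, weakly increases $r_2$ while leaving $r_1$ untouched; the mirror argument handles the final hover. So we may assume GU~1 is served throughout the initial hover and GU~2 throughout the final hover.

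Third comes the step that is genuinely new for TDMA. Suppose the optimal HFH solution has $t_{\rm I}>0$ and $x_{\rm I}>-D/2$. For small $\epsilon>0$ replace the initial hover by hovering at $x_{\rm I}-\epsilon$ for time $t_{\rm I}-\epsilon/V$ and then flying at speed $V$; the key observation is that this flight arrives at $x_{\rm F}$ at exactly the same instant $T-t_{\rm F}$ and the new trajectory coincides with the old one on $[t_{\rm I},T]$, while on $[0,t_{\rm I}]$ it is strictly closer to GU~1 on a set of positive measure and never farther. Because GU~1 is served on all of $[0,t_0]\supseteq[0,t_{\rm I}]$, this strictly increases $r_1$ and leaves $r_2$ unchanged, contradicting optimality (the resulting slack in $r_1\ge\alpha_1 r$ can be traded, by sliding $t_0$ leftward, into a strictly larger common $r$, using $f_1,f_2>0$). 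Hence $t_{\rm I}>0\Rightarrow x_{\rm I}=-D/2$, and the reflected perturbation gives $t_{\rm F}>0\Rightarrow x_{\rm F}=D/2$. Feeding these two implications into $t_{\rm I}+t_{\rm F}+(x_{\rm F}-x_{\rm I})/V=T$ yields precisely the four cases: $-D/2<x_{\rm I}\le x_{\rm F}<D/2$ forces $t_{\rm I}=t_{\rm F}=0$ and hence $VT=x_{\rm F}-x_{\rm I}$; $x_{\rm I}=-D/2\le x_{\rm F}<D/2$ forces $t_{\rm F}=0$, so $t_{\rm I}=T-(x_{\rm F}-x_{\rm I})/V$; the case $-D/2<x_{\rm I}\le x_{\rm F}=D/2$ is symmetric; and $x_{\rm I}=-D/2<x_{\rm F}=D/2$ leaves only $t_{\rm I}+t_{\rm F}+D/V=T$. (The boundary priorities $\alpha_1=0$ or $\alpha_2=0$ are immediate.)

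I expect the main obstacle to be the scheduling normalization and its compatibility with the perturbation, rather than the perturbation itself. The exchange argument must be stated carefully at the level of measurable schedules, and the re-timing must be organized so that after it one still has a legitimate HFH trajectory with $t_{\rm I}\le t_0$, since it is exactly this ordering that guarantees the leftward shift of $x_{\rm I}$ helps GU~1 (the user actually served during the initial hover) instead of hurting GU~2. Once the schedule is in this canonical ``GU~1 first, GU~2 last, switch inside the flight'' form, the ``flight catches up'' identity and the rebalancing of the objective are routine, so the real work is in that normalization.
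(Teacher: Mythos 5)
Your proposal is correct and follows essentially the same route as the paper: establish the HFH form by carrying over Lemmas~\ref{lemma1}--\ref{lem:flydirection} and the Proposition~\ref{trj_lemma}/Theorem~\ref{prop4} machinery (your explicit vertex-domination argument for the TDMA triangles is just the simplified form this takes under TDMA), and then rule out hovering at interior points by the same outward-perturbation/contradiction the paper uses (its $x_{\rm I}'=x_{\rm I}-V\Delta\tau_1$, $x_{\rm F}'=x_{\rm F}+V\Delta\tau_2$ construction is your $\epsilon$-shift). The only organizational difference is that you fold the ``serve GU~1 first, GU~2 last'' scheduling normalization into the proof, whereas the paper asserts it separately after the proposition; this makes your write-up somewhat more airtight but does not change the argument.
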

\begin{proof}
%First, similar to the proofs of  Lemmas \ref{lemma1} and \ref{lem:flydirection},  it is straightforward to show that the unidirectional UAV trajectory within the interval $[-D/2, D/2]$ is optimal to (P6).
First, similar to the proofs of  Lemmas \ref{lemma1} and \ref{lem:flydirection} and Theorem \ref{prop4},  it can be shown that  the optimal UAV trajectory to problem (P6) satisfies the HFH trajectory in \eqref{optimal:trj} with $x_{\rm I}$ and $x_{\rm F}\in [-D/2, D/2]$.
%Furthermore,  when the UAV is fixed at the location $x$,  we consider the achievable rate region with TDMA, denoted by $\C_{\rm TD}(x)$, for the UAV-enabled two-user BC, similar to $\C_f(x)$. As shown in  \cite{tse2005fundamentals}, $\C_{\rm TD}(x)$ is given by
%\begin{align}
% \C_{\rm TD}(x)=  \bigg \{ (r_1,r_2) : & ~ r_1\leq \tau\log_2\left(1+ \bar P h_1(x)\right), \nonumber\\
%                                 & ~r_2 \leq (1-\tau)\log_2\left(1+ \bar P h_2(x)\right), 0\le \tau \le 1, r_1\geq 0,  r_2\geq 0\bigg\}.
%\end{align}
%It can be shown that at the optimal UAV trajectory solution $\{x^*(t)\}$ to problem (P6), it must hold that
%%\begin{align}\label{eq:V1}
%$\C_{\rm TD}(x_A) \subsetneq {\rm{Conv}}\big(  \C_{\rm TD}(x_{\rm I}) \bigcup \C_{\rm TD}(x_{\rm F})\big)$ for $x_A\in(x_{\rm I}, x_{\rm F})$. Then, by a similar argument as in the proof of Theorem \ref{prop4}, it can be proved that the optimal UAV trajectory to problem  (P6) satisfies the HFH trajectory in \eqref{optimal:trj}.
%%\end{align}
%a unidirectional UAV trajectory within the interval $[-D/2, D/2]$ is optimal for (P6). Furthermore,  similar to Theorem \ref{prop4}, it
Therefore, we only need to prove the properties for the above four cases, by contradiction. For brevity, we only consider case 1) at below by showing that the UAV will neither hover at $x_{\rm I}$ nor $x_{\rm F}$, while the other three cases can be verified similarly.

For case 1), suppose that at the optimal UAV trajectory solution to  (P6),  the UAV flies from $x_{\rm I}$ to $x_{\rm F}$ and hovers at location $x_B=x_{\rm I}$ or  $x_B=x_{\rm F}$ for $t_B>0$ with durations $\tau_1>0$ and  $\tau_2>0$ assigned to GUs 1 and 2 for communication, respectively, where $\tau_1+\tau_2= t_B$.  Then, we can construct a new trajectory where the difference is that the UAV flies from $x_{\rm I}'$ to  $x_{\rm F}'$ at the maximum speed with  $x_{\rm I}'=x_{\rm I}-V\Delta\tau_1$ and  $x_{\rm F}'= x_{\rm F}+ V\Delta\tau_2$, where $\Delta\tau_1\in [0, \tau_1]$ and $\Delta\tau_2\in [0, \tau_2]$ are chosen such that $x_{\rm I} > x_{\rm I}'\ge -D/2$ and/or $x_{\rm F}< x_{\rm F}'\le D/2$. Since $x\in [x_{\rm I}', x_{\rm I}]<x_B$ and/or $x\in [x_{\rm F}, x_{\rm F}']>x_B$, it can be shown from \eqref{eq:channelgain} that this newly constructed trajectory can achieve a componentwise larger rate-pair than the assumed one. The proof is thus completed.
%Please refer to Appendix \ref{apdx:thmTDMA}.
\end{proof}

Proposition  \ref{thm:TDMA}  suggests that although the UAV shall  fly at the maximum speed between the initial and final locations as in the capacity-achieving UAV case with SC-based transmission, the hovering locations can only be $x=-D/2$ and $x=D/2$ in the case of TDMA-based transmission. This is quite different from the capacity-achieving UAV trajectory in \eqref{optimal:trj} where the UAV may  hover at  $x\in(-D/2, D/2)$. This is because to achieve the capacity boundary,  SC is generally required and hence there may exist some hovering locations $x\in(-D/2, D/2)$ that can strike a good balance between the channel gains of the two GUs since they will be affected simultaneously (with one decreasing and the other increasing) if the UAV moves. However, since only one  GU will be scheduled at any time for TDMA, the UAV's movement between $(-D/2, D/2)$ can always help increase the channel gain of one  GU (scheduled for transmission) while without degrading that of the other (not scheduled for transmission).%  GU.
%The reason why it satisfies the structure in \eqref{optimal:trj} is because TDMA is generally the suboptimal design compared to the capacity-achieving strategy.
%Also, it is consistent with the result in Proposition \ref{prop1} where the optimal power allocation becomes TDMA when $T\rightarrow \infty$.

Under the optimal given unidirectional UAV trajectory given in Proposition   \ref{thm:TDMA}, it can be shown that the UAV will first schedule  GU 1 and then schedule GU 2 for transmission, i.e., $\pi_1(t)=1$ and $\pi_2(t)=0$ for $t\in [0, t_1]$, and $\pi_1(t)=0$ and $\pi_2(t)=1$ for $t\in (t_1, T]$ where $t_1\in \T$. % is the partition for GU scheduling in TDMA.
Based on $\{x(t)\}$ and $t_1$, problem (P4) is reduced to the following problem,
\begin{align}\label{probm:notrj}
\text{(P5)}: ~~\max_{\{r,t_1\}} &~ r \\
\mathrm{s.t.}~~~~& \frac{1}{T}\int_{0}^{t_1}\log_2(1+ \bar Ph_1(x(t))) {\rm{d}}t \ge \alpha_1 r, \label{eq:90}\\
&\frac{1}{T}\int_{t_1}^T \log_2(1+ \bar Ph_2(x(t))){\rm{d}}t \ge \alpha_2 r,  \label{eq:100}\\
&t_1 \in \T.
\end{align}
Note that  the LHS of  \eqref{eq:90} and \eqref{eq:100} increases and decreases with $t_1$, respectively. In addition, constraints \eqref{eq:90} and \eqref{eq:100} need to be met with equalities at the optimal solution. Thus, the optimal $t_1$ is unique and can be obtained efficiently by applying a bisection search over the interval $[0, T]$. As such, the optimal rate-pair $(r_1^*, r_2^*)$ corresponding to rate-profile $(\alpha_1,\alpha_2)$ can be found by applying a two-dimensional search for $x_{\rm I}$ and $t_1$, and  selecting their values to maximize $r$ in  (P5).
\vspace{-0.5cm}
\begin{figure*}[!t]
\begin{minipage}[t]{0.50\linewidth}
\centering
\includegraphics[width=3.2in, height=2.5in]{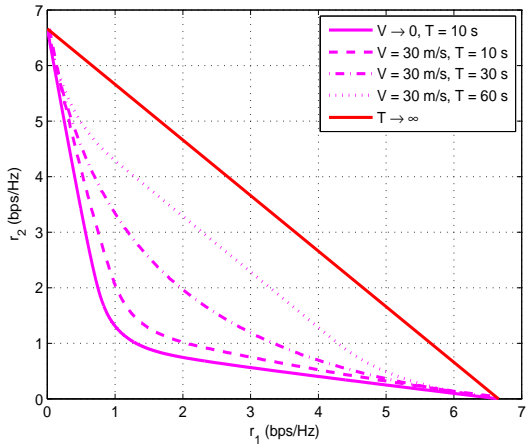}  %2.1
\caption{Achievable rate region of a UAV-enabled two-user BC with TDMA,  and $\bar P = 10$ dBm.}
\label{region:TDMA:mobile}
\end{minipage}%
~~
\begin{minipage}[t]{0.50\linewidth}
\centering
\includegraphics[width=3.2in, height=2.5in]{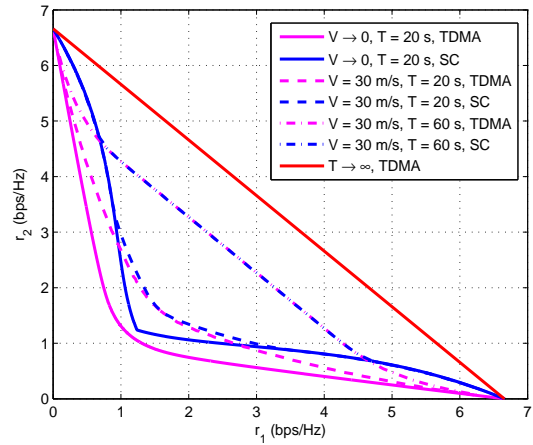}
\caption{Comparison of SC  and TDMA with $\bar P = 10$ dBm. }
\label{region:CD:TD:compare}
\end{minipage}\vspace{-0.5cm}
\end{figure*}
\subsection{Numerical Results}
 We consider the same parameters as those  for Figs. \ref{region:infty}, \ref{region:zero}, and \ref{region:CDMA:mobile}.  In Fig.  \ref{region:TDMA:mobile},  the boundary of  $\C_{\rm TD} (V,T,\bar P)$  is characterized in different setups. It is observed that $\C_{\rm TD} (V,T,\bar P)$ is in general a non-convex set. Furthermore, unlike the boundary of $\C (V,T,\bar P)$ that generally follows a convex-concave-convex shape as shown in Fig. \ref{region:CDMA:mobile}, the boundary  of $\C_{\rm TD} (V,T,\bar P)$ is observed to be always convex. Such a convex  boundary essentially suggests that increasing the UAV speed and/or flight duration is always beneficial to enlarge the achievable rate region of the UAV-enabled BC with TDMA.  This is expected since  with a larger $V$ and/or $T$, the UAV can always fly closer to the  GU that is being scheduled for communication and thus have a better UAV-ground channel in the case of TDMA.

Fig.  \ref{region:CD:TD:compare} compares the capacity region, $\C (V,T,\bar P)$ with SC-based transmission, and the achievable rate region based on TDMA, $\C_{\rm TD} (V,T,\bar P)$. First, it is observed that $\C (V,T,\bar P)$ is generally larger than $\C_{\rm TD} (V,T,\bar P)$ for same values of $V$, $T$, and $\bar P$.
  In particular, for $V=0$ and $r_1= 1$ bps/Hz,  the achievable rate of  GU 2 in  $\C (V,T,\bar P)$ is improved about $80\%$ compared to that in $\C_{\rm TD} (V,T,\bar P)$. Such a rate gain comes not only from using SC  but also from the corresponding UAV location optimization.
  Second, as  $V$ and/or  $T$ increases, it is observed that  the boundary  of  $\C (V,T,\bar P)$ touches more points with that of $\C_{\rm TD} (V,T,\bar P)$ and the gap between them shrinks, suggesting that TDMA becomes more close to be optimal, and they become identical as $T\rightarrow \infty$ (see Theorem \ref{achievingTDMA}).
  %%This is in accordance with our theoretical results since when
   For example, when $V=0$, the boundaries of  $\C (V,T,\bar P)$ and $\C_{\rm TD} (V,T,\bar P)$ only intersect at two vertices where the UAV schedules and hovers above only one  GU.  When $V=30$ m/s and $T=60$ s, the boundary of $\C (V,T,\bar P)$ overlaps with that of $\C (V,T,\bar P)$ for $r_1\in[1, 4]$ bps/Hz where the achievable rates of the two GUs are relatively comparable.  This suggests that in this regime, the optimal trajectory for TDMA is also capacity-achieving. This is because when both GUs' rates are non-negligible, it is worth for the UAV flying closer to and even hovering above the two GUs to communicate. However, for a rate-pair in which one of the  GU's rate is very small  but not zero, the UAV only needs to perform proper power allocation to attain this rate-pair rather than spending time in moving towards the  GU with the smaller rate even if $VT>D$. In this case, the optimal UAV trajectory for TDMA in which the UAV greedily flies towards the  GU that is being scheduled, becomes strictly suboptimal compared to that for SC.

\section{Concluding Remarks}
This paper characterizes the capacity region  of a new  two-user BC with a UAV-mounted aerial BS by investigating a joint UAV trajectory and communication design. We show that the optimal rate trade-off of the two GUs or the Pareto boundary of the capacity region is generally achieved by a simple HFH UAV trajectory, under different UAV flight duration and maximum speed, as well as with SC or TDMA-based transmission.
%We revealed that the capacity region of the two-user BC is in general non-convex and for each boundary point, the corresponding optimal UAV trajectory follows a hover-fly-hover structure and  SC is generally required.
 It is shown that the capacity region can be significantly enlarged by exploiting different forms of mobility of the UAV, via either placement optimization for low-mobility UAVs or HFH trajectory optimization for high-mobility UAVs. In addition, it is shown that TDMA-based design achieves close-to-optimal  performance of SC-based design for sufficiently large UAV speed and/or flight duration and is capacity-achieving when the flight duration goes to infinity.
We hope that the results in this paper for the simplified two-user BC would provide useful insights and guidelines  for designing more general/complex UAV-enabled multiuser communication systems in future, especially from an information-theoretic perspective. In the following, we point out some promising directions to motivate future work.
%the capacity limit and trajectory design of UAV-enabled wireless networks. Due to the space limitation, there are serval interesting and important issues unaddressed in this paper and left for our future work, some of which are highlighted as follows:
\begin{itemize}
  \item In this paper, to simplify the analysis, it is assumed that the noise power is equal for the two GUs. For the general case with unequal noise power, the capacity region is not symmetric and thus needs further investigation. In addition, a simplified LoS link is assumed for UAV-GU channels in this paper, while more practical channels models such as Rician fading can be considered in future work \cite{zeng20173d}.
    %  channel the problem will be more challenging since the corresponding UAV-ground channel is  affected by not only the UAV trajectory but also the fading.

\item Capacity characterization of the general multiuser BC with more than two GUs  is also worth pursuing.  Whether our proposed HFH trajectory for the UAV is still capacity-achieving in the more general setup remains as an open problem. How to extend the joint UAV trajectory and communication design to characterize the capacity region of other multiuser communication network models such as MAC and IFC with single-/multi-antenna nodes is also worthy of further investigation.

%Based on our conclusion for the two-user scenario, the conjecture for the $K$- GU scenario is that the optimal UAV trajectory  also satisfies the hover-fly-hover structure, i.e.,  the UAV hovers at most  at $K$ locations and flies at the maximum speed among them, which includes our result as a special case. In addition, it also remains unknown whether the UAV shall fly along a straight line between any two GUs.     Therefore, it is an open problem of high practical interests.
%\item Although we study the BC in this paper, extensions to other channels such as MAC and IFC, can also be pursued. For example, for MAC, individual transmit power constraints are imposed rather than total transmit power constraint in BC and the fixed-location capacity region becomes a pentagon. Thus, whether the hover-fly-hover structure is still optimal needs further investigation.

\end{itemize}
\appendices

\vspace{-0.3cm}
\section*{Appendix A: Proof of Lemma  \ref{lemm:P2}}  \label{apdx:1}
%\section{ Proof of Proposition   \ref{prop1}  }
%\begin{proof}
By relaxing the UAV maximum speed constraint \eqref{P1:speed}, one can show that problem (P2) satisfies the so-called time-sharing condition \cite{yu2006dual}, and therefore, strong duality holds between (P2) and its dual problem. Thus, we can solve (P2) by applying the Lagrange dual method.
%Main ideas: two approaches for this proof: By using the Lagrange duality. Alternatively, we can also prove this result by showing: 1) any feasible rate-pair $(r_1, r_2)$ lies in the proposed triangle region; 2) every point in the triangle region (especially the boundary points) is achievable by constructing the above trajectory.
Let  $\mu_k, k\in\{1,2\}$, denote the dual variable associated with the $k$th constraint in (10) in (P2). Then the partial Lagrangian of (P2) is given by
%\begin{align}
$\mathcal{L} (r,r_1,r_2,\{x(t), p_1(t), p_2(t)\}, \mu_1,\mu_2)=(1-\mu_1\alpha_1-\mu_2\alpha_1)r + \mu_1r_1 + \mu_2r_2$. To obtain the dual function under given $\mu_1$ and $\mu_2$,  we need to maximize the Lagrangian $\mathcal{L} (r,r_1,r_2,\{x(t), p_1(t), p_2(t)\}, \mu_1,\mu_2)$ by optimizing $r,r_1,r_2,\{x(t), p_1(t), p_2(t)\}$, subject to constraints \eqref{P1:11}, \eqref{P1:12}, and \eqref{P1:14}. To ensure the dual function unbounded from the above, it follows that $1-\mu_1\alpha_1-\mu_2\alpha_1=0$.  As a result, we only need to maximize $ \mu_1r_1 + \mu_2r_2$ by optimizing $r_1,r_2,\{x(t), p_1(t), p_2(t)\}$, subject to \eqref{P1:11}, \eqref{P1:12}, and \eqref{P1:14}. Towards this end, we consider the three cases with $\mu_1>\mu_2$, $\mu_1<\mu_2$, and $\mu_1=\mu_2$, respectively.

%& = (1- \mu_1\alpha_1-\mu_2\alpha_2)r + \mu_1r_1  + \mu_2r_2,
%\end{align}

First, when $\mu_1>\mu_2$, we invoke the polymatroid structure \cite{tse1998multiaccess} for the above problem to remove constraint \eqref{P1:11}. Accordingly, we obtain the following equivalent problem as
\begin{align}
\max_{\{x(t),p_1(t),p_2(t)\}}&\frac{1}{T}\int_{0}^T\left(   \mu_1 \log_2(1+p_1(t) h_1(x(t))) +  \mu_2\log_2\left(1+ \frac{p_2(t)h_2(x(t))}{p_1(t)h_1(x(t)) +1}\right) \right)\mathrm{d}t \nonumber\\
\mathrm{s.t.}~~&\eqref{P1:12},~ \eqref{P1:14}.\label{P:50}
\end{align}
%{\small
%\begin{align}
%\bar L &= \mu_1 \frac{1}{T}\int_{0}^T \log_2(1+p_1(t) h_1(t)) + \mu_2 \frac{1}{T}\int_{0}^T \log_2\left(1+ \frac{p_2(t)h_2(t)}{p_1(t)h_1(t) +1}\right)  \nonumber\\
%& =  \frac{1}{T}\int_{0}^T\left(   \mu_1 \log_2(1+p_1(t) h_1(t)) +  \mu_2\log_2\left(1+ \frac{p_2(t)h_2(t)}{p_1(t)h_1(t) +1}\right) \right)\mathrm{d}t
%\end{align}
%}
It then follows that problem \eqref{P:50} can be decoupled over any $t$ as the following subproblem, in which the index $t$ is omitted for brevity.
\begin{align}\label{vinfinite}
\max_{x,p_1,p_2} & (\mu_1-\mu_2)\log_2(1+p_1h_1(x)) +  \mu_2\log_2(1+p_1h_1(x)+p_2h_2(x))  \\
\mathrm{s.t.}~&p_1+p_2\le \bar P,\nonumber\\
~& p_1\geq 0, p_2\geq 0.\nonumber
\end{align}
Although problem \eqref{vinfinite} is still non-convex,  it can be shown that for $\mu_1>\mu_2$, $p_1\geq 0$, $p_2\geq 0$, $h_1(x)\leq h_1(-D/2)= \frac{\beta_0}{H^2}$, and  $h_2(x)\leq h_2(D/2)= \frac{\beta_0}{H^2}$, we have
\begin{align}
 (\mu_1-\mu_2)\log_2(1+p_1h_1(x))& \leq  (\mu_1-\mu_2)\log_2(1+ \bar P\frac{\beta_0}{H^2}),  \label{1}\\
\mu_2\log_2(1+p_1h_1(x)+p_2h_2(x)) &\leq   \mu_2\log_2(1+p_1\frac{\beta_0}{H^2}+p_2\frac{\beta_0}{H^2})  =   \mu_2\log_2(1+ \bar P\frac{\beta_0}{H^2}).  \label{2}
\end{align}
As the inequalities in \eqref{1} and \eqref{2} are active simultaneously only when $p^*_1=\bar P$, $p^*_2=0$, $x^*=-\frac{D}{2}$, it follows that $p^*_1=\bar P$, $p^*_2=0$, $x^*=-\frac{D}{2}$ are indeed the unique optimal solution to problem \eqref{vinfinite}. As a result, the optimal solution to the above problem is given as $p^*_1(t)=\bar P$, $p^*_2(t)=0$, $x^*(t)=-\frac{D}{2},\forall t$, which achieves zero rate for  GU 2 with $r_2 = 0$.

%\end{proof}\\xi\epsilon\epsilon\varepsilon\varepsilon\zeta
Next, when $\mu_2>\mu_1$, it can be similarly shown that the optimal solution to the above problem is $p^*_2(t)=\bar P$, $p^*_1(t)=0$, and $x^*(t)={D}/{2},\forall t$, which achieves zero rate for  GU 1 with $r_1 = 0$.

Furthermore, when  $\mu_1=\mu_2$, the above two trajectory and power allocation solutions are optimal for the above problem. As a result, the optimal solution is non-unique in this case, and the time-sharing between the two optimal solutions is required to achieve different rate pairs.

Note that $\frac{r_1}{r_2}=\frac{\alpha_1}{\alpha_2}$ must hold at the optimality of problem (P2). Therefore, it follows that $\mu_1=\mu_2$ must be true at the optimal dual solution of (P2). In this case, the solution in Lemma \ref{lemm:P2} is primal optimal to (P2). This thus completes the proof.
\vspace{-0.3cm}
\section*{Appendix B: Proof of Proposition  \ref{thm2}}  \label{apdx:2}
%\section{ Proof of Proposition  \ref{thm2}  }  \label{apdx:2}
%\begin{proof}
%(Main idea: By using the symmetry of the capacity region regarding to $r_2=r_1$ as well as the convexity of a Gaussian BC capacity region.)
%This proposition is proved by  contradiction.
Suppose that the optimal UAV location solution to problem (P3) is $x^{*}\in[- {D}/{2}, 0)$, and the correspondingly obtained rate-pair $(r^*_1, r^*_2)$ satisfies $r^*_2\geq r^*_1$. Let $p_1$ and $p_2$ denote the corresponding UAV's transmit powers for GUs 1 and 2 to achieve the rate pair $(r^*_1, r^*_2)$. As $h_1(x^{*}) > h_2(x^{*})$,  GU 1 can use interference cancellation before decoding its own signal; thus, $r^*_1$ and $r^*_2$ can be explicitly expressed as $r^*_1= \log_2( 1+p_1h_1(x^{*}))$ and $r^*_2 = \log_2( 1+ \frac{ p_2h_2(x^{*})}{p_1h_2(x^{*})+1})$. In the following, we show that we can always find an alternative UAV location $\hat{x} = -x^*\in [0, {D}/{2}]$ to achieve a rate-pair $(\hat{r}_1, \hat{r}_2)$ with $(\hat{r}_1, \hat{r}_2)\succeq (r^*_1, r^*_2)$, by considering two cases with $r^*_2= r^*_1$ and $r^*_2> r^*_1$, respectively, as follows.

First, consider the case of $r^*_2= r^*_1$. It has been shown in Lemma \ref{lemma0} that for any rate-pair $(r^*_1, r^*_2)$, we can find a symmetric rate-pair $(r^*_2, r^*_1)$ at the location $\hat{x}=-x^{*}\in (0, {D}/{2}]$.  Thus,  it follows that $(\hat{r}_1, \hat{r}_2)=(r^*_2, r^*_1)= (r^*_1, r^*_2)$. Therefore, $\hat{x} \in (0, {D}/{2}]$ is also an optimal solution to problem (P3).
% Thus, there always exists a solution $\hat{x}$ with $\hat{x}\in (0, {D}/{2}]$ that is optimal for  (P3).

Next, consider the case of $r^*_2> r^*_1$. To start with, when the UAV is located at $\hat{x} = -x^*$, we denote the maximum rate of  GU 2 as $R_2=\log_2(1+{{\bar P}}h_2(\hat{x}))$, and the corresponding rate pair as $(0, R_2)$. Then, we construct an alternative rate-pair $(\hat{r}_1, \hat{r}_2)$ by time-sharing between $(0, R_2)$ and the symmetric rate-pair $(r^*_2, r^*_1)$. Let $\beta_1\ge 0$ and $\beta_2\ge 0$ denote the time-sharing ratios with $\beta_1+\beta_2=1$. Accordingly, the newly constructed rate-pair is expressed as $(\hat{r}_1,\hat{ r}_2) =\beta_1(r^*_2, r^*_1) + \beta_2(0, R_2)$. In particular, we choose $\beta_1 = \frac{R_2-r^*_2}{R_2-r^*_1}$ and $\beta_2 =1- \frac{R_2-r^*_2}{R_2-r^*_1}$ such that $\hat{r}_2=r^*_2$. Accordingly, we have $\hat{r}_1 =  \frac{R_2-r^*_2}{R_2-r^*_1}r^*_2$. As a result, in order to show $(\hat{r}_1, \hat{r}_2)\succeq (r^*_1, r^*_2)$ in this case, it remains to show that $\hat{r}_1 \ge r_1^*$, or equivalently,
\begin{align}\label{apdx:eq19}
\hat{r}_1-r^*_1 = \frac{R_2-r^*_2}{R_2-r^*_1}r^*_2-r^*_1 =\frac{r^*_2-r^*_1}{R_2-r^*_1}(R_2-r^*_1-r^*_2) \ge 0.
\end{align}
Notice that
\begin{align}\label{eq22}
r^*_1+r^*_2 &= \log_2( 1+p_1h_1(x^{*})) + \log_2( 1+ \frac{ p_2h_2(x^{*})}{p_1h_2(x^{*})+1}) \nonumber\\
&\overset{(a)}\leq \log_2( 1+p_1h_1(x^{*})) + \log_2( 1+ \frac{ p_2h_1(x^{*})}{p_1h_1(x^{*})+1}) \nonumber\\
&= \log_2( 1+\bar P h_1(x^{*}))\overset{(b)}= \log_2( 1+\bar P h_2(\hat x))=R_2,
%&= \log_2\left( 1+( p_1+ p_2) h_1)\right), \nonumber\\
\end{align}
where $(a)$ and $(b)$ hold due to the facts that $h_1(x^{*}) > h_2(x^{*})$ and $h_1(x^{*}) = h_2(\hat x)$, respectively. It thus follows that $R_2-r^*_1-r^*_2 \ge 0$. By using this together with $\frac{r^*_2-r^*_1}{R_2-r^*_1} \ge 0$, we have $\hat{r}_1-r^*_1 \ge 0$. Therefore, we have shown that $(\hat{r}_1, \hat{r}_2)\succeq (r^*_1, r^*_2)$.

%that when $r^*_1= 0$, the optimal UAV location to (P3) is $x^*=D/2$, which contradicts the assumption $x^*\in [-D/2, 0)$. Thus  from \eqref{eq22}, it follows that $R_2-r^*_1-r^*_2> 0$ and $R_2-r^*_1> 0$. Since $r^*_2> r^*_1$, we obtain $\hat{r}_1-r^*_1> 0$ from \eqref{apdx:eq19}, which implies  $(\hat{r}_1, \hat{r}_2)=(\hat{r}_1, {r}^*_2)\succ (r^*_1, r^*_2)$.

By combining the above two cases, the proof is thus completed.

\vspace{-0.3cm}
\section*{Appendix C: Proof of Proposition \ref{trj_lemma}}  \label{apdx:B}
Proposition \ref{trj_lemma} is proved by contradiction. Suppose that the optimal rate-pair $(r^*_1, r^*_2)$  to problem (P1) is achieved by the UAV trajectory $\{x^*(t)\}$ and power allocation $\{{p}^*_k(t)\}$, in which there exists at least a location $x_A$ with  $x_A\in (x_{\rm I}, x_{\rm F})$, such that \eqref{eq:V1} is violated, i.e.,
 $\C_f(x_A) \nsubseteq {\rm{Conv}}(  \C_f(x_{\rm I}) \bigcup \C_f(x_{\rm F})\big)$.   In other words,  there exists at least one rate-pair $( r_1^{\Aaa},  r_2^{\Aaa})\in \C_f(x_A) $ (with the power allocation being denoted by $\{p^A_{k}\}$), such that $( r_1^{\Aaa},  r_2^{\Aaa})\notin {\rm{Conv}}\big(\C_f(x_{\rm I})\bigcup \C_f(x_{\rm F})\big)$. Then we show that we can always construct a new feasible UAV trajectory $\{\hat{x}(t)\}$ that achieves a larger objective value  for problem (P1) or equivalently a  rate-pair $(\hat{r}_1, \hat{r}_2)$ with  $(\hat{r}_1, \hat{r}_2)\succ (r^*_1, r^*_2)$. %In the following, we first express $(r^*_1, r^*_2)$ and  $(\hat{r}_1, \hat{r}_2)$, respectively, and then we compare them.\
   \begin{figure}[!t]
\centering
\includegraphics[width=0.4\textwidth]{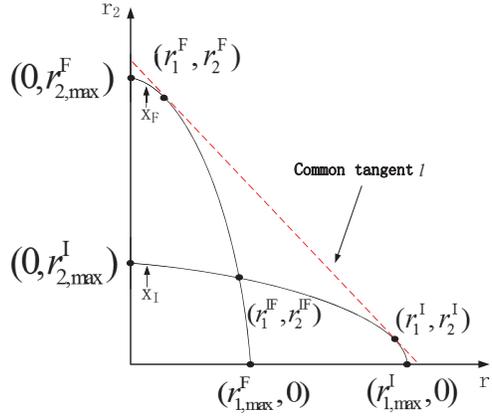}
\caption{Illustration of the common tangent of $\C_f(x_{\rm I})$ and $\C_f(x_{\rm F})$. } \label{tangent}\vspace{-0.6cm}
\end{figure}

First, we construct new UAV trajectory  $\{\hat{x}(t)\}$ and power allocation $\{\hat{p}_k(t)\}$ based on $\{x^*(t)\}$ and $\{{p}^*_k(t)\}$. To facilitate the design, we first partition the UAV flight/communication duration $T$ into three portions $[0,\beta_{\I}\Delta t ]$, $(\beta_{\I}\Delta t,T-\beta_{\F}\Delta t)$, and $[T-\beta_{\F}\Delta t, T ]$, in which $\beta_{\I} > 0$ and $\beta_{\F}> 0$ with $\beta_{\I}+\beta_{\F}=1$. We choose $\Delta t > 0$ to be sufficiently small such that $x^*(t)\approx x_{\rm I}, \forall t\in [0,\beta_{\I}\Delta t ]$, and $x^*(t)\approx x_{\rm F}, \forall t\in [T-\beta_{\F}\Delta t, T ]$. Accordingly,  from \eqref{P1:11} in (P1),  the average rate-pair of the two GUs  can be expressed as
\begin{align}\label{eq:adx556}
(r^*_1, r^*_2) = \frac{\Delta t}{T}\left( \beta_{\I}(r^{\I}_1,r^{\I}_2) + \beta_{\F}(r^{\F}_1,r^{\F}_2) \right) +  (R_1, R_2),
%=\frac{\Delta t}{T}(\beta_{\I}r^{\I}_1+ \beta_{\F}r^{\F}_1, \beta_{\I}r^{\I}_2+ \beta_{\F}r^{\F}_2) +  (R_1, R_2),
\end{align}
where $(r^{\I}_1,r^{\I}_2)$ and $(r^{\F}_1,r^{\F}_2)$ are the rate-pairs achieved at locations $x_{\rm I}$ and $x_{\rm F}$, respectively, and  $R_k= \frac{1 }{T}\int_{\beta_{\I}\Delta t}^{T-\beta_{\F}\Delta t} \log_2\big(1+ p^*_k(t)h_k(x^*(t))\big) {\rm{d}}t, k\in\{1,2\}$.  Here, note that $(r^{\I}_1,r^{\I}_2)$ and $(r^{\F}_1,r^{\F}_2)$ must be the rate-pairs in which $\C_f(x_{\I})$ and  $\C_f(x_{\F})$ touch with their common tangent, respectively, as shown in Fig. \ref{tangent}, since otherwise a larger objective value  for (P1) can be achieved by using rate-pairs $(r^{\I}_1,r^{\I}_2)$ and $(r^{\F}_1,r^{\F}_2)$ at $x_{\I}$ and $x_{\F}$.
Then, the new UAV trajectory  $\{\hat{x}(t)\}$ and power allocation $\{\hat{p}_k(t)\}$ are constructed by letting the UAV stay at the three locations $x_{\rm I}$, $x_{\rm F}$, and $x_{A}$ during the duration $\Delta t$, and using the trajectory $\{x^*(t)\}$ and power allocation $\{p^*_k(t)\}, t\in (\beta_{\I}\Delta t, T-\beta_{\F}\Delta t),$ during the remaining duration $T-\Delta t$. Let $\hat{\beta}_{\I}\Delta t$, $\hat{\beta}_{\F}\Delta t$, and $\hat{\beta}_{\Aaa}\Delta t$ denote the durations when the UAV stays at $x_{\rm I}$, $x_{\rm F}$, and $x_{A}$, respectively, where $\hat{\beta}_i\ge 0, i\in \{\I,\F, \Aaa\}$ are the corresponding time proportions with $\hat{\beta}_{\I}+\hat{\beta}_{\F}+\hat{\beta}_{\Aaa}=1$.

Accordingly, the average rate-pair achieved by the constructed solution $\{\hat{x}(t)\}$ and $\{\hat{p}_k(t)\}$  can be expressed as
\begin{align}\label{eq:adx557}
(\hat{r}_1, \hat{r}_2) = \frac{\Delta t}{T}\left(\hat{\beta}_{\I}( r^{\I}_1, r^{\I}_2)+ \hat{\beta}_{\F}(r^{\F}_1,r^{\F}_2)+ \hat{\beta}_{\Aaa}(r^{\Aaa}_1, r^{\Aaa}_2) \right) +  (R_1, R_2),
\end{align}
where $(r^{\Aaa}_1, r^{\Aaa}_2) \in \C_f(x_A)$ and $( r_1^{\Aaa}, r_2^{\Aaa})\notin {\rm{Conv}}\big(\C_f(x_{\rm I})\bigcup \C_f(x_{\rm F})\big)$. By comparing $(r^*_1, r^*_2)$ in \eqref{eq:adx556} and $(\hat{r}_1, \hat{r}_2)$ in \eqref{eq:adx557}, we have
\begin{align}\label{eq:apdx58}
\kern -0.7mm (\hat{r}_1, \hat{r}_2)- (r^*_1, r^*_2) = \frac{  \Delta t}{T} \left( \hat{\beta}_{\I}( r^{\I}_1, r^{\I}_2)+ \hat{\beta}_{\F}(r^{\F}_1,r^{\F}_2)+ \hat{\beta}_{\Aaa}(r^{\Aaa}_1, r^{\Aaa}_2)  - \beta_{\I}(r^{\I}_1,r^{\I}_2) - \beta_{\F}(r^{\F}_1,r^{\F}_2) \right).
\end{align}
Therefore, in order to show $(\hat{r}_1, \hat{r}_2)\succ (r^*_1, r^*_2)$, it remains to show that there exists a set of parameters $\hat{\beta}_{\I}$, $\hat{\beta}_{\F}$, $\hat{\beta}_{\Aaa}$, $\beta_{\I}$, and $\beta_{\F}$, such that
\begin{align}\label{eq:59:xj}
\hat{\beta}_{\I}( r^{\I}_1, r^{\I}_2)+ \hat{\beta}_{\F}(r^{\F}_1,r^{\F}_2)+ \hat{\beta}_{\Aaa}(r^{\Aaa}_1, r^{\Aaa}_2)  \succ \beta_{\I}(r^{\I}_1,r^{\I}_2) + \beta_{\F}(r^{\F}_1,r^{\F}_2).
\end{align}

Next, note that $(r^{\Aaa}_1, r^{\Aaa}_2)\notin {\rm{Conv}}\big(\C_f(x_{\rm I})\bigcup \C_f(x_{\rm F})\big)$ does not provide an explicit relation between  $(r^{\Aaa}_1, r^{\Aaa}_2)$ and $( r^{\I}_1, r^{\I}_2)$ and $(r^{\F}_1,r^{\F}_2)$, which thus makes it difficult to verify the inequality in \eqref{eq:59:xj}. To overcome this difficulty, we introduce an equivalent interpretation of $(r^{\Aaa}_1, r^{\Aaa}_2)\notin {\rm{Conv}}\big(\C_f(x_{\rm I})\bigcup \C_f(x_{\rm F})\big)$ in the following lemma, which is based on the properties of the capacity region $\C_f(x)$.
\begin{figure}[!t]\label{example}
\centering
\subfigure[]{\includegraphics[width=2.1in, height=1.7in]{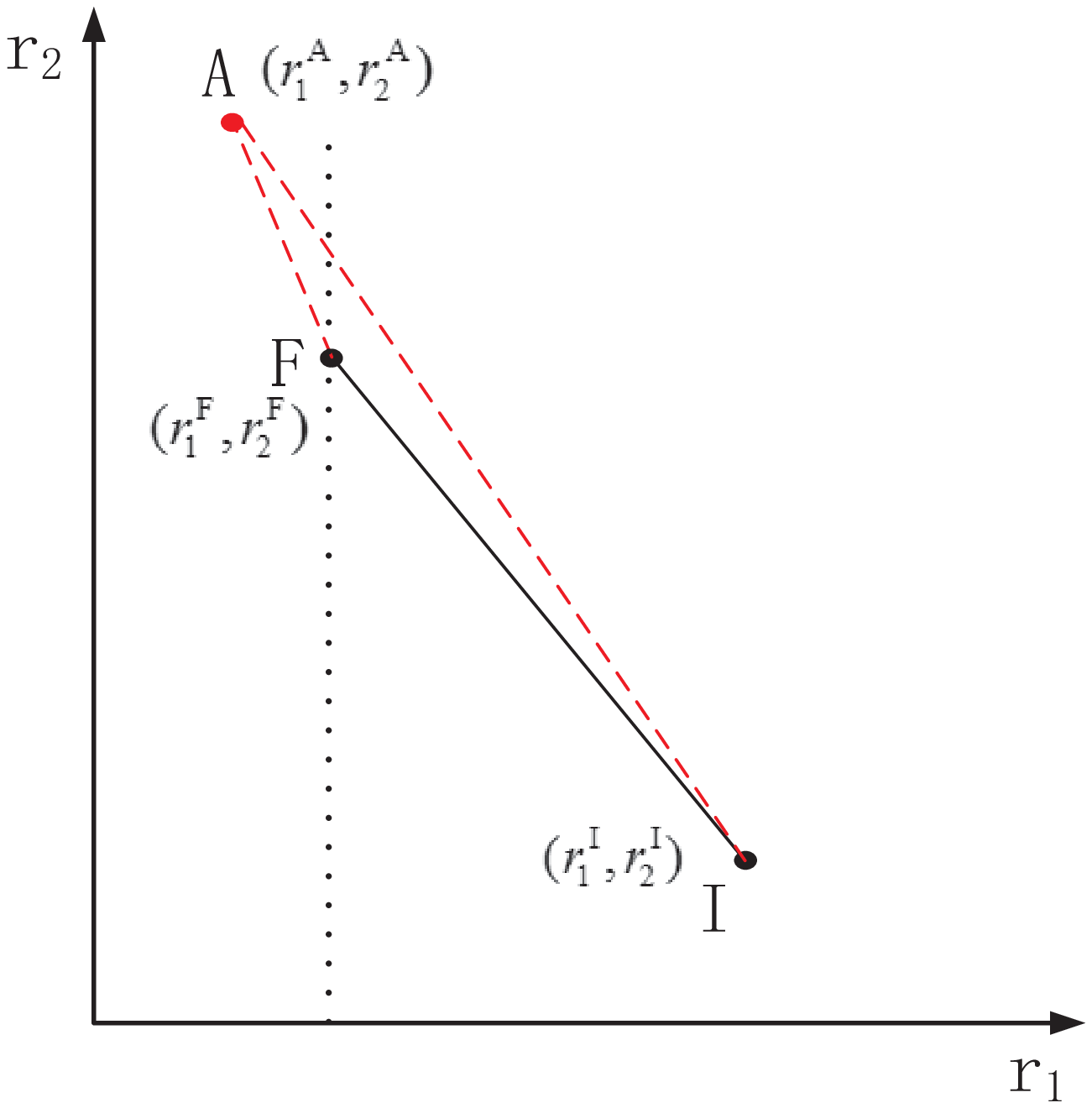}}
\subfigure[]{\includegraphics[width=2.1in, height=1.7in]{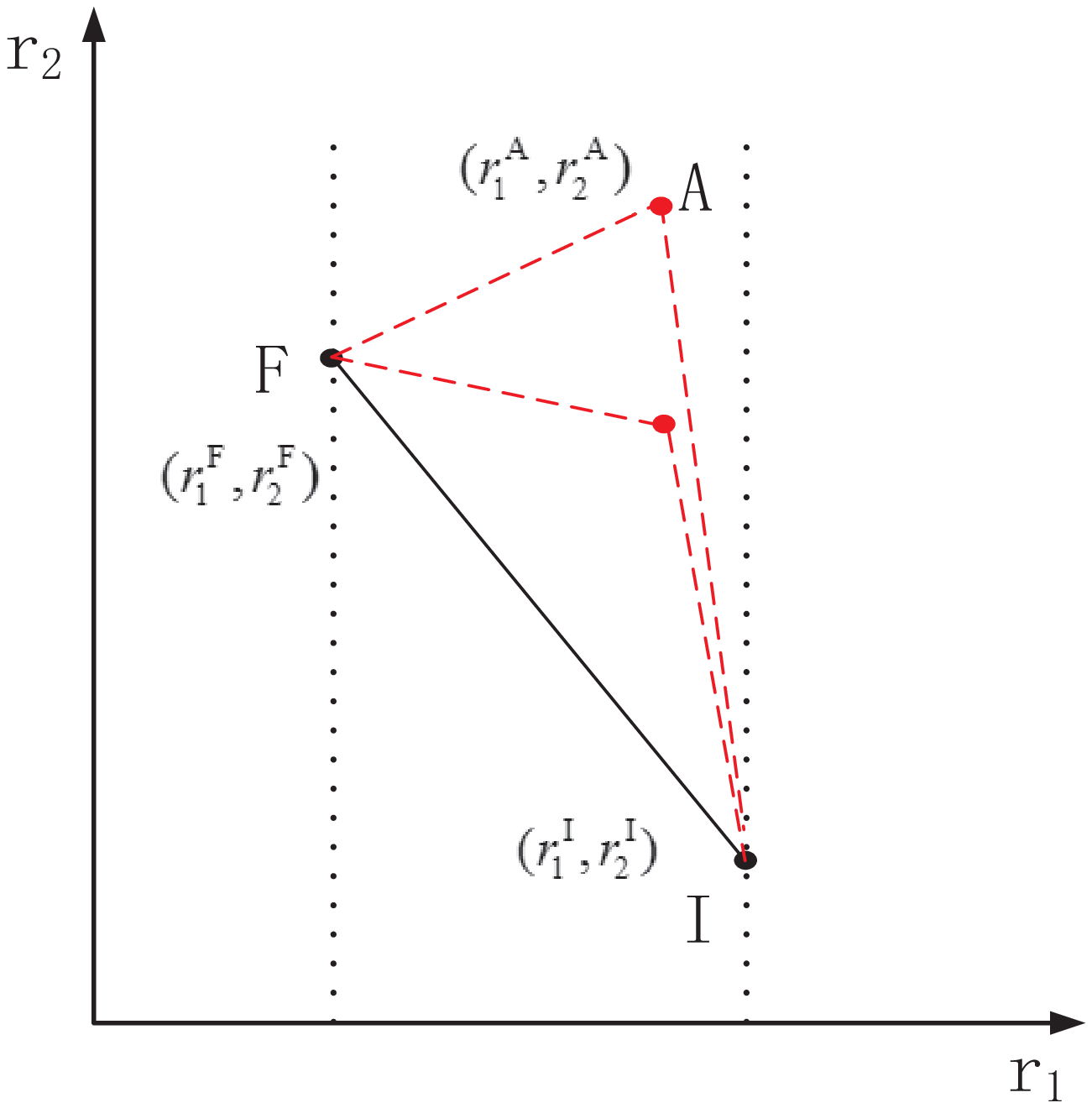}}
\subfigure[]{\includegraphics[width=2.1in, height=1.7in]{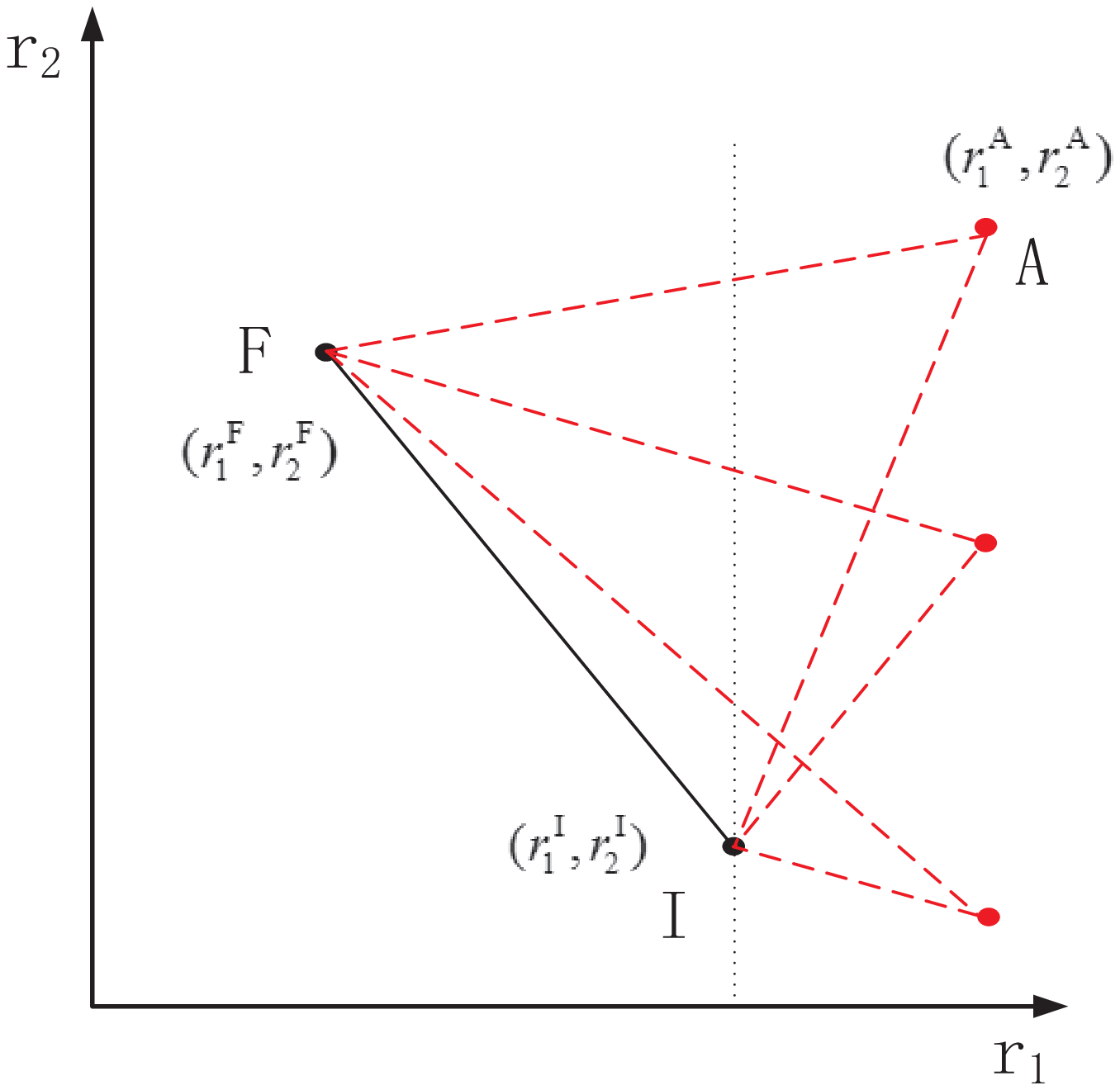}}
\caption{The three cases where  point  $(r_1^{\Aaa},  r_2^{\Aaa})$  does not lie inside the triangle region, $\C_{\rm IF}$, defined in \eqref{equiva_triangle}:  a) $0 < r_1^{\Aaa} \leq   r_1^{\F}$,   b) $r_1^{\F} <  r_1^{\Aaa} <  r_1^{\I} $, and c) $r_1^{\I} \leq r_1^{\Aaa} <  r_{1,\max}^{\I}$.} \label{UAV_location}\label{example}\vspace{-0.8cm}
\end{figure}
\begin{lemma}\label{lem_convhull}
For any location $x\in[x_{\rm I}, x_{\rm F}]$, $\C_f(x) \subseteq {\rm{Conv}}\big(\C_f(x_{\rm I})\bigcup \C_f(x_{\rm F})\big)$ \emph{if} and \emph{only if}  $\C_f(x) \subseteq \C_{\rm IF}$ where $\C_{\rm IF}$ is a  triangle region given by
\begin{align}\label{equiva_triangle}
\C_{\rm IF} = \bigg\{( r_1, r_2):  r_2 \le   k_{\rm IF} (r_1 -  r^{\I}_1 ) +  r^{\I}_2, r_1\geq 0, r_2\geq 0, k_{\rm IF} = \frac{ r^{\I}_2- r^{\F}_2}{r^{\I}_1- r^{\F}_1} \bigg\}.
\end{align}
\end{lemma}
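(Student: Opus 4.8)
The plan is to prove the two implications separately: the forward one is essentially immediate, and the converse carries all the work, which I would organize by comparing Pareto boundaries. For ($\Rightarrow$): as set up above (cf.\ Fig.~\ref{tangent}), the line $L_{\rm IF}:\ r_2=k_{\rm IF}(r_1-r^{\I}_1)+r^{\I}_2$ is the common upper tangent of the convex regions $\C_f(x_{\rm I})$ and $\C_f(x_{\rm F})$; in particular it also passes through $(r^{\F}_1,r^{\F}_2)$ (which pins $k_{\rm IF}=(r^{\I}_2-r^{\F}_2)/(r^{\I}_1-r^{\F}_1)$), and both regions lie in the half-plane $r_2\le k_{\rm IF}(r_1-r^{\I}_1)+r^{\I}_2$. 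Being also contained in $\mathbb{R}^2_+$, each satisfies $\C_f(x_{\rm I})\subseteq\C_{\rm IF}$ and $\C_f(x_{\rm F})\subseteq\C_{\rm IF}$, and since $\C_{\rm IF}$ is convex, ${\rm Conv}\big(\C_f(x_{\rm I})\cup\C_f(x_{\rm F})\big)\subseteq\C_{\rm IF}$. Hence $\C_f(x)\subseteq{\rm Conv}(\cdots)$ forces $\C_f(x)\subseteq\C_{\rm IF}$.

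For ($\Leftarrow$), I would first reduce the inclusion to two ``ears.'' All three sets $\C_f(\cdot)$, ${\rm Conv}(\cdots)$ and $\C_{\rm IF}$ are comprehensive convex subsets of $\mathbb{R}^2_+$ (closed under componentwise decrease), so it suffices to compare their Pareto boundaries, each a concave non-increasing curve. By concavity, the boundary of ${\rm Conv}\big(\C_f(x_{\rm I})\cup\C_f(x_{\rm F})\big)$ is the upper concave envelope of the two fixed-location boundaries and hence consists of three arcs: it follows $\partial\C_f(x_{\rm F})$ for $r_1\in[0,r^{\F}_1]$, the segment $L_{\rm IF}$ for $r_1\in[r^{\F}_1,r^{\I}_1]$, and $\partial\C_f(x_{\rm I})$ for $r_1\in\big[r^{\I}_1,\log_2(1+\bar{P}h_1(x_{\rm I}))\big]$ (using $h_2(x_{\rm F})\ge h_2(x_{\rm I})$ and $h_1(x_{\rm I})\ge h_1(x_{\rm F})$ to decide which region is the ``outer'' one at each end). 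Therefore $\C_{\rm IF}\setminus{\rm Conv}(\cdots)$ is exactly a right ear (between $\partial\C_f(x_{\rm I})$ and $L_{\rm IF}$ for $r_1\ge r^{\I}_1$) together with a symmetric left ear (between $\partial\C_f(x_{\rm F})$ and $L_{\rm IF}$ for $r_1\le r^{\F}_1$). So, granted $\C_f(x)\subseteq\C_{\rm IF}$, the claim $\C_f(x)\subseteq{\rm Conv}(\cdots)$ is equivalent to
\begin{align}
\C_f(x)\cap\{r_1\ge r^{\I}_1\}\subseteq\C_f(x_{\rm I})\quad\text{and}\quad \C_f(x)\cap\{r_1\le r^{\F}_1\}\subseteq\C_f(x_{\rm F}), \nonumber
\end{align}
while for $r_1\in[r^{\F}_1,r^{\I}_1]$ there is nothing to prove, since there $\partial{\rm Conv}(\cdots)=L_{\rm IF}$; these three regimes are the ones drawn in Fig.~\ref{example}.

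It then remains to establish the two displayed inclusions for every $x\in[x_{\rm I},x_{\rm F}]$; by the reflection symmetry $x\mapsto-x$ of Lemma~\ref{lemma0} the second follows from the first. I would prove the stronger pointwise fact that, for $r_1\ge r^{\I}_1$, the Pareto boundary of $\C_f(x)$ never exceeds that of $\C_f(x_{\rm I})$. On this large-$r_1$ arc the degraded-BC boundary is attained by superposition coding; writing it out explicitly (in terms of the power $p_1$ allotted to GU~1's stream, with $p_2=\bar{P}-p_1$), matching a common value of $r_1$ at locations $x$ and $x_{\rm I}$ gives $p_1h_1(x)=p_1'h_1(x_{\rm I})$, and since $h_1$ is non-increasing on $[-D/2,D/2]$ and $x\ge x_{\rm I}$ we get $h_1(x)\le h_1(x_{\rm I})$, hence $p_1\ge p_1'$: GU~2 gets both less power and more interference at $x$. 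The substance is to show this outweighs the opposite movement $h_2(x)\ge h_2(x_{\rm I})$, i.e.\ that GU~2's resulting rate is monotone in $x$ for fixed $r_1$; I would verify this by direct differentiation (possibly after a short case split on which GU is the stronger at $x$ and at $x_{\rm I}$, since that fixes the decoding order in the SC parametrization), the intuition being that on this heavily interference-limited arc GU~2's rate is governed chiefly by the ratio $p_2/p_1$ rather than by $h_2$.

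The only non-routine step is this last monotonicity-in-$x$ comparison of the large-$r_1$ boundary arcs: displacing the UAV from $x$ toward $x_{\rm I}$ raises $h_1$ but lowers $h_2$, and one must show the net effect keeps $\C_f(x)$ out of the right ear. Everything else---the concave-envelope description of $\partial{\rm Conv}(\cdots)$, the reduction to the two ears, and the symmetric case---is bookkeeping. A shortcut worth trying is to exhibit, for each $r_1\ge r^{\I}_1$, a supporting line of $\C_f(x_{\rm I})$ that also supports $\C_f(x)$ from above at the corresponding boundary point, which would yield the inclusion without any explicit derivative computation.
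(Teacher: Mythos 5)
Your forward direction (supporting-hyperplane argument) and your reduction of the converse to the two ``ears'' of $\C_{\rm IF}\setminus{\rm Conv}\big(\C_f(x_{\rm I})\bigcup\C_f(x_{\rm F})\big)$ match the paper's overall structure, but the one step you yourself flag as non-routine is exactly where the proof is missing, and the claim you propose to prove there is too strong. You assert the \emph{unconditional} pointwise fact that for every $x\in[x_{\rm I},x_{\rm F}]$ and every $r_1\ge r^{\I}_1$ the Pareto boundary of $\C_f(x)$ lies below that of $\C_f(x_{\rm I})$. This cannot be right as stated: the lemma is an ``if and only if,'' so there must exist intermediate locations with $\C_f(x)\nsubseteq\C_{\rm IF}$, and for such $x$ the boundary of $\C_f(x)$ can rise above the common tangent $L_{\rm IF}$ at abscissas $r_1\ge r^{\I}_1$ (this is precisely case (c) of Fig.~\ref{example}); since $L_{\rm IF}$ lies above $\partial\C_f(x_{\rm I})$ everywhere except at the tangent point, such a point also exceeds $\partial\C_f(x_{\rm I})$, contradicting your unconditional claim. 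The ear-avoidance statement you need is necessarily \emph{conditional} on the hypothesis $\C_f(x)\subseteq\C_{\rm IF}$, and your proposed proof by ``direct differentiation'' of the large-$r_1$ arc never uses that hypothesis, so it cannot succeed.

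The missing ingredient is the single-crossing property of any two fixed-location Pareto boundaries: for $x_{\rm I}<x$, the curves $r_2^{x}(\cdot)$ and $r_2^{x_{\rm I}}(\cdot)$ intersect at exactly one abscissa, with $r_2^{x}>r_2^{x_{\rm I}}$ before it and $r_2^{x}<r_2^{x_{\rm I}}$ after it (the paper's Lemma~\ref{lem5}, proved by explicitly solving for the intersection of the two SC rate curves, with a case split on the decoding order, and checking the resulting equation has a unique root). With that in hand the conditional ear inclusion is immediate: the hypothesis $\C_f(x)\subseteq\C_{\rm IF}$ evaluated at the single abscissa $r_1=r^{\I}_1$ gives $r_2^{x}(r^{\I}_1)\le L_{\rm IF}(r^{\I}_1)=r^{\I}_2=r_2^{x_{\rm I}}(r^{\I}_1)$, which forces the unique crossing to lie at or to the left of $r^{\I}_1$ and hence $r_2^{x}(r_1)\le r_2^{x_{\rm I}}(r_1)$ for all $r_1\ge r^{\I}_1$; no monotonicity-in-$x$ of the boundary is needed. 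Note also that your ``bookkeeping'' three-arc description of $\partial\,{\rm Conv}\big(\C_f(x_{\rm I})\bigcup\C_f(x_{\rm F})\big)$ silently uses the ordering of the tangent points relative to the crossing point ($\bar r^{\F}_1<\bar r^{\rm IF}_1<\bar r^{\I}_1$, the paper's Lemma~\ref{lem:inequality}), which itself rests on the same single-crossing lemma. So the computation you deferred is not a finishing touch but the core of the argument, and it must be reorganized around the crossing point rather than around a fixed-$r_1$ monotonicity in $x$.
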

\begin{proof}
Please refer to Appendix F.
\end{proof}
Based on  Lemma \ref{lem_convhull},    $(r^{\Aaa}_1, r^{\Aaa}_2)\notin {\rm{Conv}}\big(\C_f(x_{\rm I})\bigcup \C_f(x_{\rm F})\big)$ implies    $( r_1^{\Aaa},  r_2^{\Aaa})  \notin \C_{\rm IF}$, i.e., %$(r^{\Aaa}_1, r^{\Aaa}_2)\in \C_f(x_A)$ and
\begin{align}\label{eq27}
 r_2^{\Aaa}>  k_{\rm IF} (r^{\Aaa}_1 -  r^{\I}_1 ) +  r^{\I}_2.
\end{align}

Finally, with \eqref{eq27} at hand, we find parameters $\hat{\beta}_{\I}$, $\hat{\beta}_{\F}$, $\hat{\beta}_{\Aaa}$, $\beta_{\I}$, and $\beta_{\F}$, to ensure \eqref{eq:59:xj}.
Note that for $x_A\in (x_{\rm I}, x_{\rm F})$, we have  $0 < r_1^{\Aaa}<  r_{1, \max}^{\I}$ where $r^{\I}_{1,\max}$ is the maximum achievable rate of  GU 1 when the UAV is placed at location $x_{\I}$ and allocates $\bar P$ to  GU 1, i.e., $r^{\I}_{1,\max} = \log_2(1+\frac{\bar P\beta_0}{ (x_{\rm I}+\frac{D}{2})^2+H^2 })$. In general, there are overall three possible cases depending on the values of $r_1^{\Aaa}$, i.e.,  1) $0 < r_1^{\Aaa} \leq   r_1^{\F}$,   2) $r_1^{\F} <  r_1^{\Aaa} <  r_1^{\I} $, and 3) $r_1^{\I} \leq r_1^{\Aaa} <  r_{1,\max}^{\I}$, for which the rate-pair $(r_1^{\Aaa},r_2^{\Aaa})$ is shown in Fig. \ref{example} based on Lemma \ref{lem:inequality} in Appendix F.  For simplicity, we only discuss case 1), as  cases 2) and 3) can be similarly analyzed.
When  $0 < r_1^{\Aaa} \leq  r_1^{\F} $, it follows that $r_1^{\Aaa} \leq  r_1^{\F}<  r_1^{\I}$.
  From \eqref{eq27}, we have $k_{\rm AF}  = \frac{ r_2^{\Aaa}- r_2^{\I}}{ r_1^{\Aaa} - r^{\I}_1}<k_{\rm IF} <0$ where $k_{\rm AI}$ is the slope of the line across points $( r_1^{\Aaa}, r_2^{\Aaa})$ and  $( r_1^{\I}, r_2^{\I})$,  and  the line segment  is given by
  \begin{align}\label{eq29}
r_2= k_{\rm AI}  (r_1 -  r^{\I}_1 ) +  r^{\I}_2,
  \end{align}
with $r_1 \in [ r_1^{\Aaa},  r^{\I}_1 ]$.
Note that performing proper time-sharing (convex combination) between points $( r_1^{\Aaa}, r_2^{\Aaa})$ and   $( r_1^{\I}, r_2^{\I})$, i.e., $\hat{\beta}_{\F}=0$,  can achieve any point $(r_1, r_2)$ on line segment in \eqref{eq29} that is above the line segment $r_2 =   k_{\rm IF} (r_1 -  r^{\I}_1 ) +  r^{\I}_2$, $r_1\in (r^{\F}_1, r^{\I}_1)$.
 In other words,
we can always find a convex combination of $( r^{\I}_1, r^{\I}_2)$, $(r^{\F}_1,r^{\F}_2)$, and $(r^{\Aaa}_1, r^{\Aaa}_2)$, which strictly outperforms any given convex combination of $( r^{\I}_1, r^{\I}_2)$ and $(r^{\F}_1,r^{\F}_2)$ (with $\beta_{\I} >0$ and $\beta_{\F} > 0$).
  Therefore, there exist parameters $\hat{\beta}_{\I}$, $\hat{\beta}_{\F}$, $\hat{\beta}_{\Aaa}$, $\beta_{\I}$, and $\beta_{\F}$, to ensure \eqref{eq:59:xj}, which thus completes the proof.
 %we can always find a  rate pair that is componentwise larger than the rate-pair on line segment $r_2 \le   k_{\rm IF} (r_1 -  r^{\I}_1 ) +  r^{\I}_2$, $r_1\in [r_{\F}, r_{\I}]$,
%which suggests $(\hat{r}_1, \hat{r}_2) \succ (r^*_1, r^*_2)$ in \eqref{eq:59:xj}.

 %As illustrated in Fig. \ref{example}, it is easy to show that in all the three cases,

\vspace{-0.3cm}
\section*{Appendix D: Proof of Theorem \ref{prop4}}  \label{apdx:prop4}
Notice that when $x_{\rm I} = x_{\rm F}$, Theorem \ref{prop4} follows directly. Therefore, we only need to focus on the proof in the case with  $x_{\rm I}<x_{\rm F}$, by contradiction.

Suppose that there exists an optimal UAV trajectory solution to problem (P1), in which the UAV flies at the  speed $V^{\star}$ less than $V$ ($0<V^{\star}<V$)\footnote{Note that if the UAV hovers at some locations between $[x_A-\delta_d/2, x_A+\delta_d/2]$, we can obtain its average speed $V^*$ during this interval without loss of optimality, which is always larger than zero. } over an infinitesimal interval $[x_A-\delta_d/2, x_A+\delta_d/2]$ containing location $x_A$ ($x_{\rm I}<x_A-\delta_d/2<x_A+ \delta_d/2<x_{\rm F}$) with  $\delta_d>0$ being infinitesimal.  Then  the time needed for flying over this interval is $t=\frac{\delta_d}{{V}^{\star}}$ and the UAV's location can be assumed constant within this interval as it is sufficiently small.
As a result, we can always construct an alternative feasible trajectory which is same as the assumed trajectory except that we reallocate the flying and hovering time at locations $x_A$, and $x_{\rm I}$ and $x_{\rm F}$. Specifically,  we let the UAV fly at the maximum speed $V$ over $[x_A-\delta_d/2, x_A+\delta_d/2]$ and the time saved due to maximum speed flying is given by $\triangle t = \delta_d\left(\frac{1}{V^{\star}}-\frac{1}{{V}}\right)>0$. Then, we let the UAV perform proper hovering time-sharing between $x_{\rm I}$ and $x_{\rm F}$ by using the saved time $\triangle t$.
From Proposition  \ref{trj_lemma}, we know that for the optimal trajectory solution to problem (P1),  $\C_f(x_A) \subseteq {\rm{Conv}}\big(\C_f(x_{\rm I})\bigcup \C_f(x_{\rm F})\big)$ for $x_{\rm I}<x_A<x_{\rm F}$.
Thus it can be shown that the newly constructed UAV trajectory together with optimized power allocation can always achieve a rate-pair that is  componentwise no smaller than the assumed one, which thus completes the proof.
\vspace{-0.3cm}
\section*{Appendix E: Proof of Theorem \ref{prop5}}  \label{apdx:prop5}
%\section{Proof of Theorem \ref{prop5}}\label{apdx:prop5}
Based on Lemma \ref{lemma0}, we only need to consider $r_2\geq r_1$ for problem (P1). To obtain the optimal solution under the high SNR assumption, we first obtain an upper bound of the optimal objective value of (P1) and then we show that this upper bound is tight and can be achieved by $x^*(t)=D/2, t\in \T$.

From constraints \eqref{P1:11} or \eqref{MAC:ineq3} in (P1), it follows that $r=r_1+r_2\le \frac{1}{T}\int_{0}^T\log_2\big(1+ p_1(t)h_1(x(t)) +p_2(t)h_2(x(t))\big){\rm{d}}t\overset{(a)}{\le} \frac{1}{T}\int_{0}^T\log_2(1+ \frac{\bar P \beta_0}{H^2}){\rm{d}}t \overset{(b)}{\thickapprox} \log_2( \frac{\bar P \beta_0}{H^2})$, where $(a)$ holds due to $h_1(x(t))\le \frac{\beta_0}{H^2} $, $h_2(x(t))\le \frac{\beta_0}{H^2}$, and $ p_1(t)+p_2(t)\le \bar P$, and $(b)$ holds due to the high SNR assumption $\frac{\bar P \beta_0}{ H^2 }\ge \frac{\bar P \beta_0}{ D^2 + H^2 }\gg1$. Thus, in this case, the optimal objective value of (P1) is upper-bounded by $\log_2( \frac{\bar P \beta_0}{H^2})$.

Next, we show that the above upper bound can be achieved by a feasible solution to problem (P1) with $x^*(t)= x^*, \forall\, t\in \T$, in which case problem (P1) reduces to (P3). From Proposition  \ref{thm2}, we know that  GU 2  performs interference cancellation before decoding its own signal when $r_2\geq r_1$. Furthermore, based on the assumption $\frac{\bar P \beta_0}{ D^2 + H^2 }\gg1$ and  $r_2\geq r_1$, it can be shown $\frac{p_2\beta_0}{ (x+\frac{D}{2})^2+H^2 } \gg1$. As a result,  the achievable rates of GUs 1 and 2 under the high SNR assumption  can be expressed as %%
{\small\begin{align}
r_1&=\log_2\left(1+\frac{\frac{p_1\beta_0}{ (x+\frac{D}{2})^2+H^2 }}{\frac{p_2\beta_0}{ (x+\frac{D}{2})^2+H^2 } + 1}\right)\thickapprox \log_2\left( 1+\frac{p_1}{p_2}\right), \label{eq:58}\\
r_2&=\log_2\left(1+\frac{p_2\beta_0}{ (x-\frac{D}{2})^2+H^2 }\right)\thickapprox \log_2\left( \frac{p_2\beta_0}{ (x-\frac{D}{2})^2+H^2 }\right).\label{eq:59}
\end{align}}From \eqref{eq:58} and \eqref{eq:59}, it can be observed that $r_1$ is independent of $x$ and $r_2$ increases monotonically with $x$ for $x\in [-D/2, D/2]$. Thus, the  objective function of (P3) is $r=r_1+r_2=\log_2( \frac{\bar P \beta_0}{ (x-\frac{D}{2})^2+ H^2 })$ where the optimal UAV location is $x^*=D/2$ with the maximum objective value  $\log_2( \frac{\bar P \beta_0}{ H^2 })$. Based on the above results, the capacity region $\C_{\rm h-SNR}(V, T, \bar P)$ can be obtained as in \eqref{eq:highSNR}. This thus completes the proof.

\vspace{-0.3cm}
\section*{Appendix F:  Proof of Lemma \ref{lem_convhull}}  \label{apdx:B}
To start with, we provide Lemmas \ref{lem5} and \ref{lem:inequality} below to facilitate the proof of Lemma \ref{lem_convhull} in the next. First, we consider two UAV locations $x_{\rm B}$ and $x_{\rm C}$, with $-\frac{D}{2}\leq x_{\rm C}<x_{\rm B}\leq \frac{D}{2}$. For any boundary point $(r_1, r_2)\in \partial \C_f(x_{m}), m \in\{\rm B, C\}$, we use a rate function $r_2 = r_2^m(r_1)$ to express their relation. Also, let ${r}^{m }_{k, \max}\triangleq \log_2\left(1+\bar Ph^m_k\right)$ denote the maximum rate of GU $k$ when the UAV is located at $x_m$, where $h^m_k =  \frac{\beta_0}{ (x_{m}-x_k)^2+H^2 }$, $m\in\{\rm B, C\}$, $k\in\{1,2\}$. We then have Lemma \ref{lem5} as follows to reveal an interesting property of $r_2^B(r_1)$ and  $r_2^C(r_1)$.

%
%we first show two related lemmas, namely  as follows.
%
%
%Denote by  and  $r_2^C(r_1)$ as the functions of $r_2$ with respect to $r_1$ when the UAV is placed at locations  and

%\begin{lemma}
%The union of two fixed-location capacity regions at $x_{\rm F}$ and $x_{\rm I}$, i.e.,  $\C_f(x_{\rm F})\bigcup \C_f(x_{\rm I})$,  is non-convex.
%\end{lemma}
%\begin{proof}
%This is because at the intersecting point, the slope of $r_2^A(r_1)$ is always larger than  $r_2^C(r_1)$.
%1) when when $ 0 \leq x_{\rm I} < x_{\rm F}\leq \frac{D}{2}$; 2) when $-\frac{D}{2}\leq x_{\rm I}<0<x_{\rm F}\leq \frac{D}{2}$; 3) when $-\frac{D}{2}\leq x_{\rm I}<x_{\rm F}\leq 0$.
% %% 鍦╟ase 3锛夌殑璇佹槑锛屽彲浠ョ湅鎴恈ase 1鐨勮瘉鏄庣殑锛?鎶婂潗鏍囪酱杞姩90搴︼紝鎵€浠ヤ笉鐢ㄨ瘉鏄庤繖绉嶆儏鍐碉紝鍙 渶瑕?
% %%鍦?<x<D/2 鏃讹紝鏂滅巼閮藉皬浜?1锛?鍦?D/2<x<0鏃讹紝鏂滅巼閮藉ぇ浜?1灏忎簬0锛?杩欎釜鍙互鐢ㄦ潵璇佹槑case 2锛?
% %% case 1锛岀洿鎺ュ鍑芥暟杩涜姹傚锛岀劧鍚庝娇鐢?channel inequality 鍙互寰楀埌
%\end{proof}

\begin{lemma}\label{lem5}
$r_2^B(r_1)$ and  $r_2^C(r_1)$ have one unique intersecting point, denoted by $(\bar{r}^{BC}_1,\bar{r}^{BC}_2)$, where $0<\bar{r}^{BC}_1<{r}^{B}_{1, \max}$ and $0<\bar{r}^{BC}_2<{r}^{C}_{2, \max}$. Furthermore, when $0\leq r_1 <\bar{r}^{BC}_1$, it follows that $r_2^B(r_1)>r_2^C(r_1) $; otherwise when  $\bar{r}^{BC}_1< r_1\leq  {r}^{B }_{1, \max}$, $r_2^B(r_1)<r_2^C(r_1)$ must hold.
\end{lemma}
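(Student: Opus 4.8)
The plan is to make the Pareto (upper‑right) boundary curve $r_2^m(r_1)$ of $\C_f(x_m)$ completely explicit and then count the intersections of the two curves by a degree argument. First I would derive a closed form for $\partial\C_f(x)$. Using $\C_f(x)=\bigcup_{p_1+p_2\le\bar P}\C_{\rm MAC}(x,p_1,p_2)$ together with \eqref{eq:fixed1}--\eqref{eq:fixed3}, a pair $(r_1,r_2)$ with $r_1,r_2\ge0$ lies in $\C_f(x)$ iff the minimum total power $P_{\min}(r_1,r_2;x)$ that places it inside a dual‑MAC pentagon is $\le\bar P$, and lies on the boundary iff $P_{\min}(r_1,r_2;x)=\bar P$. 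Minimizing $p_1+p_2$ over the three MAC inequalities is an elementary linear program, with value
\[
P_{\min}(r_1,r_2;x)=\frac{2^{r_1}-1}{h_1(x)}+\frac{2^{r_2}-1}{h_2(x)}+\frac{(2^{r_1}-1)(2^{r_2}-1)}{\max\{h_1(x),h_2(x)\}}.
\]
Setting $u=2^{r_1}-1\ge0$, $v=2^{r_2}-1\ge0$ and clearing denominators, the boundary equation $P_{\min}=\bar P$ rearranges into a rectangular hyperbola $(u+\mu_x)(v+\nu_x)=M_x$ with $\mu_x=\tfrac{\max\{h_1(x),h_2(x)\}}{h_2(x)}\ge1$, $\nu_x=\tfrac{\max\{h_1(x),h_2(x)\}}{h_1(x)}\ge1$, $M_x=\max\{h_1(x),h_2(x)\}\bar P+\mu_x\nu_x$. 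Hence $r_2^x(r_1)=\log_2\!\big(1+\tfrac{M_x}{2^{r_1}-1+\mu_x}-\nu_x\big)$ is continuous and strictly decreasing on $[0,r^x_{1,\max}]$, equal to $r^x_{2,\max}=\log_2(1+\bar Ph_2(x))$ at $r_1=0$ and to $0$ at $r_1=r^x_{1,\max}=\log_2(1+\bar Ph_1(x))$; the $\max\{h_1,h_2\}$ absorbs both cases $h_1(x)\gtrless h_2(x)$, so no case split on $\mathrm{sgn}(x)$ is needed here.

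Next I would use the channel geometry. On $[-D/2,D/2]$ the gain $h_1(x)=\beta_0/((x+D/2)^2+H^2)$ is strictly decreasing and $h_2(x)=\beta_0/((x-D/2)^2+H^2)$ strictly increasing, so $x_C<x_B$ forces $h_1(x_B)<h_1(x_C)$ and $h_2(x_B)>h_2(x_C)$, hence $r^B_{1,\max}<r^C_{1,\max}$ and $r^B_{2,\max}>r^C_{2,\max}$. Thus $r_2^B(0)=r^B_{2,\max}>r^C_{2,\max}=r_2^C(0)$, while as $r_1\uparrow r^B_{1,\max}$ one has $r_2^B(r_1)\downarrow0$ but $r_2^C(r_1)\downarrow r_2^C(r^B_{1,\max})>0$. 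By the intermediate value theorem the curves cross at some $(\bar r^{BC}_1,\bar r^{BC}_2)$ with $0<\bar r^{BC}_1<r^B_{1,\max}$, and since the crossing lies on the strictly decreasing curve $r_2^C$ with $\bar r^{BC}_1>0$, also $0<\bar r^{BC}_2<r^C_{2,\max}$ — these are the two quantitative assertions of the lemma.

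The crux is uniqueness, which I would obtain from the hyperbola form. Subtracting $(u+\mu_B)(v+\nu_B)=M_B$ from $(u+\mu_C)(v+\nu_C)=M_C$ cancels the $uv$ term and leaves an affine equation in $(u,v)$; it is not identically zero (otherwise the two non‑degenerate hyperbolas would coincide, forcing $r^B_{2,\max}=r^C_{2,\max}$, a contradiction), so every common point of the two boundary curves lies among the at most two intersections of a line with a non‑degenerate hyperbola. Hence $\Phi(r_1):=r_2^B(r_1)-r_2^C(r_1)$, continuous on $[0,r^B_{1,\max}]$ with $\Phi(0)>0$ and $\Phi(r^B_{1,\max})=-r_2^C(r^B_{1,\max})<0$, has at most two zeros and an odd number of sign changes; this forces exactly one zero, necessarily a transversal crossing (a tangency would be a double root of the governing quadratic, i.e.\ a zero with no sign change, which contradicts $\Phi$ changing sign overall). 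Therefore $(\bar r^{BC}_1,\bar r^{BC}_2)$ is the unique intersection, $\Phi>0$ on $[0,\bar r^{BC}_1)$ and $\Phi<0$ on $(\bar r^{BC}_1,r^B_{1,\max}]$, i.e.\ $r_2^B(r_1)>r_2^C(r_1)$ for $0\le r_1<\bar r^{BC}_1$ and $r_2^B(r_1)<r_2^C(r_1)$ for $\bar r^{BC}_1<r_1\le r^B_{1,\max}$.

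I expect the main obstacle to be precisely this intersection count. The minimum‑power/hyperbola reformulation is what makes it painless: it replaces the piecewise superposition‑coding rate expressions (which would force an awkward case split on which GU is stronger at $x_B$ and at $x_C$) by a single conic, and reduces uniqueness to ``a line meets a conic in at most two points''. A workable but heavier alternative is to verify $\Phi'(r_1)<0$ at every zero of $\Phi$ (the single‑crossing property), which requires differentiating $r_2^x$.
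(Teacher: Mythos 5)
Your proof is correct, and its key step takes a genuinely different route from the paper's. The paper proves uniqueness of the intersection by splitting into cases according to which GU is nearer to the UAV at $x_{\rm B}$ and at $x_{\rm C}$ (hence which receiver performs interference cancellation), writing the superposition-coding rate pair explicitly in each case, eliminating the powers via $p_1+p_2=\bar P$, and solving the resulting intersection equation directly — a closed form for $\bar{r}^{BC}_1$ when the decoding order is the same at both locations, and a quadratic in $w=2^{\bar{r}^{BC}_1}$ shown to have a unique root $w>1$ when the orders differ. You instead collapse all cases into the single minimum-sum-power boundary equation $\tfrac{2^{r_1}-1}{h_1}+\tfrac{2^{r_2}-1}{h_2}+\tfrac{(2^{r_1}-1)(2^{r_2}-1)}{\max\{h_1,h_2\}}=\bar P$ (which is indeed the value of the dual-MAC linear program, the active vertex being selected by $\max\{h_1,h_2\}$), recognize it as a rectangular hyperbola in $(u,v)=(2^{r_1}-1,2^{r_2}-1)$, and observe that two such hyperbolas with unit $uv$-coefficient meet along an affine line, hence in at most two points; the endpoint signs $\Phi(0)>0>\Phi(r^{B}_{1,\max})$ (which follow from the monotonicity of $h_1,h_2$ in $x$, exactly as in the paper) then rule out a tangency or a second transversal crossing and force exactly one sign-changing zero, giving the stated sign pattern. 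Your route buys a case-free argument, a conceptual reason for uniqueness (a line meets a nondegenerate conic in at most two points), and an explicit intermediate-value existence step that the paper leaves implicit; the paper's route buys closed-form expressions for the crossing point itself. Both establish the lemma.
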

\begin{proof}
Since $-\frac{D}{2}\leq x_{\rm C}<x_{\rm B}\leq \frac{D}{2}$, we must have ${r}^{B }_{1, \max}<{r}^{C}_{1, \max}$ and  ${r}^{B }_{2, \max}>{r}^{C}_{2, \max}$. Thus,  it follows that $0<\bar{r}^{BC}_1<{r}^{B }_{1, \max}$ and $0<\bar{r}^{BC}_2<{r}^{C}_{2, \max}$.

Next, we prove that the intersecting point $(\bar{r}^{BC}_1,\bar{r}^{BC}_2)$ is unique, by considering three possible cases, depending on the values of  $x_{\rm B}$  and $x_{\rm C}$, i.e., 1) $ 0 \leq x_{\rm C} < x_{\rm B}\leq \frac{D}{2}$; 2) $-\frac{D}{2}\leq x_{\rm C}<0<x_{\rm B}\leq \frac{D}{2}$; and 3) $-\frac{D}{2}\leq x_{\rm C}<x_{\rm B}\leq 0$, respectively. Since case 3) is similar to case 1), we only  analyze the first two cases in the following for brevity.

In case 1), we have the following inequality regarding the channel gains of the two GUs with the UAV locations being $x_{\rm B}$ and $x_{\rm C}$: $h^B_2>  h^C_2> h^C_1>h^B_1>0$. At both locations $x_{\rm B}$ and $x_{\rm C}$,  GU 2 will decode  GU 1's signal first and then perform interference cancellation. As such, the achievable rates of GUs 1 and 2  can be expressed as $r_1^m = \log_2 \left(1+ \frac{p_1^mh_1^m}{p_2^mh_1^m+1} \right)$ and  $r_2^m = \log_2(1+ p_2^mh_2^m)$, respectively, $m\in\{\rm B,C\}$.
%\begin{align}
%r_1 &= \log_2 \left(1+ \frac{p_1h_1}{p_2h_1+1} \right),  \label{eq:330} \\
%r_2 &= \log_2(1+ p_2h_2).\label{eq:340}
%\end{align}
By eliminating $p_1^m$ and $p_2^m$ with $p_1^m+p_2^m =\bar P$, it yields
%\begin{align}
$r_2^m= \log_2 (1+\frac{(\bar P h_1^m-2^{r_1^m}+1)h_2^m}{2^{r_1^m}h_1^m}), m\in\{\rm B,C\}$.
%\end{align}
Note that the intersecting point  $(\bar{r}^{BC}_1,\bar{r}^{BC}_2 )$ should satisfy both equalities above. Therefore, we have
{
\begin{align}\label{eq:33}
\frac{(\bar P h^B_1-2^{\bar{r}^{BC}_1}+1)h^B_2}{2^{\bar{r}^{BC}_1}h^B_1 }  =\frac{(\bar P h^C_1-2^{\bar{r}^{BC}_1}+1)h^C_2}{2^{\bar{r}^{BC}_1}h^C_1 }.
\end{align}}
From \eqref{eq:33}, we obtain $\bar{r}^{BC}_1 = \log_2\left(\bar Ph^B_1h^C_1\left(\frac{h^B_2-h^C_2}{h^B_2h^C_1-h^B_1h^C_2}\right) +1\right)$. Since $\bar Ph^B_1h^C_1\left(\frac{h^B_2-h^C_2}{h^B_2h^C_1-h^B_1h^C_2}\right) >0$, there always exists a unique solution of $\bar{r}^{BC}_1>0$.

In case 2), we have the following inequalities: $h^B_2> h^B_1>0$, $ h^C_1> h^C_2>0$, $h^B_2> h^C_2>0$, and $ h^C_1> h^B_1>0$. Therefore, when the UAV is located at $x_{\rm B}$, the achievable rates of GUs 1 and 2 are given as those in case 1). When the UAV is located at $x_{\rm C}$,   GU 1 needs to decode  GU 2's signal first and then perform interference cancellation  where  the achievable rates of GUs 1 and 2 can be written as $r_1^C = \log_2 \left(1+  p_1^Ch_1^C\right)$ and $r_2^C = \log_2\left(1+ \frac{p_2^Ch_2^C}{p_1^Ch_1^C+1}\right)$, respectively.
%\begin{align}
%r_1 &= \log_2 \left(1+  p_1h_1\right),   \\
%r_2 &= \log_2\left(1+ \frac{p_2h_2}{p_1h_1+1}\right).
%\end{align}
Then, for the intersecting point  $(\bar{r}^{BC}_1,\bar{r}^{BC}_2 )$, we have
{
\begin{align}\label{eq38}
\frac{(\bar P h^C_2+1) h^C_1}{(2^{\bar{r}^{BC}_1}-1) h^C_2+ h^C_1}-\frac{(\bar P h^B_1+1) h^B_2}{2^{\bar{r}^{BC}_1}h^B_1}+\frac{h_2^B}{h_1^B}  -1=0.
\end{align}}
Let $w\triangleq  2^{\bar{r}^{BC}_1}$. Then, \eqref{eq38} can be equivalently transformed to
\begin{small}
\begin{align}\label{eq39}
\frac{ h^C_2}{ h^C_1}\left(\frac{ h^B_2}{ h^B_1} -1\right)w^2 -\left( \bar Ph_2^C\left(1-\frac{h_2^B}{h_1^C}\right) - \frac{2h_2^Bh_2^C}{h_1^Bh_1^C} +\frac{h_2^B}{h_1^B} +\frac{h_2^C}{h_2^C} \right)w + h_2^{B}\left( \frac{h_2^{C}}{h_1^{C}}-1 \right)(\bar P +h_1^{B})=0,
\end{align}
\end{small}
\kern -1.6mm where the LHS of \eqref{eq39} is a quadratic function with respect to $w$. Based on inequalities of the channel gains, it can be shown that \eqref{eq39} always has a unique  root $w>1$, i.e., $\bar{r}^{BC}_1>0$ is unique.

Finally, by using the result that $\bar{r}^{BC}_1>0$ is unique, together with the facts that ${r}^{B }_{1, \max} < {r}^{C}_{1, \max}$ and ${r}^{B }_{2, \max} > {r}^{C}_{2, \max}$, it is evident that when $0\leq r_1 <\bar{r}^{BC}_1$, it follows that $r_2^B(r_1)>r_2^C(r_1) $; otherwise when  $\bar{r}^{BC}_1< r_1\leq  {r}^{B }_{1, \max}$, $r_2^B(r_1)<r_2^C(r_1)$ must hold. Therefore, the second part of this lemma is proved. Hence, Lemma \ref{lem5} follows.
\end{proof}

%Recall that  $\C_f(x)$ consists of all the rate-pairs that satisfy the inequalities in  \eqref{eq:fixed1}-\eqref{eq:fixed3}.

Next, we denote the upper right common tangent of convex hull of the union of $ \C_f(x_{\rm B})$ and
$\C_f(x_{\rm C})$ by $\mathcal{\ell}$ and it touches two points on the boundaries of  $\C_f(x_{\rm B})\bigcup \C_f(x_{\rm C})$, denoted by $(\bar r^B_1, \bar r^B_2)$ and $(\bar r^C_1, \bar r^C_2)$, respectively. Then we have the following lemma.
\begin{lemma}\label{lem:inequality}
It follows that $0 \leq \bar r^B_1<\bar{r}^{BC}_1<\bar r^C_1$ and $0 \leq \bar r^C_2<\bar{r}^{BC}_2<\bar r^B_2$.
\end{lemma}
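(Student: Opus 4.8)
The plan is to leverage Lemma~\ref{lem5} together with the convexity of the fixed-location regions $\C_f(x_{\rm B})$ and $\C_f(x_{\rm C})$ and the geometric characterization of the common tangent $\ell$. First I would recall the setup: $\ell$ is the upper-right common tangent of ${\rm Conv}\big(\C_f(x_{\rm B})\cup\C_f(x_{\rm C})\big)$, touching $\partial\C_f(x_{\rm B})$ at $(\bar r^B_1,\bar r^B_2)$ and $\partial\C_f(x_{\rm C})$ at $(\bar r^C_1,\bar r^C_2)$, and that $\ell$ has negative slope (since both boundary curves $r_2^B(\cdot)$, $r_2^C(\cdot)$ are strictly decreasing on their domains). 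The key structural fact from Lemma~\ref{lem5} is that the two boundary curves cross exactly once, at $(\bar r^{BC}_1,\bar r^{BC}_2)$, with $r_2^B > r_2^C$ to the left of the crossing and $r_2^B < r_2^C$ to the right; equivalently, $\partial\C_f(x_{\rm B})$ is the \emph{upper} envelope for small $r_1$ and $\partial\C_f(x_{\rm C})$ is the upper envelope for large $r_1$.

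The main argument is then a ``supporting line'' comparison. Because $\ell$ supports $\C_f(x_{\rm B})$ at $(\bar r^B_1,\bar r^B_2)$ and the whole region $\C_f(x_{\rm B})$ lies weakly below $\ell$, in particular the point of $\partial\C_f(x_{\rm B})$ with first coordinate $\bar r^{BC}_1$ lies weakly below $\ell$. I would show that the tangency point on the $\C_f(x_{\rm B})$ side must occur at an abscissa \emph{no larger} than $\bar r^{BC}_1$: if $\bar r^B_1$ were $\ge \bar r^{BC}_1$, then by Lemma~\ref{lem5} at abscissa $\bar r^B_1$ we would have $r_2^B(\bar r^B_1)\le r_2^C(\bar r^B_1)$, so the tangent line $\ell$ through $(\bar r^B_1, r_2^B(\bar r^B_1))$ would pass weakly below the point $(\bar r^B_1, r_2^C(\bar r^B_1))\in\C_f(x_{\rm C})$ --- contradicting that $\ell$ supports $\C_f(x_{\rm C})$ from above, unless equality holds throughout, which is ruled out because a single line cannot be tangent to two strictly convex curves at the same abscissa while also being their common upper tangent (the crossing is transversal). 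This forces $\bar r^B_1 < \bar r^{BC}_1$. By the symmetric argument on the $x_{\rm C}$ side --- using that $r_2^C$ dominates to the right of the crossing --- I get $\bar r^C_1 > \bar r^{BC}_1$. Combining, $0\le \bar r^B_1 < \bar r^{BC}_1 < \bar r^C_1$; the lower bound $\bar r^B_1\ge 0$ is immediate since all rates are nonnegative. Finally, since $\ell$ has negative slope and passes through both tangency points, $\bar r^B_1 < \bar r^C_1$ yields $\bar r^B_2 > \bar r^C_2$, and locating $(\bar r^{BC}_1,\bar r^{BC}_2)$ on the curve $r_2^B$ (respectively $r_2^C$) together with monotonicity of $r_2^B$ (respectively $r_2^C$) gives $\bar r^B_2 > \bar r^{BC}_2 > \bar r^C_2$, with $\bar r^C_2\ge 0$ again by nonnegativity.

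The step I expect to be the main obstacle is making the ``weakly below'' contradiction fully rigorous at the boundary case, i.e., excluding the degenerate possibility that the common tangent $\ell$ touches $\C_f(x_{\rm B})$ exactly at the crossing abscissa $\bar r^{BC}_1$ (which would give $\bar r^B_1 = \bar r^{BC}_1$ rather than strict inequality). Handling this cleanly requires invoking strict convexity of $\C_f(x_m)$ --- which follows from strict concavity of the single-link and sum-rate log functions in the power variables, so that the boundary curve $r_2^m(r_1)$ has no line segments --- and the transversality of the crossing established in the proof of Lemma~\ref{lem5} (the two curves have different slopes at $(\bar r^{BC}_1,\bar r^{BC}_2)$, since otherwise they would coincide on an interval, contradicting uniqueness of the intersection). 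With strict convexity in hand, a line through a point of one curve that is also an upper supporting line of the other curve cannot be tangent to the first curve at that same point unless the point is the crossing and the slopes match --- which transversality forbids --- so the inequalities are strict. The remainder is routine bookkeeping with slopes and monotonicity.
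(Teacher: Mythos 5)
Your overall strategy---a contradiction argument that combines the unique-crossing ordering of Lemma~\ref{lem5} (the $\C_f(x_{\rm B})$ boundary dominates to the left of $(\bar r^{BC}_1,\bar r^{BC}_2)$ and the $\C_f(x_{\rm C})$ boundary dominates to the right) with the supporting-line property of the common tangent $\ell$, followed by monotonicity of $r_2^m(\cdot)$ to transfer the first chain of inequalities to the second---is essentially the proof the paper gives. The paper phrases the contradiction as the tangency point of one region landing in the \emph{interior} of the other region, whereas you phrase it as a point of one region lying strictly \emph{above} $\ell$; these are the same contradiction viewed from two sides, so the main inequalities are established identically.

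There is, however, a concrete flaw in your handling of the boundary case $\bar r^B_1=\bar r^{BC}_1$ (equivalently $\bar r^C_1=\bar r^{BC}_1$), which you correctly identify as the delicate step. You propose to dispose of it by asserting that $\C_f(x_m)$ is always strictly convex because the rate expressions are strictly concave in the power variables. This is false when $x_m=0$: there the two channel gains coincide, $\C_f(0)$ is an equilateral triangle, and its Pareto boundary is a \emph{line segment} (time-sharing between the two single-user corner points lies on the boundary, and TDMA is optimal). So "a line cannot touch a strictly convex boundary at two abscissas" is not available in that case, and the degenerate possibility that $\ell$ contains a whole boundary segment of one region must be excluded differently. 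The paper does exactly this: if $x_{\rm B}=0$ the boundary segment of $\C_f(x_{\rm B})$ lies on $\ell$, and then, because $x_{\rm B}\neq x_{\rm C}$, Lemma~\ref{lem5} forces the two boundary curves to cross transversally, so $\ell$ must cut into $\C_f(x_{\rm C})$, contradicting that $\ell$ is tangent to $\C_f(x_{\rm C})$. Your transversality observation is precisely the right ingredient here, but it has to be deployed \emph{in place of} strict convexity for the $x_m=0$ case, not alongside a blanket strict-convexity claim; with that substitution the argument closes.
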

\begin{proof}
Since both $r_2^B(r_1)$ and  $r_2^C(r_1)$  are monotonically decreasing functions with respect to $r_1$, $0 \leq \bar r^C_2<\bar{r}^{BC}_2<\bar r^B_2$ holds \emph{if} and \emph{only if} $0 \leq \bar r^B_1<\bar{r}^{BC}_1<\bar r^C_1$ holds. Thus, it only remains to show $0 \leq \bar r^B_1<\bar{r}^{BC}_1<\bar r^C_1$. In the following, we focus on proving   $\bar{r}^{BC}_1<\bar r^C_1$  by contradiction, while  $\bar r^B_1<\bar{r}^{BC}_1$ can be similarly verified and thus is omitted.
%Note that a tangent cannot
% This is proved by contradiction. We discuss all the possible two cases: 1)  $\bar r^C_1\leq \bar{r}^{BC}_1$; 2) $\bar{r}^{BC}_1\leq \bar r^B_1$. Since 1) and 2) are symmetric, we only discuss case 1) for simplicity.

First, suppose that $\bar r^C_1< \bar{r}^{BC}_1$. From Lemma \ref{lem5}, it follows that $r_2^B(\bar r^C_1)>r_2^C(\bar r^C_1) $ and hence  $ (\bar r^C_1, \bar r^C_2) \prec(\bar r^C_1, \bar r^B_2)$. Since $(\bar r^C_1, \bar r^B_2) \in  \C_f(x_{\rm B})$,   we must have $(\bar r^C_1, \bar r^C_2) \in  \C_f(x_{\rm B})$ and $(\bar r^C_1, \bar r^C_2)$ is not on the boundary of $\C_f(x_{\rm B})$, i.e., $(\bar r^C_1, \bar r^C_2) \in  {\rm {int}}( \C_f(x_{\rm B}))$. This contradicts that line $\ell$  is the common tangent of $\C_f(x_{\rm B})$ and $\C_f(x_{\rm C})$ because it crosses an interior point of $\C_f(x_{\rm B})$.

Next, suppose that $\bar r^C_1= \bar{r}^{BC}_1$. Then we have $(\bar r^C_1, \bar r^C_2) =(\bar{r}^{BC}_1,\bar{r}^{BC}_2 )\in  \C_f(x_{\rm B})$ and $(\bar r^C_1, \bar r^C_2)$ is also on the boundary of $\C_f(x_{\rm B})$ as $(\bar r^B_1, \bar r^B_2)$, i.e., $(\bar r^C_1, \bar r^C_2)\in \partial \C_f(x_{\rm B})$. Note that if $x_{\rm B}\neq 0$, then $\C_f(x_{\rm B})$ is a strictly convex set, in which case line $\ell$ cannot be the common tangent of $\C_f(x_{\rm B})$ because it crosses two points $(\bar r^C_1, \bar r^C_2)$ and $(\bar r^B_1, \bar r^B_2)$ of a strictly convex boundary. By contrast, if $x_{\rm B}=0$, then $\C_f(x_{\rm B})$ is a convex but not strictly convex set, and more specifically, it is an  equilateral triangle region, with its boundary being a line segment that lies on line $\ell$. Since $x_{\rm B}\neq x_{\rm C}$, it follows from Lemma \ref{lem5} that  line $\ell$ intersects with the boundary of $\C_f(x_{\rm C})$, which thus contradicts that line $\ell$ is the tangent of $\C_f(x_{\rm C})$.

By combining the two cases, the proof is completed.
%We discuss all other possible four cases: 1)  $0 \leq \bar r^A_1< \bar{r}^{AC}_1$ and $0 \leq \bar r^C_1<\bar{r}^{AC}_1$; 2)  $ \bar{r}^{AC}_1 \leq \bar r^A_1$ and $\bar{r}^{BC}_1 \leq \bar r^C_1$; 3)  $0 \leq\bar r^C_1 <\bar{r}^{BC}_1<\bar r^A_1 $. Here, we only analyze case 1) for simplicity and 2) and 3) can be analyzed similarly similarly as follows.
%1) If  $0 \leq \bar r^A_1< \bar{r}^{AC}_1$ and $0 \leq \bar r^C_1<\bar{r}^{AC}_1$, from Lemma \ref{lem5}, it follows that $r_2^A(\bar r^C_1)>r_2^C(\bar r^C_1) $ and hence  $ (\bar r^C_1, \bar r^C_2) \prec(\bar r^C_1, \bar r^A_2)$. Then we have $(\bar r^C_1, \bar r^C_2) \in  \C_f(x_{\rm F})$ and $(\bar r^C_1, \bar r^C_2)$ is not on the boundary of $\C_f(x_{\rm F})$. It means that line $\ell$ crosses the boundary of  $\C_f(x_{\rm F})$, which contradicts that it is a common tangent of $ \C_f(x_{\rm F})$ and
%$\C_f(x_{\rm I})$.
 %it is easy to show that line $\ell$ will cross the boundaries of $\C_f(x_{\rm F})$ and/or $\C_f(x_{\rm I})$ since at least one of the two regions is strictly convex.
\end{proof}

%\subsection{Proof of Lemma \ref{lem_convhull}}
Now, with Lemmas \ref{lem5} and \ref{lem:inequality} obtained, we are ready to prove Lemma 5.  We first prove the ``only if" part. By the definition of a common tangent (supporting hyperplane theorem \cite{Boyd}), all the achievable rate-pairs in the convex hull should lie in the same half (lower left) plane separated by the tangent, i.e., ${\rm{Conv}}\big(\C_f(x_{\rm I})\bigcup \C_f(x_{\rm F})\big) \subseteq \C_{\rm IF}$.
 Thus, $\C_f(x_A) \subseteq {\rm{Conv}}\big(\C_f(x_{\rm I})\bigcup \C_f(x_{\rm F})\big)\Rightarrow \C_f(x_{A}) \subseteq \C_{\rm IF}$.

Next, we prove the ``if" part, i.e.,  $\C_f(x_A) \subseteq \C_{\rm IF}  \Rightarrow \C_f(x_A) \subseteq {\rm{Conv}}\big(\C_f(x_{\rm I})\bigcup \C_f(x_{\rm F})\big)$. This is proved by contradiction. Suppose that $\C_f(x_A) \subseteq \C_{\rm IF}$ but $\C_f(x_A) \nsubseteq  {\rm{Conv}}\big(\C_f(x_{\rm I})\bigcup \C_f(x_{\rm F})\big)$. It means that there exists at least a rate-pair that satisfies $(\bar r^A_1, \bar r^A_2)\in \C_f(x_A)$ but $(\bar r^A_1, \bar r^A_2) \in  \C_{\rm IF}  \backslash  {\rm{Conv}}\big(\C_f(x_{\rm F})\bigcup \C_f(x_{\rm I})\big)$. Note that $(\bar r^{\I}_1, \bar r^{\I}_2)$ and $(\bar r^{\F}_1, \bar r^{\F}_2) \in {\rm{Conv}}\big(\C_f(x_{\rm I})\bigcup \C_f(x_{\rm F})\big)$. From Lemma \ref{lem:inequality}, we have $0 \leq \bar r^{\F}_1<\bar{r}^{\rm IF}_1<\bar r^I_1$.
Thus, the rate-pairs that lie on and below the line segment between $(\bar r^{\rm F}_1, \bar r^{\rm F}_2)$ and $(\bar r^{\rm I}_1, \bar r^{\rm I}_2)$ also lie within  ${\rm{Conv}}\big(\C_f(x_{\rm I})\bigcup \C_f(x_{\rm F})\big)$, i.e., $(r_1, r_2)\in {\rm{Conv}}\big(\C_f(x_{\rm I})\bigcup \C_f(x_{\rm F})\big)$, where $\bar r^{\rm F}_1 \leq r_1 \leq \bar r^{\rm I}_1$ and $(r_1, r_2)\in \C_{\rm IF}$. Then, the rate-pair $(\bar r^A_1, \bar r^A_2)$ must satisfy one of the following two cases: 1) $0<\bar r^A_1< \bar r^{\rm F}_1$, and 2) $\bar r^{\rm I}_1<\bar r^{A}_1$.
  If $0<\bar r^A_1< \bar r^{\rm F}_1$, since $(\bar r^A_1, \bar r^A_2)\in \C_f(x_A)$ and $(\bar r^A_1, \bar r^A_2)\notin \C_f(x_{\rm F})$, it follows that $(\bar r^A_1, \bar r^A_2)\succ (\bar r^A_1, r^{\rm F}_2(\bar r^A_1))$. This implies that the intersecting point, denoted by $(r_1^{FA}, r_2^{FA})$, satisfies $r_1^{FA}<\bar r^A_1$. From Lemma \ref{lem5},  we have $(\bar r^{\rm F}_1, \bar r^{\rm F}_2)\prec (\bar r^{\rm F}_1, r^{A}_2(\bar r^{\rm F}_1))$ due to $ r_1^{FA}< \bar r^{A}_1< \bar r^{\rm F}_1$. It thus suggests that there exists a rate-pair $(\bar r^{\rm F}_1, r^{A}_2(\bar r^{\rm F}_1))\in  \C_f(x_{\rm F})$ but $(\bar r^{\rm F}_1, r^A_2(\bar r^{\rm F}_1))\notin  \C_{\rm IF}$, which contradicts the assumption $\C_f(x_{\rm F}) \subseteq \C_{\rm IF}$. The case of $\bar r^{\rm I}_1<\bar r^A_1$ can be analyzed similarly, where the result also contradicts the assumption $\C_f(x_{\rm F}) \subseteq \C_{\rm IF}$. This thus completes the proof of Lemma \ref{lem_convhull}.

\vspace{-0.5cm}
\bibliographystyle{IEEEtran}
\bibliography{IEEEabrv,mybib}

% Generated by IEEEtran.bst, version: 1.13 (2008/09/30)
\begin{thebibliography}{10}
\providecommand{\url}[1]{#1}
\csname url@samestyle\endcsname
\providecommand{\newblock}{\relax}
\providecommand{\bibinfo}[2]{#2}
\providecommand{\BIBentrySTDinterwordspacing}{\spaceskip=0pt\relax}
\providecommand{\BIBentryALTinterwordstretchfactor}{4}
\providecommand{\BIBentryALTinterwordspacing}{\spaceskip=\fontdimen2\font plus
\BIBentryALTinterwordstretchfactor\fontdimen3\font minus
  \fontdimen4\font\relax}
\providecommand{\BIBforeignlanguage}[2]{{%
\expandafter\ifx\csname l@#1\endcsname\relax
\typeout{** WARNING: IEEEtran.bst: No hyphenation pattern has been}%
\typeout{** loaded for the language `#1'. Using the pattern for}%
\typeout{** the default language instead.}%
\else
\language=\csname l@#1\endcsname
\fi
#2}}
\providecommand{\BIBdecl}{\relax}
\BIBdecl

\bibitem{USprogramm_UAV}
``{UAS} integration pilot program resources.'' [Online] Available:
  \url{https://www.faa.gov/news/speeches/news_story.cfm?newsId=22294}.

\bibitem{whitehouse_UAV}
``Presidential memorandum for the secretary of transportation.'' [Online]
  Available:
  \url{https://www.whitehouse.gov/the-press-office/2017/10/25/presidential-memorandum-secretary-transportation}.

\bibitem{drone_num}
``Small unmanned aerial vehicles for aerial imaging, recreation, and aerial
  games: Global market analysis and forecasts.'' [Online] Available:
  \url{https://www.tractica.com/research/consumer-drones/}.

\bibitem{zeng2016wireless}
Y.~Zeng, R.~Zhang, and T.~J. Lim, ``Wireless communications with unmanned
  aerial vehicles: Opportunities and challenges,'' \emph{{IEEE} Commun. Mag.},
  vol.~54, no.~5, pp. 36--42, May 2016.

\bibitem{bor2016new}
I.~Bor-Yaliniz and H.~Yanikomeroglu, ``The new frontier in {RAN} heterogeneity:
  Multi-tier drone-cells,'' \emph{{IEEE} Commun. Mag.}, vol.~54, no.~11, pp.
  48--55, Nov. 2016.

\bibitem{3gpp_UAV}
``Paving the path to {5G}: {O}ptimizing commercial {LTE} networks for drone
  communication.'' [Online] Available:
  \url{https://www.qualcomm.com/news/onq/2016/09/06/paving-path-5g-optimizing-commercial-lte-networks-drone-communication}.

\bibitem{qualcom_UAVreport}
``Cellular drone communication: {LTE} drone trial report.'' [Online] Available:
  \url{https://www.qualcomm.com/documents/lte-unmanned-aircraft-systems-trial-report}.

\bibitem{zhang2017spectrum}
C.~Zhang and W.~Zhang, ``Spectrum sharing for drone networks,'' \emph{{IEEE} J.
  Sel. Areas Commun.}, vol.~35, no.~1, pp. 136--144, Jan. 2017.

\bibitem{xu2017uav2}
J.~Xu, Y.~Zeng, and R.~Zhang, ``{UAV}-enabled wireless power transfer:
  Trajectory design and energy optimization,'' [Online] Available:
  \url{https://arxiv.org/abs/1709.07590}.

\bibitem{yang2017proactive}
P.~Yang, X.~Cao, C.~Yin, Z.~Xiao, X.~Xi, and D.~Wu, ``Proactive drone-cell
  deployment: Overload relief for a cellular network under flash crowd
  traffic,'' \emph{{IEEE} Trans. Intell. Transp. Syst.}, vol.~18, no.~10, pp.
  2877--2892, Oct. 2017.

\bibitem{yang2017energy}
D.~Yang, Q.~Wu, Y.~Zeng, and R.~Zhang, ``Energy trade-off in ground-to-{UAV}
  communication via trajectory design,'' \emph{IEEE Trans. Veh. Technol.},
  2017, [Online] Available: \url{https://arxiv.org/abs/1709.02975}.

\bibitem{zhang2016fundamental}
S.~Zhang, Q.~Wu, S.~Xu, and G.~Li, ``Fundamental green tradeoffs: Progresses,
  challenges, and impacts on {5G} networks,'' \emph{{IEEE} Commun. Surveys
  Tuts.}, vol.~19, no.~1, pp. 33--56, First Quarter 2017.

\bibitem{fb_UAV}
``Facebook takes flight.'' [Online] Available:
  \url{http://www.theverge.com/a/mark-zuckerberg-future-of-facebook/aquila-drone-internet}.

\bibitem{absolute}
``Absolute (aerial base stations with opportunistic links for unexpected and
  temporary events).'' [Online] Available:
  \url{http://www.absolute-project.eu}.

\bibitem{al2014optimal}
A.~Al-Hourani, S.~Kandeepan, and S.~Lardner, ``Optimal {LAP} altitude for
  maximum coverage,'' \emph{IEEE Wireless Commun. Lett.}, vol.~3, no.~6, pp.
  569--572, Dec. 2014.

\bibitem{lyu2016placement}
J.~Lyu, Y.~Zeng, R.~Zhang, and T.~J. Lim, ``Placement optimization of
  {UAV}-mounted mobile base stations,'' \emph{{IEEE} Commun. Lett.}, vol.~21,
  no.~3, pp. 604--607, Mar. 2017.

\bibitem{bor2016efficient}
R.~I. Bor-Yaliniz, A.~El-Keyi, and H.~Yanikomeroglu, ``Efficient 3-{D}
  placement of an aerial base station in next generation cellular networks,''
  in \emph{Proc. IEEE ICC}, May 2016.

\bibitem{mozaffari2016efficient}
M.~Mozaffari, W.~Saad, M.~Bennis, and M.~Debbah, ``Efficient deployment of
  multiple unmanned aerial vehicles for optimal wireless coverage,''
  \emph{{IEEE} Commun. Lett.}, vol.~20, no.~8, pp. 1647--1650, Aug. 2016.

\bibitem{alzenad20173d}
M.~Alzenad, A.~El-Keyi, F.~Lagum, and H.~Yanikomeroglu, ``{3D} placement of an
  unmanned aerial vehicle base station ({UAV}-{BS}) for energy-efficient
  maximal coverage,'' \emph{IEEE Wireless Commun. Lett.}, vol.~6, no.~4, pp.
  434--437, Aug. 2017.

\bibitem{wu2016overview}
Q.~Wu, G.~Y. Li, W.~Chen, D.~W.~K. Ng, and R.~Schober, ``An overview of
  sustainable green {5G} networks,'' \emph{{IEEE} Wireless Commun. Mag.},
  vol.~24, no.~4, pp. 72--80, Aug. 2017.

\bibitem{fotouhi2017dronecells}
A.~Fotouhi, M.~Ding, and M.~Hassan, ``Dronecells: Improving {5G} spectral
  efficiency using drone-mounted flying base stations,'' \emph{IEEE Trans.
  Mobile Comput.}, [Online] Available: \url{https://arxiv.org/abs/1707.02041}.

\bibitem{zeng2016throughput}
Y.~Zeng, R.~Zhang, and T.~J. Lim, ``Throughput maximization for {UAV}-enabled
  mobile relaying systems,'' \emph{{IEEE} Trans. Commun.}, vol.~64, no.~12, pp.
  4983--4996, Dec. 2016.

\bibitem{li2016energy}
K.~Li, W.~Ni, X.~Wang, R.~P. Liu, S.~S. Kanhere, and S.~Jha, ``Energy-efficient
  cooperative relaying for unmanned aerial vehicles,'' \emph{{IEEE} Trans.
  Mobile Comput.}, vol.~15, no.~6, pp. 1377--1386, Jun. 2016.

\bibitem{wu2017joint}
Q.~Wu, Y.~Zeng, and R.~Zhang, ``Joint trajectory and communication design for
  {UAV}-enabled multiple access,'' in \emph{Proc. IEEE GLOBECOM}, Dec. 2017,
  [Online] Available: \url{https://arxiv.org/abs/1704.01765}.

\bibitem{lyu2016cyclical}
J.~Lyu, Y.~Zeng, and R.~Zhang, ``Cyclical multiple access in {UAV}-aided
  communications: A throughput-delay tradeoff,'' \emph{IEEE Wireless Commun.
  Lett.}, vol.~5, no.~6, pp. 600--603, Dec. 2016.

\bibitem{JR:wu2017_ofdm}
Q.~Wu and R.~Zhang, ``Common throughput maximization in {UAV}-enabled multiuser
  {OFDMA} systems with delay consideration,'' \emph{submitted to IEEE Trans.
  Commun.}, 2017.

\bibitem{JR:wu2017joint}
Q.~Wu, Y.~Zeng, and R.~Zhang, ``Joint trajectory and communication design for
  multi-{UAV} enabled wireless networks,'' \emph{{IEEE} Trans. Wireless
  Commun.}, 2017, to appear, [Online] Available:
  \url{https://arxiv.org/abs/1705.02723}.

\bibitem{guangchi2016_UAV}
G.~Zhang, Q.~Wu, M.~Cui, and R.~Zhang, ``Securing {UAV} communications via
  trajectory optimization,'' in \emph{Proc. IEEE GLOBECOM}, Dec. 2017, [Online]
  Available: \url{https://arxiv.org/abs/1710.04389}.

\bibitem{yin2017offline}
C.~Yin, Z.~Xiao, X.~Cao, X.~Xi, P.~Yang, and D.~Wu, ``Offline and online
  search: {UAV} multi-objective path planning under dynamic urban
  environment,'' \emph{IEEE Internet of Things Journal}, vol.~PP, no.~99, pp.
  1--1, Jun. 2017.

\bibitem{mozaffari2017mobile}
M.~Mozaffari, W.~Saad, M.~Bennis, and M.~Debbah, ``Mobile unmanned aerial
  vehicles {UAVs} for energy-efficient internet of things communications,''
  \emph{{IEEE} Trans. Wireless Commun.}, vol.~16, no.~11, pp. 7574--7589, Nov.
  2016.

\bibitem{jeong2016mobile}
S.~Jeong, O.~Simeone, and J.~Kang, ``Mobile edge computing via a {UAV}-mounted
  cloudlet: Optimal bit allocation and path planning,'' \emph{IEEE Trans. Veh.
  Technol}, 2017, [Online] Available: \url{https://arxiv.org/abs/1609.05362}.

\bibitem{mozaffari2016unmanned}
M.~Mozaffari, W.~Saad, M.~Bennis, and M.~Debbah, ``Unmanned aerial vehicle with
  underlaid device-to-device communications: Performance and tradeoffs,''
  \emph{{IEEE} Trans. Wireless Commun.}, vol.~15, no.~6, pp. 3949--3963, Jun.
  2016.

\bibitem{wu2017ofdma}
Q.~Wu and R.~Zhang, ``Delay-constrained throughput maximization in
  {UAV}-enabled {OFDM} systems,'' in \emph{Proc. IEEE APCC}, 2017.

\bibitem{tse2005fundamentals}
D.~Tse and P.~Viswanath, \emph{Fundamentals of wireless communication}.\hskip
  1em plus 0.5em minus 0.4em\relax Cambridge university press, 2005.

\bibitem{zhang2010cooperative}
R.~Zhang and S.~Cui, ``Cooperative interference management with {MISO}
  beamforming,'' \emph{{IEEE} Trans. Signal Process.}, vol.~58, no.~10, pp.
  5450--5458, Oct. 2010.

\bibitem{zhang2017cellular}
S.~Zhang, Y.~Zeng, and R.~Zhang, ``Cellular-enabled {UAV} communication:
  Trajectory optimization under connectivity constraint,'' \emph{IEEE Proc.
  ICC}, 2017, [Online] Available: \url{https://arxiv.org/abs/1710.11619}.

\bibitem{tse1998multiaccess}
D.~N.~C. Tse and S.~V. Hanly, ``Multiaccess fading channels-{P}art {I}:
  {P}olymatroid structure, optimal resource allocation and throughput
  capacities,'' \emph{{IEEE} Trans. Inf. Theory}, vol.~44, no.~7, pp.
  2796--2815, Nov. 1998.

\bibitem{li2001capacity}
L.~Li and A.~J. Goldsmith, ``Capacity and optimal resource allocation for
  fading broadcast channels-{P}art {I}: {E}rgodic capacity,'' \emph{{IEEE}
  Trans. Inf. Theory}, vol.~47, no.~3, pp. 1083--1102, Mar. 2001.

\bibitem{jindal2004duality}
N.~Jindal, S.~Vishwanath, and A.~Goldsmith, ``On the duality of gaussian
  multiple-access and broadcast channels,'' \emph{{IEEE} Trans. Inf. Theory},
  vol.~50, no.~5, pp. 768--783, May 2004.

\bibitem{Boyd}
S.~Boyd and L.~Vandenberghe, \emph{Convex Optimization}.\hskip 1em plus 0.5em
  minus 0.4em\relax Cambridge University Press, 2004.

\bibitem{zeng20173d}
L.~Zeng, X.~Cheng, C.-X. Wang, and X.~Yin, ``A {3D} geometry-based stochastic
  channel model for {UAV}-{MIMO} channels,'' in \emph{Proc. IEEE WCNC}, 2017,
  pp. 1--5.

\bibitem{yu2006dual}
W.~Yu and R.~Lui, ``Dual methods for nonconvex spectrum optimization of
  multicarrier systems,'' \emph{{IEEE} Trans. Commun.}, vol.~54, no.~7, pp.
  1310--1322, Jul. 2006.

\end{thebibliography}

\end{document}